\begin{document}


￼
￼
\title{Specification and Reactive Synthesis of Robust Controllers}

\author{{Paritosh K. Pandya$^1$} \and {Amol Wakankar$^2$}}
 
\institute{
Tata Institute of Fundamental Research, 
Mumbai 400005, India.\\
\email{pandya@tifr.res.in}
\and
Homi Bhabha National Institute, 
Mumbai 400094, India.\\
\email{amolk@barc.gov.in}
}
\maketitle


\newtheorem{fact}[theorem]{Fact}
\newtheorem{obs}{Observation}[theorem]
\newtheorem{sublemma}[obs]{Lemma}
\newtheorem{algorithm} {Algorithm}

\newcommand{\seq}[1]{\langle #1 \rangle}
\newcommand{\Nat}{\mathbb N}
\newcommand{\Real}{\mathbb R}
\newcommand{\Bb}{\mathbb B}
\newcommand{\Ff}{\mathcal F}
\newcommand{\Vv}{\mathcal V}
\newcommand{\px}{\mathbf{x}}
\newcommand{\py}{\mathbf{y}}
\newcommand{\Int}{\mathbb{Z}}
\newcommand{\set}[1]{\left\{ #1 \right\}}
\newcommand{\Set}[1]{\big\{ #1 \big\}}
\newcommand{\subst}[2]{#1/#2} 
\let\longimplies=\implies
\let\longiff=\iff
\renewcommand{\iff}{\Leftrightarrow}

\newcommand{\fif}{\mbox{\bf iff}}

\newcommand{\comment}[1]{\tcp{#1}}
\newcommand{\lcomment}[1]{\tcp{#1}}
\newcommand{\mcomment}[1]{}

\newcommand{\commentsp}[1]{{\color{black!40}  #1}}

\newcommand{\sem}[1]{ [ \! [ {#1}  ]  \! ]} 
\def\rmdef{\stackrel{\mbox{\rm {\tiny def}}}{=}} 

\newcommand{\Up}[1]{{\mathsf{Up}(#1)}}
\newcommand{\Ups}[1]{{\mathrm{Up}^+(#1)}}
\newcommand{\Sline}{\mathrm{Skyline}}
\newcommand{\Sup}{\mathsf{Supp}}
\newcommand{\Nbd}{\mathsf{Nbd}}
\newcommand{\CONF}{\textsc{Conf}}

\newcommand{\tab}{}

\newcommand{\cb}[2]{\ensuremath{{\mathtt{Cb{#1}}[{#2}]}}}
\newcommand{\cba}[2]{\ensuremath{{\mathtt{N{#1}}}[{#2}]}}
\newcommand{\cbr}[2]{\ensuremath{{\mathtt{r{#1}}}[{#2}]}}

\newcommand{\bad}[1]{{\mathrm{bad}_{#1}}}
\newcommand{\false}{\textsf{false}}
\newcommand{\mono}{\textsc{MonoSkolem}}
\newcommand{\cegar}{\textsc{CegarSkolem}}
\newcommand{\bloqqer}{\textsf{Bloqqer}}
\newcommand{\qrat}{\textsf{qrat-trim}}
\newcommand{\type}{\textsc{TYPE-}}
\newcommand{\dontprintsemicolon}{} 
\newcommand{\printsemicolon}{} 

\newcommand {\buchi} { B\"{u}chi }
\newcommand {\winf} { $\Sigma^\omega$ }
\newcommand {\wfin} { $\Sigma^*$ }
\newcommand {\transition} {$\rightarrow$}


\newcommand{\0}{\ensuremath{\textsf 0}}
\newcommand{\1}{\ensuremath{\textsf 1}}
\newcommand{\2}{\ensuremath{\textsf 2}}
\newcommand{\3}{\ensuremath{\textsf 3}}
\newcommand{\x}{\ensuremath{\textsf x}}

\newcommand{\U}{\mathcal{U}~}
\newcommand{\G}{\mathcal{G}~}
\newcommand{\F}{\mathcal{F}~}
\newcommand{\X}{\mathcal{X}~}

\newcommand{\A}{\ensuremath{\mathcal{A}}\xspace}
\newcommand{\B}{\ensuremath{\mathcal{B}}~}
\newcommand{\D}{\ensuremath{\mathcal{D}}\xspace}
\newcommand{\M}{\ensuremath{\mathcal{M}}~}
\newcommand{\W}{\ensuremath{\mathcal{W}} }
\newcommand{\MS}{\ensuremath{\mathcal{S}}~}
\newcommand{\R}{\ensuremath{\mathcal{R}}\xspace}
\newcommand{\Shield}{\ensuremath{\mathcal{S}}\xspace}
\newcommand{\wn}[1] {\textsf{#1}~}
\newcommand{\wnarg}[1] {\textsf{#1}}

\newcommand{\dcsynth}{DCSynth\xspace}
\newcommand{\Areg}{\ensuremath{\mathcal{A}_{reg}}~}
\newcommand{\bigo}[1] {\ensuremath{O(#1)~}}
\newcommand{\focc}[1] {\ensuremath{\Xi(#1) }}

\newcommand{\init} {\ensuremath{\mbox{\textbf{init}}(D_1:\Theta_1~/~ D_2:\Theta_2)}}
\newcommand{\triggers} {\ensuremath{\mbox{\textbf{triggers}}(D_1:\Theta_1\leadsto D_2:\Theta_2/D_3:\Theta_3)}}
\newcommand{\imp} {\ensuremath{\mbox{\textbf{implies}}(D_1:\Theta_1\leadsto D_2:\Theta_2)}}
\newcommand{\follows} {\ensuremath{\mbox{\textbf{follows}}(D_1:\Theta_1\leadsto D_2:\Theta_2/D_3:\Theta_3)}}
\newcommand{\pref}[1] {\ensuremath{\mbox{\textbf{pref}}(#1)~}}
\newcommand{\ppref}[1] {\ensuremath{\mbox{\textbf{ppref}}(#1)~}}
\newcommand{\anti}[1] {\ensuremath{\mbox{\textbf{anti}}(#1)}}
\newcommand{\spec}{\ensuremath{spec~}}

\newcommand{\lags}[1] {\ensuremath{\textsf{lags(#1)}~}}
\newcommand{\tracks}[1] {\ensuremath{\textsf{tracks(#1)}~}}
\newcommand{\latency}[1] {\ensuremath{\textsf{latency(#1)}~}}
\newcommand{\deadtime}[1] {\ensuremath{\textsf{deadtime(#1)}~}}

\newcommand{\wf} {\ensuremath{\textsf{Wf~}}}
\newcommand{\modelsv} {\ensuremath{\models_\nu}}

\newcommand {\Or} {\ensuremath{\vee}}
\newcommand {\union} {\ensuremath{\cup}}
\newcommand {\intersection} {\ensuremath{\cap}}

\renewcommand {\iff} {\ensuremath{\Leftrightarrow}}

\newcommand {\qddc} {QDDC\xspace}
\newcommand {\qddcn} {QDDC+N}
\newcommand {\qddcnl} {QDDC+NL}
\newcommand {\lhrs} {LHRS}

\newcommand {\ltl} {LTL}
\newcommand {\itl} {ITL}
\newcommand {\pltl} {PLTL}
\newcommand {\nnltl} {N\ensuremath{\tilde{\mathrm{N}}}LTL}

\newcommand {\ang}[1] {\ensuremath{\langle#1\rangle}}
\newcommand {\sq}[1] {\ensuremath{[#1]}}
\newcommand {\dsq}[1] {\ensuremath{[[#1]]}}
\newcommand {\dcurly}[1] {\ensuremath{\{\{#1\}\}}}
\newcommand {\defeq} {\ensuremath{\stackrel{\mathrm{def}}{\equiv}}}

\newcommand {\len}[1] {\ensuremath{len(#1)}}
\newcommand {\lenminus}[1] {\ensuremath{len^-(#1)}}
\newcommand {\intv}[1] {\ensuremath{Intv(#1)}}
\newcommand {\slen} {\ensuremath{slen} }
\newcommand {\scount} {\ensuremath{scount} }
\newcommand {\sdur} {\ensuremath{sdur} }
\newcommand {\pt} {\ensuremath{pt}~}
\newcommand {\ext} {\ensuremath{ext}~}
\newcommand {\true} {\ensuremath{true}~}

\newcommand {\dom}[1] {\ensuremath{dom(#1)}}
\newcommand {\doml}[1] {\ensuremath{dom\lambda(#1)}}

\newcommand {\nat} {\ensuremath{\mathbb{N}}}
\newcommand {\expected} {\ensuremath{\mathbb{E}}}

\newcommand {\eqv}[1] {\ensuremath{[[#1]]}}

\newcommand{\oomit}[1]{}
\newcommand{\nom} {\ensuremath{nom}}
\newcommand{\chop}{\^}

\newcommand{\df}{=}

\renewcommand*{\thefootnote}{\arabic{footnote}}

\newcommand{\DCSYNTH}{DCSynth\xspace}
\newcommand{\invariant}{\mbox{\bf inv\/}}
\newcommand{\MPNC}{MPS\xspace}
\newcommand{\GODSC}{MPHOS\xspace}
\newcommand{\detMPNC}{detMPS\xspace}
\newcommand{\detGODSC}{detMPHOS\xspace}
\newcommand{\TYPEZERO}{TYPE0\xspace}
\newcommand{\TYPEONE}{MPS\xspace}
\newcommand{\TYPETWO}{TYPE2\xspace}
\newcommand{\TYPETHREE}{MPHOS\xspace}
\newcommand{\Rb}{Rb}
\newcommand{\RbSpec}{\ensuremath{Rb_{Spec}}}

\newcommand{\st}{.~}

\begin{abstract}
This paper investigates the synthesis of robust controllers from
logical specification of regular properties given in an interval temporal logic \qddc. Our specification encompasses both \emph{hard robustness} and 
\emph{soft robustness}. Here, {\em hard robustness} guarantees invariance of commitment under user specified \emph{relaxed (weakened) assumptions}. A systematic framework for logically specifying the assumption weakening by means of a formula, called Robustness criterion, is presented. The {\em soft robustness} pertains to the ability of the controller to maintain the commitment for as many inputs as possible, irrespective of any assumption.
We present a uniform method for the synthesis of a robust controller which guarantees the  specified hard robustness and it optimizes the specified soft robustness. The method is implemented using a tool \DCSYNTH, which provides soft requirement optimized controller synthesis. Through the case study of a synchronous bus arbiter, we experimentally show the impact of hard robustness criteria as well as soft robustness on the ability of the synthesized controllers to meet the commitment ``as much as possible''. Both, the worst-case and the expected case behaviors are  analyzed.
\end{abstract}

\oomit{begin{abstract}
This paper proposes an interval temporal logic based framework to encode and synthesize robust controllers from regular specification given in logic \qddc.
A controller specification consists of a pair of \qddc formulas $(D_A,D_C)$ giving the regular \emph{assumption} and \emph{commitment}, respectively. Standard correctness  mandates
that $D_C$ should hold invariantly provided $D_A$ holds invariantly.
A  relaxed assumption $Rb(D_A)$ specifies a weaker condition  than the invariance of $D_A$ in past, and  a robust controller should satisfy $D_C$ whenever $Rb(D_A)$ holds. We term this as {\em hard robustness}. 
We formulate various robustness criteria (spanning existing and new notions) in logic \qddc, and  show their effect in improving hard robustness.
We also propose a notion of \emph{Soft robustness}, which pertains to the ability of the controller to
meet $D_C$ even when $Rb(D_A)$ does not hold. It tries to satisfy $D_C$ ``as much as possible'' irrespective of the assumption. A uniform synthesis method from various relaxed assumptions using the framework of soft requirement guided synthesis is given. This is implemented in a tool \DCSYNTH. We also give a formal comparison of various robustness notions formulated in \qddc. 
We experimentally show the impact of hard and soft robustness on the ability of the synthesized controllers to meet the commitments.
\end{abstract}
}


\section{Introduction}
\label{section:intro}

In this paper we consider automatic synthesis of robust controllers from regular specification given using an interval temporal logic \qddc (Quantified Discrete Duration Calculus)\cite{Pan01a}.
A regular property $D$ conceptually specifies a deterministic finite automaton $A(D)$ over finite words. The property holds at a point in behavior
if the past of the point is accepted by $A(D)$. 
Such properties hold intermittently in the behaviour. 
\qddc is a highly succinct and powerful
logic for specifying regular properties \cite{MPW17}. Formally, \qddc has  the expressive power of regular languages.

A controller specification consists of a pair of \qddc formulas $(D_A,D_C)$  giving the specification of \emph{assumption} and \emph{commitment}, respectively.
A standard correctness criterion, termed BeCorrect \cite{BCGHHJKK14},  mandates
that in all the behaviour of the synthesized controller, the commitment $D_C$ should hold invariantly provided the assumption $D_A$ holds invariantly. This can be denoted as $(\mathit{pref(D_A) \Rightarrow pref(D_C)})$. Here
$\mathit{pref(D)}$ holds at a point in behaviour if the regular property $D$ has been invariantly true in the past.

{\em Robustness} pertains to the ability of $D_C$ holding even when $D_A$ does not hold invariantly in the past \cite{BCGHHJKK14}. 
A {\em relaxed assumption} $\mathit{Rb(D_A)}$ specifies weaker condition than $\mathit{pref(D_A)}$ and a robust specification can be given by the formula 
$(\mathit{Rb(D_A) \Rightarrow D_C})$ that should be satisfied invariantly. Thus, $D_C$ should hold whenever the relaxed assumption $Rb(D_A)$ holds. We term this as {\bf hard robustness}.
For example, given integer parameters $k$ and $b$, the relaxed assumption $LenCntInt(D_A,k,b)$ holds at a point $i$ if the assumption $D_A$ is violated at most $k$ times in the interval $[i-(b+1),i]$ spanning $b$ previous cycles from the current point $i$. The controller synthesized under the relaxed assumption $LenCntInt(D_A,k,b)$ would be more robust as it will tolerate upto $k$ assumption violations in recent past. 

As a complementary notion of robustness, we propose  {\textbf{soft robustness}}, which  pertains to the ability of a controller to
meet $D_C$ even when the relaxed assumption $Rb(D_A)$ does not hold. The controller synthesis technique should try to satisfy $D_C$ ``as much as possible'' irrespective of the assumption \cite{BEJK14}. 
Note that soft robustness is a global optimization problem \cite{Bel57}. For example, the commitment $D_C$ may be such that satisfying $D_C$ at a current point may prevent it from holding at most points in future. Hence the controller must choose not to satisfy it at the current point. 

We structure the formulation of relaxed assumption $Rb(D_A)$ (used to provide hard robustness) as a pair $(Rb(A),D_A)$ where $D_A$ is the concrete assumption formulated by the user. {\bf Robustness criterion} $Rb(A)$, which is a \qddc\/ formula over a fresh proposition $A$, 
specifies a generic method of relaxing any assumption $D_A$. A notion of cascade composition $Rb(A) \ll Ind(D_A,A)$ (formalized in this paper using logic \qddc), 
gives us the
desired relaxed assumption formula $Rb(D_A)$. We develop a theory of robustness criteria and its impact on the robustness of resulting controllers by defining \emph{dominance} relation between them.

We show that logic \qddc can be used to conveniently and systematically formulate a wide variety of robustness criteria $Rb(A)$. The interval logic modalities, bounded counting constraints and second order quantification features of logic \qddc\/ are particularly helpful in the formulation of various robustness criteria. We provide a list of several useful robustness  criteria, spanning some existing as well as novel notions. 

Thus, we have a logic based specification of hard robustness. 
We can order these robustness criterion by their weakness (in logical implication order). We show that a weaker robustness criterion specifies a more robust controller.
Hence, to get improved hard robustness, the designer must relax the concrete assumption $D_A$ with the weakest robustness criterion for which the controller remains realizable. We will illustrate this design aspect in our case studies.

Using the framework of soft requirement guided synthesis implemented in a tool \DCSYNTH \cite{WPM19}, we give a method to synthesize robust controller which guarantees the hard robustness and it optimizes the soft robustness. Tool \DCSYNTH\/ uses techniques from optimal control of Markov Decision Processes \cite{Put94,Bel57} for optimizing the expected value of soft robustness.  

We show the impact of hard and soft robustness on the ability of the synthesized controller to maintain the desired commitment $D_C$ for as many inputs as possible.
We give the case studies of a Synchronous bus arbiter (and a Mine pump controller specification in Appendix \ref{section:minepumpcasestudy}). Our experiments show that soft robustness greatly improves the \textit{expected value} of the commitment holding on random independent and identically distributed (iid) inputs. Moreover, hard robustness gives the worst case guarantees on meeting the commitments. Indeed we get multiple controllers with the same expected value of the commitment $D_C$ over random (iid) inputs. But these controllers satisfy $D_C$ for incomparable sets of inputs (under the subset ordering). These controllers guarantee $D_C$ under
diverse relaxed assumptions, thus providing distinct hard guarantees. Thus, our synthesis technique,  combining the hard and the soft robustness, seems useful.

The rest of the paper is arranged as follows. Section \ref{section:qddc} describes syntax and semantics of the Logic \qddc and the necessary definitions for robust controller synthesis. Section \ref{section:dcsynth-spec}  gives the syntax of \DCSYNTH specification and brief outline of synthesis method. 
Section \ref{section:robustness} introduces the  theory of robust specifications and its controller synthesis. 
Section \ref{section:motivation} describes
Arbiter case study and the corresponding experimental results. 
In Section \ref{section:discussion}, we conclude the paper with major contribution and related work.

\section{Quantified Discrete Duration Calculus (\qddc) Logic}
\label{section:qddc}
Let $PV$ be a finite non-empty set of propositional variables. 
Let $\sigma$ a non-empty finite word over the alphabet
$2^{PV}$. It has the form 
$\sigma=P_0\cdots P_n$ where $P_i\subseteq PV$ for each $i\in\{0,\ldots,n\}$. 
Let $\len{\sigma}=n+1$, 
$\dom{\sigma}=\{0,\ldots,n\}$, 
$\sigma[i,j]=P_i \cdots P_j$ and $\sigma[i]=P_i$.

The syntax of a \emph{propositional formula} over variables $PV$ is given by:
\[
\varphi := false\ |\ \true\ |\ p\in PV |\ !\varphi\ |\ \varphi~\&\&~\varphi\ |\ \varphi~||~\varphi
\]
with $\&\&, ||, !$ denoting conjunction, dis-junction and negation, respectively. Operators
such as $\Rightarrow$  and $\Leftrightarrow$ are defined as usual. 
Let $\Omega(PV)$ be the set of all propositional formulas over variables $PV$. 
%
Let $i\in\dom{\sigma}$. 
Then the satisfaction of propositional formula $\varphi$ at point $i$, denoted $\sigma,i\models\varphi$ is defined as usual and omitted here for brevity.
\oomit{
$\forall i\in\dom{\sigma}: \sigma, i \models \true$, $\sigma, i \models p$ iff $p\in\sigma[i]$, 
and $\sigma,i \models - \varphi$ iff $i>0$ and $\sigma, i-1 \models \varphi$. 
The satisfaction of
boolean combinations \verb#!# (not), \verb#&&# (and), \verb#||# (or) is  defined in a natural way.
}

The syntax of a \qddc formula over variables $PV$ is given by: 
\[
\begin{array}{lc}
D:= &\ang{\varphi}\ |\ \sq{\varphi}\ |\ \dsq{\varphi}\ |\ 
D\ \verb|^|\ D\ |\ !D\ |\ D~||~D\ |\ D~\&\&~D\ \\ 
&ex~p.\ D\ |\ all~p.\ D\ |\ slen \bowtie c\ |\ scount\ \varphi \bowtie c\ |\ sdur\ \varphi \bowtie c 
\end{array} 
\]
where $\varphi\in\Omega(PV)$, $p\in PV$, 
$c ~ \in\nat$ and $\bowtie\in\{<,\leq,=,\geq,>\}$. 

An \emph{interval} over a word $\sigma$ is of the form $[b,e]$ 
where $b,e\in\dom{\sigma}$ and $b\leq e$. 
Let $\intv{\sigma}$ be the set of all intervals over $\sigma$.
Let $\sigma$ be a word over $2^{PV}$ and let $[b,e]\in\intv{\sigma}$ be an interval. 
Then the satisfaction relation of a \qddc formula $D$ over
$\Sigma$ and interval $[b,e]$ written as $\sigma,[b,e]\models D$, is defined inductively as follows:
\[
\begin{array}{lcl}
\sigma, [b,e]\models\ang{\varphi} & \mathrm{\ iff \ } & b=e \mbox{ and } \sigma,b\models \varphi,\\
\sigma, [b,e]\models\sq{\varphi} & \mathrm{\ iff \ } & b<e \mbox{ and }
                                           \forall b\leq i<e:\sigma,i\models \varphi,\\
\sigma, [b,e]\models\dsq{\varphi} & \mathrm{\ iff \ } & \forall b\leq i\leq e:\sigma,i\models \varphi,\\
\sigma, [b,e]\models\dcurly{\varphi} & \mathrm{\ iff \ } & e=b+1 \mbox{ and }\sigma,b\models \varphi,\\
\sigma, [b,e]\models D_1\verb|^| D_2 & \mathrm{\ iff \ } & \exists b\leq i\leq e:\sigma, [b,i]\models D_1\mbox{ and }\sigma,[i,e]\models D_2,\\
\end{array}
\]
with Boolean combinations $!D$, $D_1~||~D_2$ and $D_1~\&\&~D_2$  defined in the expected way. 
We call word $\sigma'$ a $p$-variant, $p\in PV$, of a word $\sigma$ 
if $\forall i\in\dom{\sigma},\forall q\neq p:q\in \sigma'[i]\iff q\in \sigma[i]$. 
Then $\sigma,[b,e]\models ex~p.~D\iff\sigma',[b,e]\models D$ for some 
$p$-variant $\sigma'$ of $\sigma$ and 
$(all~p.~D) \Leftrightarrow (!ex~p.~!D)$. 
%

%


Entities \slen, \scount and \sdur are called \emph{terms}. 
The term \slen gives the length of the interval in which it is 
measured, $\scount\ \varphi$ where $\varphi\in\Omega(PV)$, counts 
the number of positions including the last point 
in the interval under consideration where $\varphi$ holds, and    
$\sdur\ \varphi$ gives the number of positions excluding the last point 
in the interval where $\varphi$ holds. 
Formally, for $\varphi\in\Omega(PV)$ we have 
$\slen(\sigma, [b,e])=e-b$, $\scount(\sigma,\varphi,[b,e])=\sum_{i=b}^{i=e}\left\{\begin{array}{ll}
					1,&\mbox{if }\sigma,i\models\varphi,\\
					0,&\mbox{otherwise.}
					\end{array}\right\}$ and 
$\sdur(\sigma,\varphi,[b,e])=\sum_{i=b}^{i=e-1}\left\{\begin{array}{ll}
					1,&\mbox{if }\sigma,i\models\varphi,\\
					0,&\mbox{otherwise.}
					\end{array}\right\}$

We also define the following derived constructs: 
$pt=\langle true \rangle$, $ext=!pt$, $\mathbf{ \langle \rangle D} = true\verb|^|D\verb|^|true$, $[]D=(!\langle \rangle!D)$ and $\mathbf{pref(D)}=!((!D)\verb|^|true)$. 
Thus, $\sigma, [b,e] \models []D$ iff $\sigma, [b',e']\models D$ 
for all sub-intervals $b\leq b'\leq e'\leq e$ and $\sigma, [b,e] \models \mathit{pref}(D)$ 
iff $\sigma, [b,e']\models D$ for all prefix intervals $b \leq e' \leq e$.

Finally, we define
$\sigma,i \models D$ iff $\sigma, [0,i] \models D$, and 
$\sigma \models D$ iff $\sigma, [0,$ $\len{\sigma}-1] \models D$ 
and $L(D) = \{ \sigma\mid\sigma \models D \}$, the set of behaviours accepted by $D$. Let $D$ be valid, denoted $\models_{dc} D$, iff $L(D) =
(2^{PV})^+$.

\begin{theorem}
\label{theorem:formula-automaton}
 \cite{Pan01a} For every formula $D$ over variables $PV$ we can construct a Deterministic Finite Automaton (DFA) $\A(D)$ over alphabet $2^{PV}$ 
such that $L(\A(D))=L(D)$. We call $\A(D)$ a \emph{formula automaton} for $D$ or the monitor automaton for $D$. \qed
\end{theorem}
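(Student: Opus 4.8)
The plan is to prove the statement by structural induction on the \qddc formula $D$, at each step producing a (possibly nondeterministic) automaton over the alphabet $2^{PV}$ whose language is exactly $L(D) = \{\sigma \mid \sigma,[0,\len{\sigma}-1]\models D\}$, and then determinising it via the subset construction to obtain the DFA $\A(D)$. The induction hypothesis is that every proper subformula $D'$ already has such a DFA $\A(D')$. The key fact that makes the induction go through is a routine locality observation: for any word $\sigma$ and interval $[b,e]$, the truth of $\sigma,[b,e]\models D'$ depends only on the letters $\sigma[b],\ldots,\sigma[e]$, so $\sigma,[b,e]\models D'$ holds iff the subword $\sigma[b,e]$ (whose full interval is $[0,e-b]$) lies in $L(D')$, i.e.\ iff $\A(D')$ accepts $\sigma[b,e]$.

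For the base cases: $\ang{\varphi}$ is accepted exactly by the single-letter words whose letter satisfies $\varphi$; $\sq{\varphi}$, $\dsq{\varphi}$, $\dcurly{\varphi}$ are small DFAs checking $\varphi$ on, respectively, all-but-the-last letter, every letter, and the first letter of a two-letter word. The metric constraints $\slen\bowtie c$, $\scount\ \varphi\bowtie c$, $\sdur\ \varphi\bowtie c$ are handled by a DFA carrying a counter saturated at $c+1$, which evaluates $\bowtie$ when the end of the word is reached. The Boolean cases follow from closure of regular languages under the operations: a product automaton realises $\&\&$ and $||$, and for $!D$ we simply swap accepting and non-accepting states of the DFA $\A(D)$ given by the hypothesis (determinism being exactly what makes complementation sound). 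Second-order quantification $ex~p.~D$ is projection: relabel every transition letter $P$ of $\A(D)$ to $P\setminus\{p\}$, yielding an NFA over $2^{PV\setminus\{p\}}$, and determinise; then $all~p.~D$ is obtained as $!\,ex~p.~!D$.

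The one genuinely delicate case is the chop $D_1{}^{\frown}D_2$, because its semantics splits the interval $[b,e]$ at a point $i$ that is \emph{shared} by the two subintervals $[b,i]$ and $[i,e]$, so the letter $\sigma[i]$ must be "consumed" by both $\A(D_1)$ and $\A(D_2)$. I would build an $\varepsilon$-NFA that first simulates $\A(D_1)$ while additionally remembering in its state the last letter it has read; whenever this simulation sits in an accepting state of $\A(D_1)$ having just read a letter $P$, the NFA may take an $\varepsilon$-move into the state $\delta_2(q^2_0,P)$ of $\A(D_2)$ — thereby re-feeding the shared letter $\sigma[i]$ — and from then on simulate $\A(D_2)$ on $\sigma[i+1],\ldots,\sigma[e]$. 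The point-interval corner cases $b=i$ and $i=e$ are covered automatically, since each $\A(D_k)$ already classifies single-letter words correctly. After $\varepsilon$-elimination and the subset construction this yields $\A(D_1{}^{\frown}D_2)$.

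The main obstacle is precisely this chop step: one must argue carefully that the "remember-the-last-letter and $\varepsilon$-jump" NFA accepts $\sigma$ iff there is a split index $i$ with $\sigma[0,i]\in L(\A(D_1))$ and $\sigma[i,\len{\sigma}-1]\in L(\A(D_2))$, which by the locality fact and the induction hypothesis is equivalent to $\sigma,[0,\len{\sigma}-1]\models D_1{}^{\frown}D_2$; everything else is standard automata-theoretic bookkeeping. Note that because $!D$ and $ex~p.~D$ each invoke a determinisation, the size of $\A(D)$ is in general non-elementary (a tower of exponentials) in the syntactic structure of $D$, but the statement asserts only existence, so no complexity control is required.
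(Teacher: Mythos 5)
Your proof is correct and follows essentially the construction of the cited reference \cite{Pan01a}: the paper itself gives no proof of this theorem (it only cites it), and the standard argument is precisely the structural induction you describe, with the chop operator handled by a nondeterministic split that re-feeds the shared letter to the second automaton, followed by determinisation. The two points that genuinely need care --- that the $\varepsilon$-jump covers the degenerate splits $i=b$ and $i=e$, and that satisfaction of a subformula over an interval depends only on the subword (including under the quantifier case, where the $p$-variant may differ outside the interval) --- are both addressed explicitly in your write-up, so nothing is missing.
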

A tool DCVALID implements this formula automaton construction in an efficient manner by internally using the tool MONA \cite{Mon01}. 
It gives {\em minimal, deterministic} automaton (DFA) for the formula $D$.
\oomit{
MONA uses multi-terminal BDDs for efficiently representing the automata and performing operations such as product, projection, determinization and minimization, thereby permitting automaton construction for fairly large formulas, inspite of poor worst case complexity. 
}
We omit the details here. However,  the reader may refer to several papers  on \qddc\/ 
for detailed description and examples of \qddc\/ specifications as well as  its model checking tool DCVALID \cite{Pan01a,MPW17,Pan01b}.

\vspace{-0.4cm}
\subsection{Supervisors and Controllers}
Now we consider \qddc\/  formulas and automata where variables $PV=I \cup O$ are partitioned into disjoint sets of input variables $I$ and output variables $O$. We show how Mealy machines can be represented as special form of Deterministic finite automata (DFA). Supervisors and controllers are Mealy machines with special properties. This representation allows us to use the MONA DFA library \cite{Mon01} to efficiently compute supervisors and controllers in our tool \DCSYNTH.
\begin{definition}[Output-nondeterministic Mealy Machines]
\label{def:nondetmm}
A total and Deterministic Finite Automaton (DFA) over input-output alphabet $\Sigma=2^I \times 2^O$ is a tuple $A=(Q,\Sigma,s,\delta,F)$, as usual, with $\delta:Q \times 2^I \times 2^O \rightarrow Q$. An {\bf output-nondeterministic Mealy machine} is a DFA with a unique reject (or non-final) state $r$ 
which is a sink state i.e. $F= Q - \{r\}$ and $\delta(r,i,o)=r$ for all $i \in 2^I$, $o \in 2^O$. 
\qed
\end{definition}

Intuition is that the transitions from $q \in F$ to $r$ are
forbidden (and kept only for making the DFA total).
Language of any such Mealy machine is prefix-closed. 
Recall that for a Mealy machine, $F=Q-\{r\}$.
A Mealy machine is 
{\bf deterministic} if $\forall s \in F$, $\forall i \in 2^I$, $\exists$ at most one $o \in 2^O$  s.t. $\delta(s,i,o) \not=r$. 
An output-nondeterministic Mealy machine is called {\bf non-blocking} if $\forall s \in F$, $\forall i \in 2^I$ $\exists o \in 2^I$ s.t.
$\delta(s,i,o) \in F$. It follows that 
for all input sequences a non-blocking Mealy machine can produce one or more
output sequence without ever getting into the reject state. 

For a Mealy machine $M$ over variables $(I,O)$, its language $L(M) \subseteq (2^I \times 2^O)^*$. A word $\sigma \in L(M)$ can also be represented as pair $(ii,oo) \in ((2^I)^*,(2^O)^*) $ such that
$\sigma[k] = ii[k] \cup oo[k], \forall k \in dom(\sigma)$. Here $\sigma, ii, oo$ must have the same length. We will not distinguish between $\sigma$ and $(ii,oo)$ in the rest of the paper.
Also, for any input sequence $ii \in (2^I)^*$, we will define $M[ii] = \{ oo ~\mid~ (ii,oo) \in L(M) \}$. 

\begin{definition}[Controllers and Supervisors]
 An output-nondeterministic Mealy machine which is non-blocking is called a {\bf supervisor}.  An deterministic supervisor is called a {\bf controller}.
 \qed
\end{definition}
The non-deterministic choice of outputs in a supervisor denotes unresolved decision.
The determinism ordering below allows supervisors to be refined into controllers.

\oomit{\color{blue}
See Figure \ref{fig:mpnc} in Appendix \ref{section:2cellexample} for an example of a output-nondeterministic Mealy machine which is non-blocking.
We represent output-nondeterministic Mealy machines and controllers as DFAs to take advantage of well established 
BDD based  DFA libraries such as those in tool MONA \cite{Mon01}. This library is used to represent and manipulate the supervisors in tool \DCSYNTH.
}

\begin{definition}[Determinism Order and Sub-supervisor]
 Given two supervisors $Sup_1, Sup_2$ we say that $Sup_2$ is {\em more deterministic} than $Sup_1$, denoted $Sup_1 \leq_{det} Sup_2$,
 iff $L(Sup_2) \subseteq L(Sup_1)$.  We call $Sup_2$ to be a {\em sub-supervisor} of 
 $Sup_1$. \qed
 \end{definition}
Note that being supervisors, they are both non-blocking, and hence 
$\emptyset \subset Sup_2[ii] \subseteq Sup_1[ii]$ for any $ii \in (2^I)^*$.
The supervisor $Sup_2$ may make use of additional memory for resolving and pruning
the non-determinism in $Sup_1$. 

\begin{definition}[Must Dominance]
\label{def:mustDominance}
Given a supervisor $Sup$ and a \qddc formula $D$, both over input-output alphabet $(I,O)$, let the set of must-satisfying inputs $MustInp(Sup,D) = \{ ii \in (2^I)^+ ~\mid~ 
\forall oo.~ ((ii,oo) \in L(Sup) ~\Rightarrow~ (ii,oo) \models_{dc} D) \}$. 
Given two supervisors $Sup_1, Sup_2$ and a \qddc formula $D$, we say that $Sup_2$ {\em must dominates} $Sup_1$ w.r.t. $D$, denoted by $Sup_1 \leq_{must}^{D} Sup_2$, iff\\ $MustInp(Sup_1,D) ~\subseteq~ MustInp(Sup_2,D)$.
\qed 
\end{definition}
The following proposition states that any arbitrary resolution of non-determinism in supervisors preserves the must-guarantees. 
\begin{proposition}
\label{prop:detmust}
  $Sup_1 \leq_{det} Sup_2$ implies that  $Sup_1 \leq_{must}^{D} Sup_2$ for any $D \in \qddc$. \qed
\end{proposition}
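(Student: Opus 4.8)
The plan is to unfold the definitions of $\leq_{det}$ and $MustInp$, and observe that a smaller language can only enlarge the set of must-satisfying inputs. Assume $Sup_1 \leq_{det} Sup_2$, which by definition means $L(Sup_2) \subseteq L(Sup_1)$. Fix any \qddc formula $D$ over $(I,O)$. I must show $MustInp(Sup_1,D) \subseteq MustInp(Sup_2,D)$, so take an arbitrary $ii \in MustInp(Sup_1,D)$ and prove $ii \in MustInp(Sup_2,D)$.

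The argument is a direct chain of implications. By definition of $MustInp(Sup_1,D)$, for every $oo$ we have $(ii,oo) \in L(Sup_1) \Rightarrow (ii,oo) \models_{dc} D$. Now take any $oo$ with $(ii,oo) \in L(Sup_2)$. Since $L(Sup_2) \subseteq L(Sup_1)$, we get $(ii,oo) \in L(Sup_1)$, and hence $(ii,oo) \models_{dc} D$. As $oo$ was arbitrary, this establishes $\forall oo.\, ((ii,oo) \in L(Sup_2) \Rightarrow (ii,oo) \models_{dc} D)$, i.e. $ii \in MustInp(Sup_2,D)$. This completes the inclusion, and therefore $Sup_1 \leq_{must}^{D} Sup_2$.

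There is essentially no obstacle here: the statement is a monotonicity fact about universally quantified conditions under language containment, and the one point worth noting explicitly is that $Sup_1$ and $Sup_2$ being supervisors (hence non-blocking) is not actually needed for this direction --- it would only matter if we wanted the sets $Sup_j[ii]$ to be non-empty, which is irrelevant to the inclusion of $MustInp$ sets. The only mild care required is to keep the identification of $\sigma$ with the pair $(ii,oo)$ consistent, as set up in the paragraph preceding Definition~\ref{def:mustDominance}, so that membership in $L(Sup_j)$ and the satisfaction relation $\models_{dc}$ are applied to the same object.
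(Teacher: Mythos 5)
Your proof is correct: the paper states this proposition without proof (treating it as immediate), and your argument is exactly the intended one --- unfold $\leq_{det}$ as $L(Sup_2)\subseteq L(Sup_1)$ and observe that the universally quantified condition defining $MustInp$ is antitone in the language. Your side remark that non-blockingness is not needed for this direction is also accurate.
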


For technical convenience, we define a notion of \textit{indicator variable} for a \qddc\/ formula
(regular property). The idea is that the indicator variable $w$ witnesses the truth of a formula $D$ at any point in execution. Thus, $Ind(D,w) \df pref(EP(w) \Leftrightarrow D)$. Here, $\mathbf{EP(w)} = (true \verb#^# \langle w \rangle)$, i.e. $EP(w)$ holds at a point if variable $w$ is true at that point. Hence, $w$ will be $true$ exactly on those points where $D$ is $true$. These indicator variables can be used as auxiliary propositions in another formula using the notion of cascade composition $\ll$ defined below.
\begin{definition}[Cascade Composition]
\label{def:indDef}
Let $D_1, \ldots, D_k$ be \qddc\/ formulas over input-output variables $(I,O)$ and let $W=\{w_1, \ldots, w_k\}$ be the corresponding set of fresh indicator variables i.e. $(I \cup O) \cap W = \emptyset$. Let $D$ be a formula over variables $(I \cup  O \cup W)$. Then, the cascade composition $\ll$ and its equivalent \qddc\/ formula are as follows:
\[
  D \ll \langle Ind(D_1,w_1), \ldots, Ind(D_k,w_k) \rangle
 \quad  \df \quad D \land \bigwedge_{1 \leq i \leq k} ~ pref(EP(w_i) \Leftrightarrow D_i)
\]
This composition gives a formula over input-output variables $(I,O \cup W)$. \qed
\end{definition}
Cascade composition provides a useful ability to modularize a formula using auxiliary
propositions $W$ which witness regular properties given as \qddc formulas. 
\begin{example}
\label{exam:cascade}
Let $\verb#Rb(A)# = (\verb#scount !A <= 3#$) which holds at a point provided the proposition $\verb#A#$ is false at most 3 times in entire past. Let formula $\verb#D'# = \verb#true^<!req>^(slen=1) || true^<ack>#$ which holds at a point provided either the current point satisfies $\verb#ack#$ or the previous point does not satisfy $\verb#req#$. Then, $\verb#Rb(A)# \ll \verb#Ind(D',A)#$ is equivalent to the formula $\verb#(scount !A <=3) && #$  $\verb#pref(EP(A) <=> D')#$. This states that at each point, $\verb#A#$ holds provided $\verb#D'#$ holds, and the whole formula holds provided count of \verb#!A# in past is at-most $3$. Thus the whole formula holds provided $\verb#D'#$ is false at most $3$ times in the entire past. \qed
\end{example}

\section{\DCSYNTH Specification and  Controller Synthesis}
\label{section:dcsynth-spec}

This section gives a brief overview of the soft requirement guided controller synthesis method from \qddc formulas. More details can be found in
the original paper \cite{WPM19}. The method is implemented in a tool \DCSYNTH. This method and the tool will be used for robust controller synthesis in the subsequent sections.
\begin{definition}
\label{def:realizableAndMPNC}
A supervisor  $Sup$ {\em realizes invariance} of  \qddc\/ formula $D$ over variables $(I,O)$, denoted as  ${\bf Sup~ \mathbf{realizes ~\invariant}~ D}$, provided $L(Sup) \subseteq L(D)$. Recall that, by the definition of supervisors, $Sup$ must be non-blocking.
The supervisor $Sup$ is called {\bf maximally permissive} provided for any supervisor
$Sup'$ such that $(Sup' ~\mbox{realizes} ~\invariant~ D)$ we have $Sup \leq_{det} Sup'$. Thus, no other supervisor with larger languages realizes the invariance of $D$.
This $Sup$ is unique upto language equivalence of automata, and the minimum state
maximally permissive supervisor is denoted $\MPNC(D)$. \qed
\end{definition}
A well-known greatest fixed point algorithm for safety synthesis over $A(D)$ gives us 
$\MPNC(D)$ if it is realizable. We omit the details here (see \cite{WPM19}).
\begin{proposition}[\MPNC Monotonicity]
\label{prop:mpncSupervisor}
Given \qddc formulas $D_1$ and $D_2$ over variables $(I,O)$ such that $\models_{dc} (D_1 \Rightarrow D_2)$, we have:
\begin{itemize}
\item $\MPNC(D_2)  \leq_{det} \MPNC(D_1)$, and
\item If $\MPNC(D_1)$ is realizable then $\MPNC(D_2)$ is also realizable.
\qed
\end{itemize}
\end{proposition}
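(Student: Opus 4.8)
The plan is to work directly from Definition~\ref{def:realizableAndMPNC} and the fact that $\models_{dc}(D_1 \Rightarrow D_2)$ is equivalent to the language inclusion $L(D_1) \subseteq L(D_2)$. First I would establish the second bullet, since it is needed to make the first bullet meaningful. Assume $\MPNC(D_1)$ is realizable; then by definition there is a non-blocking output-nondeterministic Mealy machine $Sup$ with $L(Sup) \subseteq L(D_1)$. Since $L(D_1) \subseteq L(D_2)$, the same $Sup$ satisfies $L(Sup) \subseteq L(D_2)$, i.e. $Sup$ realizes $\invariant\ D_2$. Hence the set of supervisors realizing invariance of $D_2$ is nonempty, so by Definition~\ref{def:realizableAndMPNC} the maximally permissive one, $\MPNC(D_2)$, exists (is realizable).

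For the first bullet, I would argue that $\MPNC(D_1)$ itself is one of the supervisors realizing $\invariant\ D_2$. Indeed, $\MPNC(D_1)$ realizes $\invariant\ D_1$, meaning $L(\MPNC(D_1)) \subseteq L(D_1) \subseteq L(D_2)$, so $\MPNC(D_1)$ realizes $\invariant\ D_2$ as well (note that $\MPNC(D_1)$ is by construction non-blocking, which is the only structural requirement on a supervisor). Now apply the maximal permissiveness clause of Definition~\ref{def:realizableAndMPNC} with $D := D_2$ and the witness supervisor $Sup' := \MPNC(D_1)$: since $\MPNC(D_2)$ is maximally permissive, every supervisor $Sup'$ realizing $\invariant\ D_2$ satisfies $\MPNC(D_2) \leq_{det} Sup'$. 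Taking $Sup' = \MPNC(D_1)$ yields $\MPNC(D_2) \leq_{det} \MPNC(D_1)$, which is exactly the claim (recall $Sup_1 \leq_{det} Sup_2$ is defined as $L(Sup_2) \subseteq L(Sup_1)$, so this says $L(\MPNC(D_1)) \subseteq L(\MPNC(D_2))$, consistent with $L(D_1) \subseteq L(D_2)$).

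I do not expect a genuine obstacle here; the result is essentially a monotonicity-of-greatest-fixed-point observation dressed in the supervisor vocabulary. The only point requiring a little care is the order of the two bullets: maximal permissiveness of $\MPNC(D_2)$ is only asserted when $D_2$ is realizable, so one must prove realizability of $\MPNC(D_2)$ first (from realizability of $\MPNC(D_1)$) before invoking the permissiveness property in the first bullet. A secondary subtlety is that $\leq_{det}$ compares via language containment regardless of state count, so no argument about the minimum-state representative is needed — the language-level inclusion transfers immediately from $L(D_1)\subseteq L(D_2)$ through the two maximally permissive supervisors.
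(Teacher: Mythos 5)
Your proof is correct. The paper actually states Proposition~\ref{prop:mpncSupervisor} without any accompanying argument, so there is no authorial proof to compare against; your direct derivation from Definition~\ref{def:realizableAndMPNC} --- translating $\models_{dc}(D_1\Rightarrow D_2)$ into $L(D_1)\subseteq L(D_2)$, observing that any supervisor realizing $\invariant~D_1$ (in particular $\MPNC(D_1)$ itself) also realizes $\invariant~D_2$, and then invoking maximal permissiveness of $\MPNC(D_2)$ with $Sup'=\MPNC(D_1)$ as witness --- is exactly the intended justification, and your point about proving the realizability bullet before the ordering bullet is well taken. The one step you pass over lightly is the inference from ``some supervisor realizes $\invariant~D_2$'' to ``the maximally permissive supervisor $\MPNC(D_2)$ exists'': Definition~\ref{def:realizableAndMPNC} does not itself guarantee that a maximally permissive element exists whenever the set of realizing supervisors is nonempty. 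This requires the standard safety-synthesis fact that the union of the languages of all realizing supervisors is again the (regular, prefix-closed, non-blocking) language of a supervisor, computed as the greatest fixed point over $A(D_2)$ --- a fact the paper also assumes without proof, so this is a shared omission rather than a defect specific to your argument.
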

\medskip

A {\bf \DCSYNTH specification} is a tuple 
$(I,O,D^h, D^s)$ where $I$ and $O$ are the set of {\em input} and {\em output} variables, respectively. Formula $D^h$, called the \emph{hard requirement}, and formula $D^s$, called the \emph{soft requirement}, are \qddc\/ formulas over the set of propositions $PV=I \cup O$. 
The  objective in \DCSYNTH\/ is to synthesize a deterministic controller which (a) {\bf invariantly} satisfies the
hard requirement $D^h$, and (b) {\bf optimally} satisfies $D^s$ for as many inputs as possible.

The \DCSYNTH specification $(I,O,D^h, D^s)$  is said to be realizable iff $\MPNC(D^h)$ is realizable (i.e. it exist). The synthesis method first computes $\MPNC(D^h)$. In the next step a sub-supervisor of $\MPNC(D^h)$  which satisfies $D^s$ for ``as many inputs as possible'' is computed. This is formalized using a notion of $H$-optimality w.r.t. the soft requirement $D^s$. Let $H$ be a natural number called horizon. We construct a sub-supervisor of $MPS(D^s)$ by pruning  non-deterministic choice of outputs by retaining  only the outputs
which give the highest expected count of (intermittent) occurrence of $D^s$ over the next $H$ steps of execution. This count is averaged over all input sequences of length $H$. A well known value-iteration algorithm due to Bellman \cite{Bel57}, adapted from optimal
strategy synthesis for Markov Decision Processes, gives us the required sub-supervisor.  Book \cite{Put94} gives a proof of 
$H$-optimality of the value iteration algorithm and resulting supervisor. This construction is implemented in our tool \DCSYNTH. See \cite{WPM19} for details of the algorithm and its symbolic BDD based implementation. The resulting sub-supervisor is denoted by $\GODSC(D^h,D^s,H)$. 
As $\GODSC$ is obtained by pruning non-deterministic outputs in $\MPNC$, we have the following proposition \cite{WPM19}.
\begin{proposition}
 $\MPNC(D^h) \leq_{det} \GODSC(D^h,D^s,H)$ \qed
\end{proposition}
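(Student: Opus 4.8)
The plan is to unwind the definitions and observe that the claim is essentially immediate from the construction of $\GODSC(D^h,D^s,H)$ as a sub-supervisor of $\MPNC(D^h)$ obtained by pruning non-deterministic output choices. First I would recall the preceding paragraph: the value-iteration procedure starts from $\MPNC(D^h)$ (or, as written, $MPS(D^s)$ — here I would be careful to read this as the maximally permissive supervisor that realizes invariance of $D^h$, since only $\MPNC(D^h)$ is guaranteed realizable and the notation $\MPNC = \TYPEONE = MPS$) and at each state removes some of the non-deterministic output transitions, keeping only those that maximize the expected $H$-horizon count of occurrences of $D^s$. Crucially, pruning never \emph{adds} a transition, so every word accepted by $\GODSC(D^h,D^s,H)$ is also accepted by $\MPNC(D^h)$, i.e. $L(\GODSC(D^h,D^s,H)) \subseteq L(\MPNC(D^h))$.

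The key step is then just to invoke Definition (Determinism Order and Sub-supervisor): $Sup_1 \leq_{det} Sup_2$ holds by definition iff $L(Sup_2) \subseteq L(Sup_1)$. Taking $Sup_1 = \MPNC(D^h)$ and $Sup_2 = \GODSC(D^h,D^s,H)$, the language inclusion established above is exactly the statement $\MPNC(D^h) \leq_{det} \GODSC(D^h,D^s,H)$. One should also note for completeness that $\GODSC(D^h,D^s,H)$ is genuinely a supervisor, i.e. still non-blocking: the value-iteration pruning retains, for each reachable final state and each input letter, at least one output transition staying in $F$ (the algorithm selects argmax over a nonempty set of surviving choices), so the non-blocking property of $\MPNC(D^h)$ is inherited. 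This ensures both sides of the inequality are objects in the category of supervisors where $\leq_{det}$ is defined.

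The main obstacle — really the only subtlety — is that the precise pruning algorithm is not reproduced in this excerpt but deferred to reference \cite{WPM19}; so the proof must rely on the stated property that $\GODSC$ ``is obtained by pruning non-deterministic outputs in $\MPNC$'' as a black box, together with the general principle that such pruning is monotone with respect to language inclusion. I would therefore phrase the argument at the level of: (i) pruning transitions of a DFA can only shrink its language; (ii) hence $L(\GODSC(D^h,D^s,H)) \subseteq L(\MPNC(D^h))$; (iii) apply the definition of $\leq_{det}$. No fixed-point or Bellman-optimality reasoning is needed — those justify \emph{which} transitions are pruned and the $H$-optimality of the result, but the determinism-order comparison is insensitive to the selection criterion and follows purely from the fact that something is pruned rather than added.

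\begin{proof}
By construction (see \cite{WPM19}), $\GODSC(D^h,D^s,H)$ is obtained from $\MPNC(D^h)$ by deleting a subset of the non-deterministic output transitions at each state, while retaining at least one input-enabled output transition into a final state at every reachable final state (so that the result is again non-blocking, hence a supervisor). Deleting transitions of a finite automaton can only remove accepting runs, so
\[
  L(\GODSC(D^h,D^s,H)) ~\subseteq~ L(\MPNC(D^h)).
\]
By the definition of the determinism order, $Sup_1 \leq_{det} Sup_2$ iff $L(Sup_2) \subseteq L(Sup_1)$. Instantiating $Sup_1 := \MPNC(D^h)$ and $Sup_2 := \GODSC(D^h,D^s,H)$, the above inclusion gives exactly $\MPNC(D^h) \leq_{det} \GODSC(D^h,D^s,H)$. \qed
\end{proof}
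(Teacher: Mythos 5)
Your proof is correct and matches the paper's own (one-line) justification: the paper simply notes that $\GODSC$ is obtained by pruning non-deterministic outputs of $\MPNC(D^h)$, so its language is contained in that of $\MPNC(D^h)$, which by the definition of $\leq_{det}$ is exactly the claim. Your additional remarks — reading the paper's ``$MPS(D^s)$'' as the supervisor for $D^h$, and checking that pruning preserves the non-blocking property so the result is still a supervisor — are sensible and fill in details the paper leaves implicit.
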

Using Proposition \ref{prop:detmust}, this shows that $\GODSC$ retains all the must-guarantees of $\MPNC(D^h)$. Hence,
$\GODSC(D^h,D^s,H)$ invariantly satisfies $D^h$ and by construction it is $H$-optimal w.r.t $D^s$.
Note that $\GODSC(D^h,D^s,H)$ itself can be non-deterministic as from a state and on a given input there may be more than
one choice of $H$-optimal output. All of these optimal choices are retained in $\GODSC$, giving the {\bf maximally permissive} $H$-optimal sub-supervisor of $\MPNC(D^h)$ w.r.t. $D^s$. Any controller obtained by arbitrarily resolving the non-determinism in $\GODSC(D^h,D^s,H)$ is also $H$-optimal. In tool \DCSYNTH, we allow users to specify a preference ordering $Ord$ on the set of outputs  $2^O$. Any supervisor $Sup$ can be determinized  by 
retaining only the highest ordered output amongst those permitted by $Sup$. This is denoted  $Det_{Ord}(Sup)$. In tool \DCSYNTH, the output ordering is specified by giving  a lexicographically ordered list of output variable literals. This facility is used to determinize $\GODSC(D^h,D^s,H)$ and  $\MPNC(D^h)$ supervisors as required.
\begin{example}
For a supervisor $Sup$ over variables $(I,\{o_1,o_2\})$, an output ordering can be
given as list ($o_1 > ~!o_2$),  Then, the determinization step will select the highest allowed  output in  from ($o_1=true, o_2=false$), ($o_1=true, o_2=true$), ($o_1=false, o_2=false$), ($o_1=false, o_2=true$) in that order. This choice is made for each state and each input. \qed
\end{example} 
In summary, given a \DCSYNTH\/ specification $(I,O,D^h, D^s)$, a horizon value $H$ and a preference ordering $ord$ on outputs $2^O$, the tool \DCSYNTH outputs
maximally permissive supervisors $\MPNC(D^h)$ and $\GODSC(D^h,D^s,H)$ as well as deterministic controllers $Det_{Ord}(\MPNC(D^h))$ and $Det_{ord}(\GODSC(D^h,D^s,H))$.

\oomit{
\paragraph{\textbf{\DCSYNTH specification with indicator definition}}
We propose a syntactic extension of \DCSYNTH specification.
We define \DCSYNTH specification with indicator definitions as tuple \\
\begin{equation}
S=( I,O,W,D^h,D^s) \ll \langle Ind(D_1, w_1) \ldots Ind(D_k, w_k)  \rangle
\end{equation}
Here $D^h,D^s$ are \qddc\/ formulas over variables $I \cup O \cup W$ and
$D_i$ are formulas over variables $I \cup O$ as in the cascade composition Definition \ref{def:indDef}.
\oomit{
It uses additional propositional variables $W=\{w_1, \dots w_k\}$ to track some \qddc formulas using indicator definition \ref{def:indDef}. 
Furthermore, \emph{hard requirement ($D^h$)} and \emph{soft requirement ($D^s$)} formulas are always defined using propositions $w_i \in W$.
So, for \DCSYNTH specification with indicator definitions the 
automaton for $D^h$ and $D^s$ has the alphabet $2^{I \cup O \cup W}$.
}

The specification $S$ above, is defined to be equivalent to the \DCSYNTH\/ specification 
$\hat{S}$ given below:
\[
  ( I,O \cup W, ~~D^h  \ll \langle Ind(D_1, w_1) \ldots Ind(D_k, w_k)  \rangle, 
       ~~ D^s  \ll \langle Ind(D_1, w_1) \ldots Ind(D_k, w_k)  \rangle) 
\]
Hence, $\MPNC$ and $\GODSC$ supervisors can be computed for specification $S$
by first reducing it to $\hat{S}$. Tool \DCSYNTH\/ permits specification with indicator definitions. In the rest of the paper, with slight abuse of notation, we shall not distinguish between $S$ and $\hat{S}$.
}

\oomit{
\begin{proposition}
Every \DCSYNTH specification with indicator definition has an equivalent \DCSYNTH specification, which can be given as follows:
For the specification $( I,O,D^h,D^s) \ll \langle Inddef(D_1, w_1) \ldots Inddef(D_k, w_k)  \rangle$ the corresponding \DCSYNTH specification can be given by $( I,O,D^h  \ll \langle Inddef(D_1, w_1) \ldots Inddef(D_k, w_k)  \rangle, D^s  \ll \langle Inddef(D_1, w_1) \ldots Inddef(D_k, w_k)  \rangle) $

\end{proposition}
}

\oomit{
{\color{red}
Given a specification $(I,O,D^h,D^s)$, a horizon value $H$ (a natural number) and 
a total ordering $<_{ord}$ on the set of outputs $2^O$, the controller synthesis in \DCSYNTH\/ has the following steps: 

\begin{enumerate}
\item Language equivalent DFA $A(D^h)$ and $A(D^s)$  are constructed for formulas $D^h$ and $D^s$. The indicating monitor $A^{Ind}(D^s,w)$ converts the soft requirement DFA $A(D^s)$ into a Mealy machine with same
 states and transitions as $A(D^s)$ but with output $w$ where each transition sets $w=1$ iff its target state is an accepting state of $A(D^s)$. See 
 Appendix \ref{section:supervisor} for its construction.
 
 \item  The maximally permissive supervisor $\MPNC(D^h)$ is constructed  by computing a greatest fixed point over the automaton $A(D^h)=\langle S,2^{I\union O},\delta,F\rangle$ using the standard safety synthesis algorithm \cite{GTW02}. We first compute the \emph{largest} set of winning states $G \subseteq F$ with the following property: $s \in G$ iff $\forall i\exists o:\delta(s,(i,o)) \in G$. Let 
$Cpre(A(D^h),X)= \{s ~\mid~ \forall i \exists o:\delta(s,(i,o)) \in X \}$. 
Then algorithm $\mathit{ComputeWINNING(A(D^h),I,O)}$ iteratively computes $G$ as follows: \\
\hspace*{1cm} G=F; {\em do} G1=G; G=Cpre($A^(D^h)$,G1) {\em while} (G != G1); 

If initial state $s \notin G$, then the specification is \emph{unrealizable}. 
Otherwise, $\MPNC(D^h)$ is obtained by making $G$ the set of final states, retaining
all the transitions in $A(D^h)$ between states in $G$ and redirecting the remaining
transitions of $A(D^h)$ to a unique reject state $r$ which is made a sink state. 

\item  The product $A^{Arena} = \MPNC(D^h) \times A^{Ind}(D^s, w)$ gives the supervisor on which $H$-optimal controller synthesis is carried out, for a given $H$, using the well-known value-iteration algorithm of Bellman \cite{Bel57}. In this algorithm a function $Val(s,p)$  is computed iteratively to assign a value to each state $s$ of $A^{Arena}$ automaton. Here $0 \leq p \leq H$ denotes the iteration number. Constant $0 \leq 
\gamma \leq 1$ is the discounting factor which can be taken as $\gamma=1$ in this paper for simplicity.
For $o \in 2^{(O \cup \{w\})}$ let $wt(o)=1$ if $w\in o$ and $0$ otherwise.
\[
 \begin{array}{l}
 Val(s,0) = 0 \\
 Val(s, p+1) = E_{i \in 2^I}~~ max_{o \in 2^{(O \cup \{w\})} ~:~ \delta(s,(i,o)) \neq r}~  \\
 \hspace*{4cm} \{ wt(o) ~+~ \gamma \cdot Val(\delta(s,(i,o)), p) \} 
\end{array}
\]
Having computed  $Val(s,H)$, the set of  $H$-optimal outputs $O_{max}$  is obtained as follows: 
For each state $s \in A^{Arena}$ and each input $i\in 2^I$, 
\[
 \begin{array}{l}
 O_{max} = \{ o ~\mid~ o=argmax_{o \in 2^O} \{ wt(o) ~+~ \gamma \cdot val(s,H) \\
 \hspace*{2cm} ~\mid~  \delta_{Arena}(s,i,o) = s' \land s'\not=r \} 
\end{array}
\]
Note that $O_{max}$ is a set as more than one output $o$ may satisfy the $argmax$ condition.
Now, supervisor $A^{Arena}$ is pruned by to retain only the transitions with  optimal outputs in set $O_{max}$. This gives us  {\em Maximally permissive $H$-Optimal supervisor} for $D^s$. 
The computation of this supervisor is denoted by $\GODSC(A^{Arena},H)$. This supervisor is denoted by $\GODSC(D^h,D^s,H)$.

\item  The non-deterministic choice of outputs in above $\GODSC$ is resolved in favour of highest
ordered output under the ordering $<_{ord}$.  This gives us the final deterministic controller $Cnt$. 

\end{enumerate}

 The controller $Cnt$ mandatorily satisfies $D^h$ invariantly, and it intermittently, but $H$-optimally, satisfies $D^s$. 

 At all stages of above synthesis, the automata and the supervisors $A(D^h)$, $A(D^s)$, $\MPNC(D^h)$, $A^{Arena}$, $\GODSC(A^{Arena},H)$ and $Cnt$ 
are all represented as semi-symbolic automata (SSDFA) using the MONA \cite{Mon01} DFA data structure. In this representation, the transition function is represented as a multi-terminal BDD. MONA DFA library provides a rich set of automata operations including product, projection, determinization, minimization over the semi-symbolic DFA. We adapt the greatest fixed point computation of $\MPNC(D^h)$ as well as the value iteration algorithm over $A^{Arena}$ to work symbolically over SSDFA. Moreover, at each stage of computation, the automata and supervisors are aggressively minimized.  
}
}

\oomit{
\subsection{Assumptions and Controller Types}
\label{sec:controllertype}

In most of the synthesis examples, we can formulate a desired regular property $C$, termed {\bf commitment}, which the controller should satisfy for as many input sequences as possible. Ideally, this property should be satisfied invariantly. But this may be unrealizable, and
a suitable  {\bf assumption} $A$ on the behaviour of environment may have to be made for $C$ to
hold. Given this pair $(A,C)$ of \qddc\/ formulas over input-output variables $(I,O)$, we specify four standard controller specifications $(I,O,D^h,D^s)$ as follows.
\begin{center}
\begin{tabular}{|c|c|c|}
\hline
Type & Hard Requirement $D^h$ & Soft Requirement $D^s$ \\
\hline
\TYPEZERO & $C$ i.e. ($true \Rightarrow C$) & $true$ \\
\hline
\TYPEONE & $(A \Rightarrow C)$ & $true$ \\ 
\hline
\TYPETWO & $true$ i.e. ($false \Rightarrow C$) & $C$ \\
\hline
\TYPETHREE & $(A \Rightarrow C)$ & C \\
\hline
\end{tabular}
\end{center}
$\TYPEZERO$ controller gives the best guarantee but it may be unrealizable. $\TYPEONE$ controller
provides a firm but conditional guarantee. The $\TYPETWO$ controller tries to achieve $C$
in $H$-optimal fashion irrespective of any assumption, whereas $\TYPETHREE$ Controller provides
firm conditional guarantee and it also tries to satisfy $C$ in $H$-optimal fashion even
when the assumption does not hold. 
}

\vspace*{-0.5 cm}
\section{Specification and Synthesis of robust controllers}
\label{section:robustness}
This section introduces the notion of robustness  and controller synthesis from Robust Specification. The reader is
urged to re-look at the explanation and motivation for robust
specification given in the Introduction section \ref{section:intro}.
We define the 
\emph{\textbf{Robust specification}}
as a triple $(D_A, D_C, \Rb(A))$, where the \qddc formulas $D_A$ and $D_C$ specify regular properties over input-output alphabet $(I,O)$ representing the \emph{assumption} and the \emph{commitment}. Moreover, $\mathbf{\Rb(A)}$ is a formula  over an  indicator variable $A$. Formula $Rb(A)$ is called a \emph{\textbf{Robustness criterion}}. It specifies a generic method for relaxing any arbitrary assumption formula $D_A$.

Given a robustness criterion $Rb(A)$ and concrete assumption formula 
$D_A$, we make use of the cascade composition (see Definition \ref{def:indDef}) to get the {\em relaxed assumption} under which the commitment must hold. The desired relaxed assumption is given by
\[
Rb(A) \ll Ind(D_A,A)
\]
which results in a \qddc\/ formula over the input-output variables 
$(I,O\cup \{A\})$. See Example \ref{exam:cascade}.

 The \emph{Robust specification}  $(D_A, D_C, \Rb(A))$ gives us the \emph{\DCSYNTH specification} below, which is denoted by 
 $\RbSpec(D_A, D_C, \Rb(A))$.
\begin{equation}
\label{eq:rbspec}
(I, (O \cup \{A\}), ~((\Rb(A)\ll Ind(D_A,A)) ~\Rightarrow~ D_C ), ~ D_C ) 
\end{equation}
The hard requirement states that the commitment $D_C$ {\bf must} hold whenever the relaxed assumption $Rb(A) \ll Ind(D_A,A)$ holds. Moreover, it specifies $D_C$ as the soft-requirement. The controller must be
{\bf optimized} so that the soft requirement holds for as many inputs as possible, irrespective of the assumption.

\oomit{
The aim is to synthesize controller which meets commitment ``as long as'' relaxed assumption $\mathit{\Rb(A)}$ is satisfied.
Bloem {\em et~al} \cite{Bloem} have shown that this can be interpreted in several ways giving different notions of robustness. In this section,
we provide a framework for defining (using logic \qddc) various notions of \emph{robustness criteria}. Moreover, we give a uniform method of 
synthesizing a robust controller using the soft-requirement guided synthesis framework of \DCSYNTH specification \cite{WPM19}.

{ \color{red}
For convenience, we assume that we have indicators $A$ and $C$ witnessing the assumption and commitment formulas $D_A$ and $D_C$ respectively.
A robustness criterion specifies under what condition on assumption the commitment must hold.

There can be several \emph{robustness criteria}. For example, the $BeCorrect(A,C)$ criterion requires
that for any point in execution,  if $A$ is continuously true at all points in past then $C$ must hold
at the current point. Traditional controller synthesis follows this criterion.
A more relaxed criterion is $BeCurrentlyCorrect(A,C)$ which states that if assumption is true at
current point in execution, then commitment must be true at the current point. (Both these, and several other
robustness criteria will be formally defined in this section). 
}
}
\oomit{
\begin{definition}[Robustness Criterion and Robust Specification]
Let $A$ and $C$ be indicators for assumption formula $D_A$ and the commitment formula $D_C$, respectively, over the input and output variables $I \cup O$ with indicator definitions $Inddef(D_A,A)$ and $Inddef(D_C,C)$. 

A \emph{Robustness criterion} is a QDDC formula $\Rb(A)$ over the  indicator proposition $A$, specifying the relaxed assumption under which commitment must hold. 
Every robustness criterion $\Rb(A)$ gives rise to its corresponding \emph{Robust Specification}. Robust specification is a \DCSYNTH specifications $(I,(O \cup A \cup C), (\Rb(A) \Rightarrow EP(C)), ~\langle C \rangle)$ together with $Inddef(D_A,A)$ and $Inddef(D_C,C)$ as indicator definitions. 
This \DCSYNTH specification is denoted by \RbSpec$(D_A, D_C)$.

\end{definition}

 We denote the \MPNC for \emph{Robust specification} $(D_A, D_C, \Rb(A))$  by \MPNC$(D_A, D_C, \Rb(A))$ and \GODSC by \GODSC$(D_A,D_C, \Rb(A),H)$ respectively.
}
\oomit{
{\color{red}
For example, the $BeCorrect(A)$ denotes the robustness criterion $[[A]]$ (i.e. $A$ holds invariantly in past) whereas the $BeCurrentlyCorrect(A)$ denotes the robustness criterion $EP(A)$ (currently $A$).
Intuitively, a controller with specification $BeCurrentlyCorrectSpec(A,C)$ will be more robust than a controller with specification $BeCorrectSpec(A,C)$.
}
}

\subsection{Synthesis from Robust Specification}
\vspace*{-0.2cm}
The robust controller synthesis method supported by tool \DCSYNTH is as follows. 
\begin{itemize}
\vspace*{-0.2cm}
\item The user provides the \emph{Robust specification} $(D_A,D_C,Rb(A))$. The user also provides a horizon value $H$.
\item The corresponding \DCSYNTH specification is automatically obtained as outlined in Equation \ref{eq:rbspec}.
The tool \DCSYNTH is used to obtain $\MPNC$ and $\GODSC$ supervisors
as described in Section \ref{section:dcsynth-spec}. These supervisors are denoted as  $\MPNC(D_A,D_C,Rb(A))$ and $\GODSC(D_A,D_C,Rb(A),H)$ respectively. 
Both these supervisors may be output non-deterministic.
The reader should recall that the $\MPNC$ supervisor only guarantees the hard-robustness, whereas  the $\GODSC$ supervisor further improves 
$\MPNC$ by optimizing the soft-robustness (while retaining hard-robustness guarantee).

\item  The user specifies an output order $ord$ as a prioritized list of output literals. This is used to get the deterministic controllers $Det_{ord}(\MPNC(D_A,D_C,Rb(A)))$ and $Det_{ord}(\GODSC(D_A,D_C,Rb(A),H))$, respectively. 
\end{itemize}

\oomit{
\begin{definition}[Universal Robustness Order]
\label{def:universalRobustnessOrder}
Given robustness criteria $Rb^1(A)$ and $Rb^2(A)$,
we say that $Rb^1(A) \leq_{robust} Rb^2(A)$ iff $ \models_{dc} Rb^1(A) \implies Rb^2(A)$.
\end{definition}
}

\oomit{
We propose to use the notion of \emph{must dominance}(definition \ref{def:mustDominance}) to compare between two given supervisors $Sup_1$ and $Sup_2$ w.r.t. a \qddc formula $D_C$, which is satisfied intermittently by both these supervisors. For example the \emph{must dominance} $Sup_1 \leq_{must}^{D_C} Sup_2$ checks that commitment guarantee provided by supervisor $Sup_2$ is at least as good as the commitment guarantee provided by $Sup_1$.  
}

\oomit{
In the rest of this section, we shall propose basic design principles for specification of \emph{Robustness criterion} and give several robustness criteria for controller synthesis. We also establish robustness order between various \emph{Robustness criterion} for qualitative analysis.
}

\oomit{
\subsection{Robust Controllers and Quality Measurement}

A controller is a function $S : (2^I)^+ \rightarrow (2^O)$ i.e. based on past input sequence
it produces current output. 
Equivalently, $S(I_1,\ldots,I_n)$ $=$ $(O_1,\ldots, O_n)$, the sequence of outputs produced by $S$ for given sequence of inputs. 
This also extends uniquely to infinite input sequences. A behaviour can also be rewritten as
$\rho = (I_1 \cup O_1, I_2 \cup O_2, \ldots)$. With some abuse of notation, we will not distinguish
between these equivalent notions of behaviours.

Consider a behaviour $\rho$ of a controller as infinite sequence of observations,
i.e., $\rho \in (2^{I \cup O})^\omega$. Normally, controller is designed to function under some
environmental and plant assumptions, and it is meant to provide guarantees.
Let $A$ be the conjunction of assumptions, and $C$ the conjunction of guarantees.
We assume that $A$ and $C$ are \emph{bounded past time} temporal properties specified in \qddc, 
i.e. $\rho, j \models A$
depends only on $\rho[i,j] \models A$ for $i \leq j$. Moreover, we assume that these assumptions and
commitments are intermittent, allowing for the possibility of recovering from
their failure at a position $j$. The pair $(A, C)$ is called the requirement.

\begin{example}
\textbf{Bus Arbiter}: Let's take an $n$ cell bus arbiter,
which a assumption represented by $A$ that, At most two requests are true in current cycle.
The set of commitments represented by $C$ are Mutual exclusion, nospurious and response time of 2 cycles.
\end{example}

In this example, the assumption is a propositional formula which depends
only on current cycle input. Another variant assumption could be that above
propositional assumption holds continuously for last 3 cycles. This is also a past
time temporal property which is intermittent.

The aim is to design a controller which is \emph{guaranteed} to meet the commitments whenever assumptions are \emph{true}. There are different ways of interpreting the above broad goal. In a strict sense, this can be interpreted
as $G(pref(A) ) C)$. 
We explain the notation below: $\rho, i \models pref(A)$ iff
$\forall j \leq i: \rho, j \models A$. 
Moreover $\rho \models G\phi$ iff $\forall i \in dom(\rho): \rho, i \models \phi$. Thus
formula $G(pref(A) ) C)$ states for any position (also called cycle) that if assumption has been invariantly true in the past, then current position will satisfy
$C$. Note that if assumption $A$ becomes $false$ at some cycle $i$ then commitment is
not required to hold from then on-wards. 
This is a very weak form of guarantee
called $BeCorrect(A, C)$.

Bloem et al argue that controllers should deliver better guarantees than just
BeCorrect. They should try to meet the commitments \emph{as much as possible}
even when assumptions are $false$, and it should be possible to recover from
intermittent failure of assumptions in bounded time. Such controllers are called
\emph{robust controllers}. 

There has been extensive study of algorithms and techniques for automatically synthesizing BeCorrect controllers from temporal logic based specifications
$(A, C)$. These studies mostly focus on the efficiency of the synthesis algorithm.
However, some recent papers do propose algorithms for the synthesis of robust
controllers. Unfortunately, these papers only focus on the efficiency of controller
synthesis and little attention is paid to rigorously analyzing the \emph{quality} of the
synthesized controller. Some times measures such as syntactic size of the controller are considered, but here we focus on the quality of robustness exhibited
by the observable behaviour of the controller.

The central question considered in this paper is the following: given two
controllers $S1$ and $S2$ for requirement $(A, C)$, how do we compare the quality
of two controllers from the point of view of robustness? We shall assume that
both controllers satisfy the $BeCorrect(A, C)$ property. This BeCorrect property
can be validated by model checking, or by synthesizing the controllers by a
correct-by-construction synthesis method. But the two controllers can differ
significantly in their robustness.

Let us assume that we are given an evaluator $Eval(\rho,[0 : n], A, C) \in \mathit{R}$
which for a given behaviour prefix provides a measure of how \emph{good} is
the behaviour prefix. 
For example, $rfreq(\rho, A, C) = (\#\{i \in [0, \#\rho - 1] ~\bar{|}~ \rho, i \models C)$ $/\#\rho$ measures fraction of cycles for which commitment holds during a behaviour.

\begin{itemize}
\item \textbf{Universal Dominance} 
\item \textbf{Worst Case Dominance}
\item \textbf{Expected Dominance}
\end{itemize}

In universal dominance, the controller $S_2$ is required to outperform $S_1$ on every
input sequence. In worst case dominance, the worst case performance of $S_1$ is
less than the worst case performance of $S_2$. In expected dominance, the expected
(mean) value of performance of $S_1$ on inputs of length $n$ is less than the expected
value of performance of $S_2$ on inputs of length $n$. If this holds for almost all $n$
then $S_2$ is better than $S_1$.

In the paper, we give techniques for establishing worst case dominance as
well as expected dominance between controllers. For worst case dominance, we
consider properties called \emph{worst case latency} of $D$ which are defined as infimum/supremum of span $j-i$ of intervals $[i, j]$ satisfying property $D$ in some execution of $S$. A symbolic search techniques for finding lengths of longest/smallest paths in automata can be used to determine the worst case latency. For determining expected dominance, we resort to statistical sampling techniques that
are widely used in performance measurement.

We consider the case studies of controllers for Synchronous Bus Arbiter, and
Mine Pump, to illustrate our approach. Several different controllers for each of
these problems are compared for their robustness using above two techniques.
}

\subsection{Designing and Comparing Robustness Criteria}
A key element of our robust specification $(D_A,D_C,Rb(A))$ is the
robustness criterion $Rb(A)$. It provides a generic method of relaxing
the assumption $D_A$. In this section we study how to compare two
robustness criteria. We also formulate some methods for systematic design of robustness criteria. 

Given robustness criteria $Rb_1(A)$ and $Rb_2(A)$, we say that
$Rb_1(A)$ implies $Rb_2$ provided 
$\models_{dc} Rb_1(A) \Rightarrow Rb_2(A)$. This gives us the {\bf implication ordering} on robustness criteria. The following theorem shows that implication ordering improves the worst case guarantee of $D_C$ holding but it makes supervisors less realizable.
\begin{theorem}
\label{lem:robustnessMPNCLemma}
Let  $\models_{dc} \Rb^1(A) \Rightarrow \Rb^2(A)$.  Then for all \qddc\/ formulas
$D_A,D_C$, we have \\
(a) $\MPNC(D_A,D_C,\Rb^1(A)) ~~\leq_{must}^{D_C}~~ 
\MPNC(D_A,D_C,\Rb^2(A))$. \\
(b) If the specification $(D_A,D_C,\Rb^2(A))$ is realizable then the specification \\ $(D_A,D_C,\Rb^1(A))$ is also realizable.
\end{theorem}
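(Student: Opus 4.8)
The plan is to reduce the statement to the two monotonicity facts already available: $\MPNC$-monotonicity (Proposition \ref{prop:mpncSupervisor}) and the fact that the determinism order refines must-dominance (Proposition \ref{prop:detmust}). The only genuine work is to track how the hypothesis $\models_{dc} Rb^1(A) \Rightarrow Rb^2(A)$ propagates through the cascade composition and then through the Boolean shape of the hard requirement fixed in Equation \ref{eq:rbspec}.

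First I would note that cascade composition with a fixed indicator definition is monotone in its first argument. Unfolding Definition \ref{def:indDef}, $Rb^i(A) \ll Ind(D_A,A) = Rb^i(A) \land pref(EP(A) \Leftrightarrow D_A)$, so $\models_{dc} Rb^1(A) \Rightarrow Rb^2(A)$ immediately yields $\models_{dc} (Rb^1(A) \ll Ind(D_A,A)) \Rightarrow (Rb^2(A) \ll Ind(D_A,A))$. Writing $X_i = Rb^i(A) \ll Ind(D_A,A)$ and $D^h_i = (X_i \Rightarrow D_C)$ for the two hard requirements obtained from Equation \ref{eq:rbspec}, the validity $\models_{dc} X_1 \Rightarrow X_2$ gives, by weakening the antecedent of an implication, $\models_{dc} (X_2 \Rightarrow D_C) \Rightarrow (X_1 \Rightarrow D_C)$, that is $\models_{dc} D^h_2 \Rightarrow D^h_1$. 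The reversal of indices here is the one place where it is easy to slip: because $Rb^1$ lies below $Rb^2$ in the implication order, the corresponding hard requirement $D^h_1$ is the \emph{stronger} (more restrictive) one.

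Next I would apply Proposition \ref{prop:mpncSupervisor} with the instantiation $D_1 := D^h_2$ and $D_2 := D^h_1$, taking the input--output partition to be $(I, O \cup \{A\})$ (the proposition applies verbatim with $O$ replaced by $O \cup \{A\}$). Its first clause gives $\MPNC(D^h_1) \leq_{det} \MPNC(D^h_2)$, which by the naming convention of Equation \ref{eq:rbspec} is exactly $\MPNC(D_A,D_C,Rb^1(A)) \leq_{det} \MPNC(D_A,D_C,Rb^2(A))$; part (a) then follows from Proposition \ref{prop:detmust} instantiated at $D := D_C$, since $\leq_{det}$ implies $\leq_{must}^{D_C}$. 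For part (b), the second clause of Proposition \ref{prop:mpncSupervisor} says that realizability of $\MPNC(D_1) = \MPNC(D^h_2)$ implies realizability of $\MPNC(D_2) = \MPNC(D^h_1)$; unfolding the definition of realizability of a \DCSYNTH\ specification (realizability of the $\MPNC$ of its hard requirement), this is precisely the assertion that realizability of $(D_A,D_C,Rb^2(A))$ implies realizability of $(D_A,D_C,Rb^1(A))$.

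I expect no substantial obstacle: the proof is essentially bookkeeping on top of cited results. The only delicate point, already flagged above, is keeping the direction of the implications straight as they pass through the cascade composition and through the antecedent position of the hard requirement; once $\models_{dc} D^h_2 \Rightarrow D^h_1$ is in hand, both parts are immediate consequences of Propositions \ref{prop:mpncSupervisor} and \ref{prop:detmust}.
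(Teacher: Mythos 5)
Your proposal is correct and follows essentially the same route as the paper's own proof: unfold the cascade composition into $Rb^i(A) \land pref(EP(A) \Leftrightarrow D_A)$, derive the reversed implication $\models_{dc} D^h_2 \Rightarrow D^h_1$ between the hard requirements by antecedent weakening, and then invoke Proposition \ref{prop:mpncSupervisor} followed by Proposition \ref{prop:detmust}. Your explicit tracking of the index reversal and of the alphabet $(I, O \cup \{A\})$ is a welcome clarification of details the paper leaves implicit, but it is the same argument.
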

\begin{proof}
 For $1 \leq i \leq 2$, we have
 $\MPNC(D_A,D_C,Rb^i(A))$ equals $\MPNC(\phi_i)$ where 
 $\phi_i =  ((Rb^i(A) \land pref(EP(A) \Leftrightarrow D_A)) \Rightarrow D_C) $.
 Now, as $\models_{dc} Rb^1(A) \Rightarrow Rb^2(A)$, we have
 $\models_{dc}  (Rb^2(A) \Rightarrow D_C)  ~~\Rightarrow~~  (Rb^1(A) \Rightarrow D_C)$
 and hence  $\models_{dc}  ((Rb^2(A) \land D) \Rightarrow D_C)  ~~\Rightarrow~~  ((Rb^1(A) \land D) \Rightarrow D_C)$ for any formula $D$.
 Thus $\models_{dc} \phi_2 \Rightarrow \phi_1$. Then, by Proposition
 \ref{prop:mpncSupervisor}, we have $\MPNC(\phi_1) \leq_{det} \MPNC(\phi_2)$. From this, by applying
 proposition \ref{prop:detmust}, we get the desired result 
  $\MPNC(\phi_1) \leq^{D_C}_{must} \MPNC(\phi_2)$. We also get (b). \qed
 \end{proof}

\paragraph{\textbf{Error-Types: Defining Intervals with Assumption Errors.}} Let proposition $A$ designate points in behaviour where assumption is satisfied. In terms of this, we define Erroneous intervals by giving Error-type formulas. For examples, intervals with at least three violations of $A$ can be given by (\verb#scount !A >=3#).
\begin{itemize}
 \item Intervals where $A$ does not hold at the end-point.\\
     \verb# LocalErr(A) = (true^<!A>)# 
 \item Intervals having more that $k$ assumption violations (\verb#!A#).\\
       \verb#CountErr(A,k) = (scount !A > k)#.
 \item Burst error interval is an interval where the assumption proposition \verb#A# remains false invariantly. So, the intervals where there is burst error of length more than $k$ is given by the formula \verb#BurstErr(A,k)# = \verb#([[!A]] && slen >= k)#. 
The intervals containing a subinterval with \verb#BurstError(A,k)# are given by \\
       \verb#HasBurstErr(A,k) = (<>(BurstErr(A,k)))#.
 \item We define an interval to be a recovery interval, if $A$ is invariantly $true$ in that interval. Its length is
 called \emph{recovery period}. Intervals where all recovery periods are of length less than $b$ is denoted by\\
  \verb#HasNoRecovery(A,b) =# \verb#([]([[A]] => slen < b-1))#\\
  Given an error-type formula $Err$ (one of the above items), 
  we define \emph{recovery error} as the intervals having underlying error $Err$, but without a recovery interval of length $b$. \\
       \verb#RecoveryErr(b,Err) = (Err && HasNoRecovery(A,b))#

\oomit{
 \item Interval where there number of cycles between two consecutive errors two consecutive errors is less than $k$. \\
  \verb#SepErr(A,k) (((<!(-A)>)^[[A]]^(<+(!A)>)) && (slen < k-1))#
 \item Interval which have subintervals with SepErr(A,k). \\
 \verb#HasSepErr(A,k) (<>(SepErr(A,k)))#
 }
\end{itemize}
Each formula in the above list is called an {\em Error-type} formula.

\begin{proposition}
\label{prop:errorprop1}
For all Error-type formulas $Err, Err_1$ and $Err_2$ we have \\
 (a) $\mathit{\models_{dc} BurstErr(A,k)} \Rightarrow HasBurstErr(A,k)$\\
 (b) $\mathit{\models_{dc} HasBurstErr(A,k) \Rightarrow CountErr(A,k)}$\\ 
 (c) If $j < k$ then  $\mathit{\models_{dc} CountErr(A,k) \Rightarrow CountErr(A,j)}$ and \\
 \hspace*{1cm} $\mathit{\models_{dc} HasBurstErr(A,k) \Rightarrow HasBurstErr(A,j)}$  \\
 (d) $\mathit{\models_{dc} RecoveryErr(b,Err) \Rightarrow Err}$\\
 (e) If $\mathit{\models_{dc} Err_1 \Rightarrow Err_2}$ then \\
 \hspace*{1cm}$\mathit{\models_{dc} RecoveryErr(b,Err_1) \Rightarrow RecoveryErr(b,Err_2)}$
\end{proposition}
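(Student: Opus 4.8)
The plan is to verify all six claims directly from the \qddc semantics of Section~\ref{section:qddc}, noting that every error-type formula occurring here is built only from $\langle\varphi\rangle$, $[[\varphi]]$, chop, the terms $\mathit{slen}$ and $\mathit{scount}$, and Boolean connectives, and that each claim is a validity $\models_{dc}D_1\Rightarrow D_2$, i.e.\ $\sigma,[0,\len{\sigma}-1]\models D_1$ implies $\sigma,[0,\len{\sigma}-1]\models D_2$ for every word $\sigma$. I will freely use that $\langle\rangle D$ holds on an interval iff $D$ holds on some subinterval of it (dual to the stated semantics of $[\,]$).

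Parts (a), (d) and (e) are immediate. For (a), every interval is a subinterval of itself, so $\mathit{BurstErr}(A,k)$ on $[b,e]$ yields $\langle\rangle\mathit{BurstErr}(A,k)=\mathit{HasBurstErr}(A,k)$ on $[b,e]$. For (d), $\mathit{RecoveryErr}(b,\mathit{Err})$ is by definition the conjunction of $\mathit{Err}$ with $\mathit{HasNoRecovery}(A,b)$, hence it entails its left conjunct $\mathit{Err}$. For (e), replacing a conjunct by a logically weaker one preserves the conjunction, so $\models_{dc}\mathit{Err}_1\Rightarrow\mathit{Err}_2$ lifts to $\models_{dc}\mathit{RecoveryErr}(b,\mathit{Err}_1)\Rightarrow\mathit{RecoveryErr}(b,\mathit{Err}_2)$. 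Part (c) is arithmetic monotonicity of the thresholds: from $\mathit{scount}\ !A>k$ and $k\ge j$ (since $j<k$) we get $\mathit{scount}\ !A>j$, which is the first implication; and any subinterval witnessing $\mathit{BurstErr}(A,k)$ --- i.e.\ satisfying $[[!A]]$ with $\mathit{slen}\ge k$ --- also satisfies $[[!A]]$ with $\mathit{slen}\ge j$, so it witnesses $\mathit{BurstErr}(A,j)$ and hence $\mathit{HasBurstErr}(A,k)\Rightarrow\mathit{HasBurstErr}(A,j)$.

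The part that needs genuine care is (b). The argument I would give: if $\sigma,[0,n]\models\mathit{HasBurstErr}(A,k)$ then some subinterval $[b',e']\subseteq[0,n]$ satisfies $[[!A]]$ and has $\mathit{slen}\ge k$; by the semantics of $[[!A]]$ the proposition $!A$ holds at every one of the $e'-b'+1$ points of $[b',e']$, and $\mathit{slen}\ge k$ means $e'-b'\ge k$, so at least $k+1$ points of $[0,n]$ satisfy $!A$. Since $\mathit{scount}\ !A$ over $[0,n]$ counts every position of $[0,n]$ --- the last point included --- where $!A$ holds, and $[b',e']\subseteq[0,n]$, it is at least $e'-b'+1\ge k+1>k$; thus $\sigma,[0,n]\models\mathit{CountErr}(A,k)$. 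I expect the only real obstacle to be keeping the off-by-one bookkeeping straight ($\mathit{slen}$ measures $e'-b'$ whereas the point count is $e'-b'+1$, and $\mathit{scount}$ is inclusive of the endpoint); once that is pinned down the implication is forced, and every other part is pure Boolean or arithmetic reasoning.
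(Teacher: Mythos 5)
Your proof is correct and follows the same route as the paper: direct unfolding of the \qddc semantics, with (d) and (e) reduced to Boolean reasoning about the conjunction defining $\mathit{RecoveryErr}$, (a) and (c) to reflexivity of the subinterval relation and threshold monotonicity, and (b) to the inclusive point-count of $\mathit{scount}$ versus $\mathit{slen}=e'-b'$. The paper only exhibits (d) and omits the rest as routine; you supply those omitted details, including the off-by-one bookkeeping in (b), correctly.
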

\begin{proof} 
\vspace*{-0.3cm}
These implications hold straightforwardly using the \qddc semantics. For example,
$\mathit{RecoveryErr(b,Err)}$ = $\mathit{HasNoRecovery(A,b)}$ $\&\&$ $Err$  (by definition). This logically implies $Err$ giving us (d). We omit the remaining proofs.\qed
\end{proof}

\paragraph{\textbf{Error-Scope: Restricting the Occurrence of Errorneous Intervals}} 
Let $Err$ be any of the Error-type formulas giving erroneous intervals. 
We now specify the restrictions on occurrence of these errors  by
suitable \qddc formulas. These are called {\em Error-scope} formulas. They are
parameterized by $Err$.
\begin{itemize}
 \item The error never occurs in past for any interval. \\
           \verb#NeverInPast(Err) = !<>Err#.
 \item Error does not occur in any suffix interval. \\
           \verb#NeverInSuffix(Err) = !(true^Err)#
 \item Error does not occur in past in any interval of 
 $b$ or less cycles. \\
           \verb#NeverInPastLen(b,Err) = !<>(slen <= b-1 && Err)#
 \item Error does not occur in an interval spanning last $b$ or less cycles. \\ 
         \verb#NeverInSuffixLen(b,Err) = !(true^((slen<=b-1) && Err))#
\end{itemize}

\begin{proposition} 
\label{prop:errorprop2}
For any $Error\verb#-#type$ formula $Err, Err_1$ and $Err_2$ we have \\
(a) $\mathit{\models_{dc} NeverInPast(Err)~\Rightarrow~NeverInSuffix(Err)}$ \\
(b) $\mathit{\models_{dc} NeverInPastLen(b,Err)~\Rightarrow ~ NeverInSuffixLen(b,Err)}$ \\
(c) $\mathit{\models_{dc} NeverInPast(Err) ~ \Rightarrow ~NeverInPastLen(b,Err)}$ \\
(d) $\mathit{\models_{dc} NeverInSuffix(Err)~\Rightarrow ~NeverInSuffixLen(b,Err)}$\\
(e) For any scope formula $SCP(Err)$ defined above, we have \\
\hspace*{1cm} if $\mathit{\models_{dc} Err_1 \Rightarrow Err_2}$, then $\mathit{\models_{dc} SCP(Err_2) \Rightarrow SCP(Err_1)}$ 
\end{proposition}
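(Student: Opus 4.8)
The plan is to derive all five parts directly from the \qddc\/ semantics, arguing by contraposition and relying on two elementary observations. Recall that $\sigma,[b,e] \models \langle\rangle D$ holds iff some sub-interval $[b',e']$ with $b \le b' \le e' \le e$ satisfies $D$, whereas $\sigma,[b,e] \models (true\verb|^|D)$ holds iff some \emph{suffix} sub-interval $[i,e]$ satisfies $D$. The two observations are: (i) every suffix sub-interval $[i,e]$ of $[b,e]$ is in particular a sub-interval of $[b,e]$, and since $true$ holds on $[b,i]$ and on the point interval $[e,e]$, any suffix interval satisfying $D$ also witnesses $\langle\rangle D = true\verb|^|D\verb|^|true$; and (ii) the chop operator is monotone in each argument, hence so are $\langle\rangle(\cdot)$ and $(true\verb|^|(\cdot))$, so each of the four scope operators --- in which $Err$ occurs under exactly one negation --- is antitone in $Err$.

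For (a) I would argue in contrapositive form: if $\mathit{NeverInSuffix(Err)}$ fails on some $\sigma,[b,e]$, then $\sigma,[b,e] \models true\verb|^|Err$, so some suffix $[i,e]$ satisfies $Err$; by observation (i) this suffix also witnesses $\langle\rangle Err$, so $\mathit{NeverInPast(Err)} = \lnot\langle\rangle Err$ fails there too. Part (c) is the same move: if $\langle\rangle(slen \le b-1 \land Err)$ holds, the witnessing sub-interval satisfies $Err$, hence $\langle\rangle Err$ holds, and negating yields $\models_{dc}\mathit{NeverInPast(Err)} \Rightarrow \mathit{NeverInPastLen(b,Err)}$. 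Parts (b) and (d) repeat these two arguments with the conjunct $slen \le b-1$ kept verbatim on both sides: for (b) a suffix of length $\le b-1$ satisfying $Err$ is, by (i), a sub-interval of length $\le b-1$ satisfying $Err$; for (d) a suffix witness of $slen \le b-1 \land Err$ is in particular a suffix witness of $Err$.

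For (e) I would invoke the antitonicity from observation (ii): if $\models_{dc} Err_1 \Rightarrow Err_2$ then $\models_{dc} \langle\rangle Err_1 \Rightarrow \langle\rangle Err_2$ by monotonicity, hence $\models_{dc} \lnot\langle\rangle Err_2 \Rightarrow \lnot\langle\rangle Err_1$, which is $\models_{dc} \mathit{NeverInPast(Err_2)} \Rightarrow \mathit{NeverInPast(Err_1)}$; for $\mathit{NeverInPastLen}$ one first notes $\models_{dc} (slen \le b-1 \land Err_1) \Rightarrow (slen \le b-1 \land Err_2)$ and then repeats the chain; for the two suffix variants one replaces $\langle\rangle(\cdot)$ by $(true\verb|^|(\cdot))$ throughout and argues identically.

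I expect no genuine obstacle: this is routine \qddc\/ reasoning. The one point that needs a moment of care is observation (i) --- that a suffix interval really does witness $\langle\rangle Err$ --- which hinges on the trailing $true$ in $true\verb|^|Err\verb|^|true$ being satisfied by the degenerate point interval $[e,e]$. Beyond that, the only mild tedium is carrying the uniform argument pattern across the four scope formulas.
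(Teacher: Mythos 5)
Your proposal is correct and follows essentially the same route as the paper: parts (a)--(d) rest on the observation that a suffix witness of $Err$ (or of $slen \le b-1 \;\&\&\; Err$) is also a sub-interval witness, i.e.\ $(true\verb|^|Err)$ implies $(true\verb|^|Err\verb|^|true) = \langle\rangle Err$ via the trailing point interval, and (e) follows from monotonicity of chop under the outer negation. The paper spells out only (a) and (c) and omits the rest as routine, which your uniform treatment fills in correctly.
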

\begin{proof}
\vspace*{-0.3cm}
The proofs of these implications are immediate from definitions using \qddc semantics. For example,
($true\verb#^#Err$) implies  ($true\verb#^#Err\verb#^#true$) which equals $\langle\rangle Err$ by \qddc. Hence, $\mathit{NeverInPast(Err)}$ which equals
$!\langle\rangle Err$ implies $\mathit{!(true}\verb#^#\mathit{Err})$ which equals $\mathit{NeverInSuffix(Err)}$. This gives us (a). 

As a second example, $\mathit{NeverInPast(Err)}$ $=\mathit{(!\langle\rangle (Err))}$ (by definition),  which implies  $\mathit{!\langle\rangle((slen \leq (b-1)) ~\&\&~ Err)}$. This, by definition, equals the formula, $\mathit{NeverInPastLen(b,Err)}$. Hence we get (c). We omit the remaining proofs.   \qed
\end{proof}

\begin{table}
\caption{Robustness criteria $Rb(A)$ defined using Error-types and Error-Scope formulas. There may use additional integer parameters $K,B$.}
\label{tab:robustformula}

\begin{tabular}{|c|c|}
\hline
Robustness   &  Definition     
\\
 Criterion  &   of $Rb(A)$ 
\\ \hline\hline

AssumeFalse(A) &
\verb#(false)#
\\ \hline

BeCorrect(A)  			& \verb#NeverInPast(LocalErr(A))#
\\ \hline

BeCurrentlyCorrect(A)		&  
\verb#NeverInSuffix(LocalErr(A))#
\\ \hline

ResCnt(A,K,B)	& 
\verb#(NeverInPast(RecoveryErr(B, CountErr(A,K))))#

\\ \hline 

ResCntInt(A,K,B) &
\verb#(NeverInSuffix(RecoveryErr(B, CountErr(A,K))))#
\\ \hline

ResBurst(A,K,B)	& 
\verb#(NeverInPast(RecoveryErr(B, HasBurstErr(A,K))))#

\\ \hline 
ResBurstInt(A,K,B) &
\verb#(NeverInSuffix(RecoveryErr(B, HasBurstErr(A,K))))#
\\ \hline

LenCnt(A,K,B) &
\verb#(NeverInPastLen(B,CountErr(A,K)))#
\\ \hline

LenCntInt(A,K,B)			&  
\verb#(NeverInSuffixLen(B,CountErr(A,K)))#
\\ \hline 

LenBurst(A,K,B) &
\verb#(NeverInPastLen(B,HasBurstErr(A,K)))#
\\ \hline
LenBurstInt(A,K,B)			&  
\verb#(NeverInSuffixLen(B,HasBurstErr(A,K)))#
\\ \hline 

AssumeTrue(A) &
\verb#(true)#
\\ \hline

\end{tabular}
\end{table}

We combine various {\em Error-type} formulas with {\em Error-scope} formulas (defined above) to obtain a wide variety of {\em Robustness Criteria}. These are tabulated in Table  \ref{tab:robustformula}. These provide users with diverse ways of achieving robust specification. We
give a brief intuitive description of some of the defined criteria.
\begin{itemize}
 \item Criterion \verb#BeCorrect(A)# holds at a point if \verb#!A# never occurs in its past. By contrast \verb#BeCurrentlyCorrect(A)# holds at a point (irrespective of the past) if \verb#A# holds at the point. Thus, the later is implied by the former.
 \item Criterion \verb#LenCntInt(A,K,B)# with integer parameters $K,B$ holds at a point $i$ provided in last $B$ cycles from $i$, violation \verb#!A# occurs at most $k$ times. The past beyond last $B$ cycles does not affect its truth.
 \item A resilient controller recovers from past errors provided \verb#A# holds continuously for $B$ cycles (such an interval is called a recovery interval). The criterion
 \verb#ResCntInt(A,K,B)# holds at a point provided after the last recovery interval the count of violation of $A$ is at most $k$.
 \item A burst error of length $K$ is said to occurs if \verb#!A# occurs continuously for $K$ points.
 Criterion \verb#LenBurstInt(A,K,B)# holds at point $i$ provided in last $B$ cycles before $i$ there is no burst error of length $K$.
\end{itemize}

The robustness criteria can be ordered by {\bf implication ordering}
which also improves the hard robustness of the synthesized controller (See Theorem \ref{lem:robustnessMPNCLemma}).
Figure \ref{fig:robustnessHierarchy} gives the implication ordering between the robustness criteria of Table \ref{tab:robustformula}.  Hence, in specification $(D_A,D_C,\Rb(A))$, the user must use the weakest criterion (under the implication ordering)  from the Figure \ref{fig:robustnessHierarchy} which makes the specification realizable. 

\begin{figure}
\centering
\begin{tikzpicture}[scale=0.9]
	\node[ellipse] (12) at (3, 2) {AssumeTrue};
	\node[ellipse] (11) at (3,1) {LenBurstInt};
    \node[ellipse] (10) at (0,0) {LenCntInt};
    \node[ellipse] (9) at (5,0) {ResBurstInt};
    \node[ellipse] (8) at (8,1) {BeCurrentlyCorrect};
    \node[ellipse] (7) at (0, -2)  {ResCntInt};
    \node[ellipse] (6) at (5, -1) {ResBurst$\leftrightarrow$LenBurst};
    \node[ellipse] (5) at (5, -2) {LenCnt};
    \node[ellipse] (4) at (5, -2.9) {ResCnt};
    \node[ellipse] (2) at (7, -3.6) {BeCorrect};
    \node[ellipse] (1) at (7, -4.5) {AssumeFalse};
	\draw [->] (1) edge (2);
    \draw [->] (2) edge (4);
    \draw [->] (4) edge (5);
    \draw [->] (4) edge (7);
    \draw [->] (5) edge (6);
    \draw [->] (5) edge (10);
    \draw [->] (7) edge (9);
    \draw [->] (7) edge (10);
    \draw [->] (2) edge (8);
    \draw [->] (6) edge (9);
    \draw [->] (9) edge (11);
    \draw [->] (10) edge (11);
    \draw [->] (11) edge (12);
    \draw [->] (8) edge (12);
    
\end{tikzpicture}
\caption{Implication order on the robustness criteria of Table \ref{tab:robustformula}. Here $A \rightarrow B$ denotes the validity 
$\models_{dc} A \Rightarrow B$.}
\label{fig:robustnessHierarchy}
\end{figure}
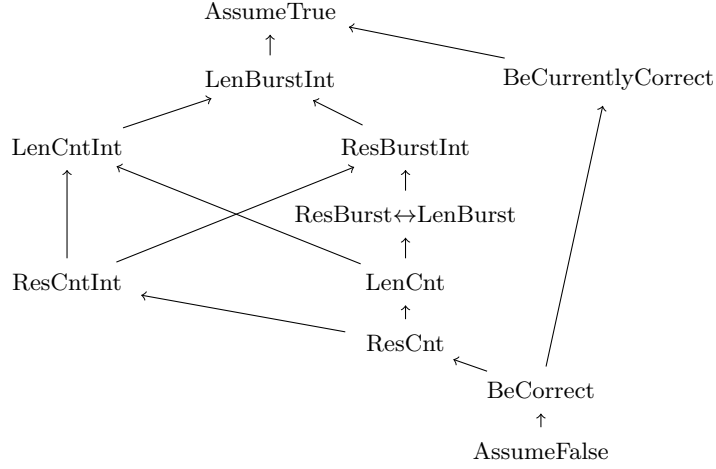

\vspace*{-0.3cm}
\begin{theorem}
 All the implications given in Figure \ref{fig:robustnessHierarchy} are valid.
\end{theorem}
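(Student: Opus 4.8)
The plan is to reduce the statement to a finite check, one edge of Figure~\ref{fig:robustnessHierarchy} at a time: the implication ordering on robustness criteria is transitive, so validity of every drawn edge gives validity of every implication the diagram asserts. First I would clear the two extremal edges, which are trivial: $\texttt{AssumeFalse}(A)$ abbreviates \texttt{false} and hence implies every criterion (this is the edge into \texttt{BeCorrect}), while $\texttt{AssumeTrue}(A)$ abbreviates \texttt{true} and is implied by every criterion (the edges out of \texttt{LenBurstInt} and \texttt{BeCurrentlyCorrect}).

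Next I would dispatch the routine middle edges purely from Propositions~\ref{prop:errorprop1} and~\ref{prop:errorprop2}. The organising fact is that every Error-scope constructor is antitone in its Error-type argument (Proposition~\ref{prop:errorprop2}(e)). Thus the edges that replace a count error by a burst error under a fixed scope, namely $\texttt{LenCnt}\to\texttt{LenBurst}$, $\texttt{LenCntInt}\to\texttt{LenBurstInt}$ and $\texttt{ResCntInt}\to\texttt{ResBurstInt}$, follow from $\texttt{HasBurstErr}(A,K)\Rightarrow\texttt{CountErr}(A,K)$ (Proposition~\ref{prop:errorprop1}(b)), pushed through \texttt{RecoveryErr} by Proposition~\ref{prop:errorprop1}(e) where needed, together with the antitonicity above; and the edges that only strengthen an ``in past'' scope to the matching ``in suffix'' or windowed scope on a fixed error, namely $\texttt{ResCnt}\to\texttt{ResCntInt}$, $\texttt{BeCorrect}\to\texttt{BeCurrentlyCorrect}$, $\texttt{ResBurst}\to\texttt{ResBurstInt}$ and $\texttt{LenCnt}\to\texttt{LenCntInt}$, follow directly from Proposition~\ref{prop:errorprop2}(a)--(b) (using the \texttt{NeverInPast} form of \texttt{ResBurst} for the third).

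Three edges need a short direct argument on \qddc semantics. For $\texttt{BeCorrect}\to\texttt{ResCnt}$: if $\texttt{BeCorrect}(A)$ holds at a position then $A$ holds at every past point, so $\texttt{scount}\,{!}A=0$ on every past subinterval, so $\texttt{CountErr}(A,K)$ --- and a fortiori $\texttt{RecoveryErr}(B,\texttt{CountErr}(A,K))$ by Proposition~\ref{prop:errorprop1}(d) --- is false on every past subinterval, which is exactly $\texttt{ResCnt}(A,K,B)$. For the trio $\texttt{ResCnt}\to\texttt{LenCnt}$, $\texttt{ResCntInt}\to\texttt{LenCntInt}$ and $\texttt{ResBurstInt}\to\texttt{LenBurstInt}$ I would isolate one auxiliary lemma: for $Err\in\{\texttt{CountErr}(A,K),\texttt{HasBurstErr}(A,K)\}$ we have $\models_{dc}\,(\texttt{slen}\le B-1\ \&\&\ Err)\Rightarrow\texttt{RecoveryErr}(B,Err)$, since any interval of length at most $B-1$ satisfying $Err$ contains a position where ${!}A$ holds --- so the interval is not itself $A$-invariant --- and each of its \emph{proper} subintervals is strictly shorter, hence of length $<B-1$, so $\texttt{HasNoRecovery}(A,B)$ holds on it. Feeding this implication into the antitone \texttt{NeverInPast}/\texttt{NeverInSuffix} scopes, and using monotonicity of $\langle\rangle$, converts the $\texttt{RecoveryErr}$-based ``\texttt{Res}'' criterion into the corresponding windowed ``\texttt{Len}'' criterion.

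What remains is the equivalence $\texttt{ResBurst}\leftrightarrow\texttt{LenBurst}$, and this is the step I expect to need the most care. Its $\Rightarrow$ half is the auxiliary lemma instantiated at $Err=\texttt{HasBurstErr}(A,K)$; its $\Leftarrow$ half holds under the natural parameter constraint $K\le B-1$, because then any interval satisfying $\texttt{HasBurstErr}(A,K)$ already contains a subinterval of length $K$ (hence of length at most $B-1$) on which ${!}A$ is invariant, which witnesses the windowed burst error. For $K>B-1$ the two criteria come apart --- \texttt{LenBurst} becomes trivially true while \texttt{ResBurst} need not be --- so the merged node must be read in the non-degenerate regime $K\le B-1$, and I would state this side-condition explicitly. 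Once every edge is established, transitive closure of the diagram finishes the proof.
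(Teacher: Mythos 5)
Your proposal is correct and follows the same overall strategy as the paper's proof, namely checking Figure~\ref{fig:robustnessHierarchy} edge by edge using Propositions~\ref{prop:errorprop1} and~\ref{prop:errorprop2} and closing under transitivity; but the paper only works out two representative edges ($\mathit{BeCorrect}\Rightarrow\mathit{BeCurrentlyCorrect}$ and $\mathit{ResCntInt}\Rightarrow\mathit{ResBurstInt}$) and omits the rest, whereas you supply exactly the pieces those two propositions do not cover. Your auxiliary lemma $\models_{dc} ((slen \leq B-1) ~\&\&~ \mathit{Err}) \Rightarrow \mathit{RecoveryErr}(B,\mathit{Err})$ is what the edges from the $\mathit{Res}$-criteria to the corresponding $\mathit{Len}$-criteria genuinely require, and your argument for it is sound: $\mathit{Err}$ forces a $!A$ point, so the interval itself cannot satisfy $[[A]]$, and every proper subinterval has length strictly below $B-1$, giving $\mathit{HasNoRecovery}(A,B)$. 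Likewise the direct argument for $\mathit{BeCorrect}\Rightarrow\mathit{ResCnt}$ is needed, since $\mathit{RecoveryErr}(B,\mathit{CountErr}(A,K))$ does not imply $\mathit{LocalErr}(A)$ and so Proposition~\ref{prop:errorprop2}(e) does not apply to that edge directly. Most valuably, you correctly observe that the merged node $\mathit{ResBurst}\leftrightarrow\mathit{LenBurst}$ (and with it the incoming edge from $\mathit{LenCnt}$) is only valid in the regime $K \leq B-1$: for $K > B-1$ the formula $\mathit{LenBurst}$ is vacuously true while $\mathit{ResBurst}$ fails, e.g.\ on a word where $!A$ holds throughout. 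The paper asserts this equivalence with no side condition (its experiments all use parameters in the non-degenerate regime), so your explicit caveat is a genuine, if minor, correction rather than a gap in your own argument.
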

\begin{proof}
\vspace*{-0.3cm}
The proofs of these implications follow easily from the definitions of the robustness criteria by using propositions \ref{prop:errorprop1} and \ref{prop:errorprop2}. As an example, we prove that $\models_{dc} BeCorrect(A) \Rightarrow BeCurrentlyCorrect(A)$. 

Formula $\mathit{BeCorrect(A)}$ equals (by definition) $\mathit{NeverInPast(LocalErr(A))}$, which by Proposition \ref{prop:errorprop2}(a) implies $\mathit{NeverInSuffix(LocalErr(A))}$ which by definition equals
$\mathit{BeCurrentlyCorrect(A)}$, thus proving the claim.

As a second example, we prove that 
$\mathit{\models_{dc} ResCntInt(A)}$ $\Rightarrow$ $\mathit{ResBurstInt(A)}$.\\
We have
$\mathit{\models_{dc} HasBurstErr(A,k)}$ $\Rightarrow$ $\mathit{CountErr(A,k)}$ by Proposition \ref{prop:errorprop1}(b).
 Thus, $\mathit{\models_{dc} RecoveryErr(b,HasBurstErr(A,k))}$ $\Rightarrow$ 
$\mathit{RecoveryErr(b,CountErr(A,k))}$ by Prop \ref{prop:errorprop1}(e).
Then, using Prop. \ref{prop:errorprop2}(e) and instantiating $SCP$ by
$\mathit{NeverInSuffix}$, we get 
$\mathit{\models_{dc} NeverInSuffix(RecoveryErr(b,CountErr(A,k))})$ 
$\Rightarrow$ \\
$\mathit{NeverInSuffix(RecoveryErr(b,HasBurstError(A,k)))}$. 
This proves the result. We omit the remaining proofs. 
\qed
\end{proof}

\oomit{
We now briefly describe few of these robustness criteria given in table \ref{tab:robustformula} and the corresponding specification.
\begin{itemize}
\item {\bf BeCorrect(A):} states that assumptions are never violated in past. The corresponding Robust specification ($D_A,D_C,BeCorrect(A)$) stats that if assumption has held invariantly so far then commitment must hold now. Most of the synthesis methods use this kind of specification, which is not a very robust specification, as the formula $\textit{BeCorrect(A)}$ invariantly evaluates to false after any assumption violation.

 \item  {\bf BeCurrentlyCorrect(A)} This robustness criterion is an intermittent version \textit{BeCorrect(A)}. It checks whether the assumption is violated in the current cycle. Hence, in the controller for corresponding robust specification the commitment will hold whenever assumption holds.
 It may be noted that this is a much robust criterion than $BeCorrect(A)$ as the satisfaction of commitment $D_C$ does not depend upon the past violations of assumption.
 
 \item  {\bf ResBurstInt(A,K,B)} This criterion stats that the maximum length of burst error (Assumption violation) is at-most $K$, between the last two recovery intervals of period $B$. 
 
 \item  {\bf LenCntInt(A,K,B)} This criteria states that, the number of assumption violations are less than $K$, in an interval of spanning last $B$ or less cycles.
\end{itemize}
We omit the detailed explanation of other robustness criteria given in Table \ref{tab:robustformula}, as they are also straight forward to understand from their formulation.  
}

\oomit{
Now we consider various robustness criterion \textbf{\Rb(A)} specified by the combination of Error types and scope of its occurrence 
as shown in the Table~\ref{tab:robustformula}. Each of these robustness criterion tries to tolerate one or more error type (pattern of assumption violation). The goal of robust synthesis is to meet the commitments with the error tolerant assumption.
 
\begin{itemize}
\item {\color{blue} {\bf $BeCorrect(A)$:} states that assumptions are never violated in past. The corresponding Robust specification ($D_A,D_C,BeCorrect(A)$) stats that if assumption has held invariantly so far then commitment must hold now. It may be noted that this Robustness criterion is non-intermittent because as soon as $BeCorrect(A)$ gets violated it will never be satisfied again. Most of the synthesis methods use this kind of specification, which is not a very robust specification. 

 \item  {\bf BeCurrentlyCorrect(A)} This robustness criterion is an intermittent version \textit{BeCorrect(A)}. It checks whether the assumptions are violated in this cycle. The corresponding robust specification $(D_A,D_C,BeCurrentlyCorrect(A))$  states that commitment must hold whenever assumption holds.
 It may be noted that this is a much robust specification that $D_A,D_C,BeCorrect(A)$ as the satisfaction of commitment $D_C$ does not depend upon the past violations of assumption.

\oomit{ 
 \item {\bf Degraded-Performance} Let \verb#Ad# $\subseteq$ \verb#A# and  \verb#Cd# $\subseteq$ \verb#C#, where
 \verb#Ad,Cd# denote reduced set of assumptions and commitments which specify degraded behaviour of system when fewer assumptions
 hold.
 
\item \textbf{Resilient synthesis} requires synthesis of controller which works under weaker assumptions than Be-Correct notion in order to tolerate errors 
in assumptions. For example, BeCurrentlyCorrect requires commitment to hold now if assumption holds now irrespective of whether it has held in past.
Several other notions of resilience are given below. A notion of $k,b$-resilience was proposed by Ehler and Topcu \cite{ET14}, 
and further adapted by Bloem as $k$-robustness \cite{BKKW15}.

\begin{itemize}

 \item {\bf $K,B-SepBurst$} If in any past period, the \emph{maximum} period of continuous assumption violation is of length $B$ and \emph{minimum} period for which assumption holds (once it is satisfied) is of length $K$, then the commitment should hold in present cycle.
}

\item {\bf ResCnt(A,K,B)} Let a sub-interval where assumption is continuously true for $B$ or more cycles be called a \emph{recovery interval} of length $B$. 
 The robustness criterion \textit{KBResCnt(A)} states that between \emph{any two} recovery interval of length $B$, the maximum number of assumption violations is at most $K$. 
\item {\bf ResCntInt(A,K,B)} \textit{ResCntInt(A,K,B)} is an intermittent version of \textit{ResCnt(A,K,B)}, in which we state that, between the \emph{last two} recovery periods,  maximum number of assumption violations are at most $K$. This robustness criteria is intermittent as it depends only on the assumption violation in last two recovery periods, instead of whole past.  
 
 \oomit{
 \item {\bf $k$-Atmost} If in past assumptions have been violated at most $k$ times so far then commitment must be met. 
 \item {\bf $K,B$-Variant-1} If in any past period of length $B$ cycles the assumption has been violated at most $K$ times then the commitment must hold in present cycle. The problem is this specification requires atleast $B$ cycles to determine whether guarantees are to be met. Therefore, it does not have
 any guarantee in first $B$ cycles.
 \item {\bf $K,B$-Variant-3} Another interesting variant is that, if in any period if the number of assumption violations are less that $k$ or the assumptions are being continuously satisfied for atleast $B$-cycles then the commitment should be met in this cycle.
 }

 \item  {\bf ResBurstInt(A,K,B)} This robustness criterion is a variant of \textit{ResCntInt(A,K,B)}, it stats that the length of burst error (Assumption violation) is at-most $K$, instead of number of errors between two recovery periods. 
 
 \item  {\bf LenCntInt(A,K,B)} This robustness criteria states that, the number of assumption violations are less than $K$, in an interval of spanning last $B$ or less cycles.
 
 \item  {\bf LenBurstInt(A,K,B)} This robustness criteria states that, the maximum length of burst error (assumption violations) is less than $K$ cycles, in an interval of spanning last $B$ or less cycles.

The non-intermittent versions {\bf ResBurst(A,K,B)}, {\bf LenCnt(A,K,B)} and {\bf LenBurst(A,K,B)} can be easily understood from the description of corresponding intermittent versions.
 }
\end{itemize}
}
\oomit{
\begin{table}
\caption{Robust synthesis criteria and their specifications.}
\label{tab:robustformula}
In the table \verb#A# is an indicator for conjunction of assumptions
and \verb#C# denotes indicator for conjunction of commitments. Thus, \verb#!A# denotes violation of at least one assumption.
\begin{tabular}{|c|c|}
\hline
Robustness   &  Robustness   
\\
 Notion  &   Criteria  

\\ \hline\hline
Be-Correct(A,C)  			&  \verb#pref(A)#	
\\ \hline

$K,B$-Sep(A,C,K,B)			& 
\verb#([](([[!A]] => slen <= K-1) && #
\\& \verb#(((<!(-A)>)^[[A]]^(<+(!A)>)) => (slen >= B-1)))) # 
\\ \hline 

$K,B$-Resilient(A,C,K,B)			& 
\verb#(!(true ^ ((scount !A > K) && #
\\ & \verb#[]([[A]] => slen<B-1))^true)) # 
\\ \hline 

$K,B$-ResilientInt(A,C,K,B) &
\verb#(!(true ^ ((scount !A > K) && #
\\ & \verb#[]([[A]] => slen<B-1)))) #

\\ \hline

$K,B$-SepInt(A,C,K,B) &
\verb#(!(true ^ (  (<>( [[!A]] && slen > K-1)) && #
\\
 & \verb#([]([[A]] => slen<B-1)))))#

\\ \hline

$K,B$-Variant-2(A,C,K,B)			&  \verb#((slen<(B-1) && (scount !A <= K)) || #
\\& \verb#( true ^ (slen=(B-1) && (scount !A <= K))))# 
\\ \hline 

Be-Currently-Correct(A,C)		& \verb#EP(A)#  
\\ \hline

\oomit{
$k,b$-Variant-3(A,C,k,b)			&   \verb#([]((slen=b-1 && scount !A <= k) => EP(C))) &&# 
\\  
& \verb# pref((slen<b && scount !A < k) => EP(C))#  
\\ \hline  

$k,b$-Sep(A,C,k,b)			& \verb#([]( [[!A]] => slen <= k-1) &&#  
\\
& \verb#[]((!<-A>)^[[A]]^(!<+A>) => slen >= b-1) => EP(C))# 
\\ \hline 

Degraded-Performance(A,C,Ad,Cd)		& \verb#((EP(A) => EP(C)) && (EP(Ad) => EP(Cd))# 
\\ \hline

Greedy(C)				&		\verb#true#	
\\ \hline
}

\end{tabular}

\end{table}

Note that all the above Robustness notions are only based on the \emph{Hard Requirements} and hence they 
have an associated \emph{Guarantee}. Now we define notions based on \emph{Soft Requirements}. 
\begin{itemize}

\item {\bf Never-Give-Up} 
In addition to \emph{Hard Requirements}  to be met, in Never-Give-Up all the commitments are asserted as \emph{Soft Requirements}(given by \verb#EP(C)#, where \verb#C# denotes the indicator for conjunction of commitments). This directs the
 synthesizer attempt to make the commitments true even when assumptions do not hold, and for as many inputs as possible.
 
 Note that criterion such as \emph{Never-Give-Up} can be combined with any hard requirement based robustness notion. We denote the \emph{Never-Give-Up} version of robustness for a given robustness notion \verb#RbSpec# by \verb#RbSpec-Soft# e.g. the \emph{Never-Give-Up} version of \verb#Be-Correct# is denoted by \verb#Be-Correct-Soft#. 
We will show the effect of \emph{soft requirements} on 
robust controller synthesis using \verb#Never-Give-Up# version for each hard requirement based robustness notion defined in table \ref{tab:robustformula}. Moreover, designer may selectively apply these criteria to specific assumptions and commitments. \DCSYNTH permits full flexibility in making such choices.

 \item {\bf Greedy} Here commitments are given as soft requirements ignoring the assumptions. The synthesis algorithm tries to make as many commitments
 true as possible irrespective of whether the assumption are met or not.
 This notion does not give any formal Guarantee as 
 there are no \emph{Hard Requirements} specified explicitly, but it will be the most robust controller.

\end{itemize}
}

\oomit{
\begin{lemma}

  We have \verb#Be-Correct(A)# $\implies$ \verb#K,B-SepBurst(A)# $\implies$ \verb#K,B-ResCnt(A)# $\implies$ \verb#K,B-ResCntInt(A)# $\implies$ \verb#K,B-ResBurstInt(A)#, Hence the Robustness order is \verb#Be-Correct(A)# $\leq_{robust}$ \verb#K,B-SepBurst(A)# $\leq_{robust}$ \verb#K,B-ResCnt(A)# $\leq_{robust}$ \verb#K,B-ResCntInt(A)# $\leq_{robust}$ \verb#K,B-ResBurstInt(A)#
\end{lemma}

\begin{lemma}
  We have \verb#Be-Correct(A)# $\implies$ \verb#K,B-SepBurst(A)# $\implies$ \verb#K,B-ResCnt(A)#  $\implies$
\verb#K,B-ResCntInt(A)#  $\implies$ \verb#K,B-VarInt(A)#, Hence the Robustness order is \verb#Be-Correct(A)# $\leq_{robust}$ \verb#K,B-SepBurst(A)# $\leq_{robust}$ \verb#K,B-ResCnt(A)# 
$\leq_{robust}$ \verb#K,B-ResCntInt(A)#
$\leq_{robust}$ \verb#K,B-VarInt(A))#
\end{lemma}

\begin{lemma}
 We have \verb#Be-Correct(A)# $\implies$ \verb#Be-CurrentlyCorrect(A)#, hence \verb#Be-Correct(A)# $\leq_{robust}$ \verb#Be-CurrentlyCorrect(A)#. However, \verb#Be-CurrentlyCorrect(A)# is incomparable with other Robustness criteria and hence there does not exist an order between \verb#Be-CurrentlyCorrect(A)# and other robustness notions.

\end{lemma}
}
\oomit{
Now we formally establish the ordering(implication) relation between various robustness criterion for a given Assumption indicator $A$ and constants $k$ and $b$, with following proposition.
}
\oomit{
{\color{red}
\begin{lemma}
{\color{black}
The robustness order for various robustness criterion given in table \ref{tab:robustformula} follows the order shown in figure \ref{fig:robustnessHierarchy}. 
A Robustness criterion $\Rb^2(A)$, which is more robust that $\Rb^1(A)$ is shown in the figure by and arrow from $\Rb^1(A)$ to $\Rb^2(A)$. Thus, the arrows in the figure depicts the implication relation between the Robustness criterion.
}
\oomit{
\begin{enumerate}

\item \verb#BeCorrect(A)# $\leq_{robust}$ \verb#K,B-SepBurst(A)# $\leq_{robust}$ \verb#K,B-ResCnt(A)# $\leq_{robust}$ \verb#K,B-ResCntInt(A)# $\leq_{robust}$ \verb#K,B-ResBurstInt(A)#

\item \verb#BeCorrect(A)# $\leq_{robust}$ \verb#K,B-SepBurst(A)# $\leq_{robust}$ \verb#K,B-ResCnt(A)# 
$\leq_{robust}$ \verb#K,B-ResCntInt(A)#
$\leq_{robust}$ \verb#K,B-LenCntInt(A))#

\item \verb#BeCorrect(A)# $\leq_{robust}$ \verb#K,B-SepBurst(A)# $\leq_{robust}$ \verb#K,B-ResCnt(A)# $\leq_{robust}$ \verb#K,B-LenCnt(A)# $\leq_{robust}$
\verb#K,B-ResBurst(A)# $\leq_{robust}$ \verb#K,B-ResBurstInt(A)#

\item \verb#BeCorrect(A)# $\leq_{robust}$ \verb#K,B-SepBurst(A)# $\leq_{robust}$ \verb#K,B-ResCnt(A)# $\leq_{robust}$ \verb#K,B-LenCnt(A)# $\leq_{robust}$
 \verb#K,B-LenCntInt(A)#

\item \verb#BeCorrect(A)# $\leq_{robust}$ \verb#BeCurrentlyCorrect(A)#. However, \verb#BeCurrentlyCorrect(A)# is incomparable with other Robustness criteria 
\end{enumerate}
}
\begin{proof}
The proof will be carried out in two parts.
First we will establish the order between all types of $KB$ robustness criterion.
We formulate any $KB$ robustness criterion as a tuple (INTERVAL, TYPE, RECOVERY), where the INTERVAL talks about whether the error interval is a fixed length interval (denoted by $Len$ i.e. interval of length $B$ or less cycle) of recovery interval (denoted by $Res$ i.e. interval with recovery period of $B$ length), the TYPE talks about whether the error is a burst error (denoted by $Burst$ i.e. burst error of length $K$) or counting error (denoted by $Cnt$ i.e. number of error in the interval is $K$) and RECOVERY talks about whether the robustness criteria is intermittent (denoted by $Int$ i.e. criterion gets satisfied intermittently) or not. Based on this formulation we can encoded all the $KB$ robustness criterion e.g. $KBResBurstInt$ = $(Res, Burst, Int)$ for a given $K$ and $B$.
Now we following can be stated for all $X \in \{Res, Len\}$, $Y \in \{Cnt, Burst\}$ and $Z \in \{Int, NonInt\}$:\\
(a) $(Res, Y, Z) \Rightarrow (Len, Y, Z)$ \\
(b) $(X, Cnt, Z) \Rightarrow (X, Burst, Z)$ \\
(c) $(X, Y, NonInt) \Rightarrow (X, Y, Int)$ \\
\end{proof}
\end{lemma}
}
}
\oomit{
{\color{red} The arrows in the Figure \ref{fig:robustnessHierarchy} shows the implication relation between the relaxed assumptions $Rb(A)$. Therefore the bottom of the hierarchy starts with $AssumeFalse(A)$ which implies every other relaxed assumption. Hence, the controller synthesized with relaxed assumption $AssumeFalse(A)$ may never meet the commitment and hence the least robust. At the top of the hierarchy we have $AssumeTrue$ and using this relaxed assumption gives the most robust controller which has to meet the commitment irrespective of the assumption is being satisfied or not. However, the specification $RbSpec(AssumeTrue(A), C)$ is unrealizable in most of the cases.
We therefore compare the controllers based on $KBLenBurstInt$ and $BeCurrentlyCorrect$ relaxed assumptions, which are incomparable and generate different controllers for corresponding $RbSpec$. Both of these are the most robust controller in terms of must dominance.}
}

\vspace*{-0.6cm}
\section{Case Study: A Synchronous Bus Arbiter}
\label{section:motivation}


An $n$-cell synchronous bus arbiter has inputs $\{ req_i\}$  and outputs $\{ack_i\}$ 
where $1 \leq i \leq n$.  In any cycle, a subset of $\{req_i\}$ is true and the 
controller must set one of the corresponding $ack_i$ to true. 
The arbiter \textbf{commitment},  {\bf $ArbCommit(n,k)$}, is the conjunction of the following four properties.
\begin{small}

\begin{equation}
\label{eq:ArbProperties}
 \begin{array}{l}
  Mutex(n) \df true\textrm{\textasciicircum} \langle ~\land_{i \neq j} ~\neg ( ack_i \land ack_j) ~ \rangle \\
  NoLoss(n) \df true\textrm{\textasciicircum} \langle ~~ (\lor_i req_i) \Rightarrow (\lor_j ack_j) ~\rangle \\
  NoSpurious(n) \df true\textrm{\textasciicircum} \langle ~~\land_i ~(ack_i \Rightarrow req_i) ~~\rangle \\
Response(n,k) = (\land_{1 \leq i \leq n} ~(Resp(req_i,ack_i,k)) \quad \mbox{where} \\
 \quad Resp(req,ack,k) = 
true\textrm{\textasciicircum}(([[req]]\ \&\&\ (slen =(k-1)))~\Rightarrow \\
   \hspace*{2cm} true\textrm{\textasciicircum}(scount\ ack > 0 \ \&\&\ (slen =(k-1)))   
 \end{array}
\end{equation}

\end{small}
In \qddc\/, 
The formula 
$true\textrm{\textasciicircum} \langle P \rangle$ holds at a point $i$ in execution if the $P$ holds at that point. 
Thus, the formula $Mutex(n)$ gives mutual exclusion of acknowledgments; $NoLoss(n)$ states that
if there is at least one request then there must be  an acknowledgment; and 
$NoSpurious$(n) states that acknowledgment is only given to a requesting cell.
Formula $true\textrm{\textasciicircum}(([[req]]\ \&\&\ (slen =(k-1)))$ states that in the 
last $k$ cycles $req$ is invariantly true. Similarly,
the formula $true\textrm{\textasciicircum}(scount\ ack > 0 \ \&\&\ (slen =(k-1)))$  states that in 
last $k$ cycles the $ack$ has been $true$ at least once. Then, the formula $Resp(req,ack,k)$  states that if $req$ has been continuously true in last $k$
cycles, there must be at least one $ack$ within last $k$ cycles.
So, $Response(n,k)$ says that each cell requesting continuously for last $k$ cycles must get an acknowledgment within last $k$ cycles.

A controller can invariantly satisfy $ArbCommit(n,k)$ if $n \leq k$. Tool  \DCSYNTH gives us
a concrete controller for the instance  ($D^h=ArbCommit(6,6)$, $D^s=true$).
It is easy to see that there is no controller which can invariantly satisfy $ArbCommit(n,k)$ if $k < n$.
Consider the case when all $req_i$ are continuously true. Then, it is not possible to give response to
every cell in less than $n$ cycles due to mutual exclusion of $req_i$. 

To handle such desirable but unrealizable requirement we make an assumption. Let the proposition
$Atmost(n,i)$ be defined as $\forall S \subseteq  \{1 \ldots n\}, |S| \leq i. ~~  \land_{j \notin S} \neg req_j$.  It states that  at most $i$ out of total $n$ requests can be true simultaneously. Then, the {\bf arbiter assumption} is the formula
{\bf $ArbAssume(n,i)$} = $true\verb|^|\ang{Atmost(n,i)}$, which states that $Atmost(n,i)$  holds at the current cycle.

So the specification for the synchronous arbiter is an assumption-commitment pair $(ArbAssume(n,i), ArbCommit(n,k))$, which is denoted by {\bf $Arbiter(n,k,i)$}. 
Figure \ref{figure:arbiterspecification} in Appendix \ref{section:arbiterTextualSpecification}  gives the textual syntax for the robust specification
$(Arbiter(4,3,2))$ with $BeCurrentlyCorrect(A)$ criterion in tool \DCSYNTH.

\subsubsection{Expected Case Performance}
We can  measure the performance of a synthesized controller $Cnt$ in meeting a regular requirement $D$ by evaluating $E_{D}(Cnt)$, the expected value of $D$ holding in the behaviours of $Cnt$ in long run under random (i.e. iid) inputs \cite{Bel57}. The tool \DCSYNTH\/ allows $E_{D}(Cnt)$ to be computed as outlined below.

The product $Cnt \times A(D)$ of the controller with the deterministic recognizer $A(D)$ for formula $D$ gives us a deterministic recognizer automaton which is in a final state precisely on inputs where $Cnt$ satisfies the property $D$. We can translate this into a {\em Discrete Time Markov Chain}, denoted 
$M_{unif}(Cnt  \times A(D))$, by assigning uniform discrete probabilities to all the inputs from any state. In constructing this Markov chain, we have assumed that the inputs are $iid$, i.e. they occur independently of the past, and all inputs are equally likely. Standard techniques from Markov chain analysis allow us to compute the {\em Expected value} of being in accepting states in long runs
of $M_{unif}(Cnt  \times A(D))$, which also gives us $E_D(Cnt)$. A leading probabilistic model checking tool MRMC implements this computation \cite{KZHHJ11}. In \DCSYNTH, we provide a facility to compute $M_{unif}(Cnt  \times A(D))$ in a format accepted by the tool MRMC. Hence, using MRMC, we are able to compute $E_D(Cnt)$ \qed

\vspace{-0.3cm}
\subsection{Experimental Results}
\label{sec:casestudy}

\oomit{
Every case study involves a basic specification $(D_A,D_C)$ consisting of user specified assumption and commitment formulas. Given a robustness criterion $Rb(A)$ and default output ordering $ord$, we can synthesize
a determinized $\MPNC$ controller called $\detMPNC$ which enforces the \emph{hard robustness}. We can also synthesize a determinized $\GODSC$ controller which further refines $\MPNC$ by optimizing \emph{soft robustness} $D_C$. The reader may refer to Section 
\ref{sec:robustsynthmethod} for the synthesis method.
}

The synchronous bus arbiter case study specifies the assumption-commitment pair $(D_A,D_C)$ for an $Arbiter(n,k,i)$ consisting of $n$-cells with response time requirement of $k$ cycles, under the assumption that at most $i$ requests occur in each cycle. 
We denote by $\detMPNC(Arb)$ and $\detGODSC(Arb)$ the $\MPNC$ and the 
$\GODSC$ for $Arbiter(4,3,2)$ determinized under the default output ordering \verb#a1 > a2 > a3 > a4#. We use $H=50$ in synthesizing the $\GODSC$. Controllers were synthesized for various robustness criteria, and  their performance was measured as the corresponding expected values $E_{D_C}(\detMPNC(Arb))$ and $E_{D_C}(\detGODSC(Arb))$.
Table \ref{tab:arbmineExpectedValues} provides these values under the
columns titled $E(Arb-\MPNC)$ and $E(Arb-\GODSC)$, respectively.

Similarly, for a case study of a Minepump Controller specification
(omitted here for brevity but it can be found in Appendix \ref{section:minepumpcasestudy}), we synthesized the
determinized $\MPNC$ and $\GODSC$ controllers under the default output $PumpOn$ and various robustness criteria. We measured the performance of these controllers as the expected values $E_{D_C}(\detMPNC(MP))$ and $E_{D_C}(\detGODSC(MP))$. Table \ref{tab:arbmineExpectedValues} provides these values under the
columns titled $E(MP-\MPNC)$ and $E(MP-\GODSC)$, respectively.

In all the above cases, the tool \DCSYNTH performs efficiently by giving the required controllers for Arbiter within 1 seconds and for Minepump within 3 seconds. The detailed statistics for each of the example can be found in Appendix \ref{section:arbiterTextualSpecification} and \ref{section:minepumpcasestudy}. All these experiments were done on Ubuntu 18.04 system with Intel i5 64 bit, 2.5 GHz processor and 4 GB memory.

\begin{table}
\caption{Expected Value of Commitment $D_C$ holding in Long Run for Controllers synthesized under various Robustness Criteria and integer parameters (K,B).
}
\label{tab:arbmineExpectedValues}
\begin{tabular}{|c|c|c||c|c|c|}
\hline

 \multicolumn{3}{|c||}{Arbiter(4,3,2)}
& \multicolumn{3}{c|}{Minepump(8,2,6,2)}
\\
\hline
Robustness & $E(ARB-$ & $E(ARB-$
& Robustness & $E(MP-$ & $E(MP-$
\\

Criterion & $\MPNC)$ & $\GODSC)$
& Criterion & $\MPNC)$ & $\GODSC)$
\\

\hline

 AssumeFalse & \multirow{6}{*}{0.000000} &	\multirow{10}{*}{0.998175}
& AssumeFalse	& \multirow{6}{*}{0.000000} & \multirow{11}{*}{0.997070}  
 \\
\cline{1-1}
\cline{4-4}
BeCorrect &	 &	 
& BeCorrect	& 	 & 	
\\
\cline{1-1}
\cline{4-4}


		
 ResCnt(1,3) &	 &	\oomit{0.992286 &	0.687500}
 & ResCnt(2,8)	& 	& 	 
 
\\
\cline{1-1}	
\cline{4-4}


LenCnt(1,3)	&   & \oomit{0.992286	 &	0.687500}
& LenCnt(2,8) & &
\\
\cline{1-1}	
\cline{4-4}
ResBurst(1,3) &  & \oomit{ 0.992286 &	0.687500}
& ResBurst(2,8) & &	
\\
\cline{1-1}	
\cline{4-4}


LenBurst(1,3)	& & \oomit{0.000000  &	0.687500 &0.992286	 &	0.687500}
& LenBurst(2,8) & &
\\
\cline{1-2}	
\cline{4-5}
		
 ResCntInt(1,3)	& 0.544309 &
 	\oomit{\multirow{2}{*}{0.992286} }
& ResCntInt(2,8)	& \multirow{2}{*}{0.000966}	& 	
\\
\cline{1-2}	
\cline{4-4}

 
 ResBurstInt(1,3) & 0.669069  & \oomit{ 0.992286 &	0.687500}
& ResBurstInt(2,8) & &
\\
\cline{1-2}	
\cline{4-5}


 LenCntInt(1,3)	& 0.768066 &	
& LenCntInt(2,8)	& 0.0027342	&	 
 \\
\cline{1-2}
\cline{4-5}	

 LenBurstInt(1,3)	& 0.835205 & \oomit{	0.992286 &	0.687500}
 & LenBurstInt(2,8)	& 0.004514	& 	

\\
\cline{1-5}	

BeCurrentlyCorrect &	0.687500 &	0.992647 
&  BeCurrentlyCorrect &	0.997070	&	

\\
\hline

		
\end{tabular}
\end{table}

An examination of Table \ref{tab:arbmineExpectedValues} is quite enlightening. We state our main finding.
\begin{itemize}
 \item In both the case studies, the robustness criterion $AssumeTrue$ leads to unrealizable specification whereas all other criteria give realizable specifications.
 \item In both the case studies, for the $\MPNC$ controllers, the Expected value of commitment $D_C$ increases  with the implication ordering given in Figure \ref{fig:robustnessHierarchy}. The value ranges from $0$ to $83\%$ for the Arbiter and $0$ to $99\%$ for the Mine pump. Thus, the robustness criterion has a
 major effect on the performance of the synthesized $\MPNC$ controller. Also, for many robustness criteria, the Expected $D_C$ value is $0$. This happens as these criteria (defined using the Error-Scope formulas $\mathit{NeverInPast}$) are non-recoverable -- once the criterion becomes false it remains false in future. Hence, it is desirable to use recoverable criteria defined using the Error-scope formulas $\mathit{NeverInSuffix}$. 
 \item The $\GODSC$ controller which improves the $\MPNC$ controller by optimizing the
 soft requirement $D_C$ has an overwhelming impact on the expected value of $D_C$. In both the case studies, the value is above $99\%$ irrespective of the robustness criterion.
 Thus, soft-robustness vastly improves the expected performance of the controller and should be preferred over $\MPNC$.
 \item We often get  $\GODSC$ controllers with the same/similar expected value of $D_C$ for several robustness criteria. It should be noted that these are different controllers providing  distinct hard-robustness guarantees. For example, in the $Arbiter(4,3,2)$ case study, the controllers  $BeCurrentlyCorrect$ is must-incomparable with all the others, whereas for $Minepump(8,2,6,2)$, this is must equivalent but not identical with any of the other other controllers. Hence, the circumstances (relaxed assumptions) under which they {\em guarantee} $D_C$ are quite different. This shows that a combination of the hard and the soft robustness, as supported by our tool \DCSYNTH, is useful. 
 \item We have also experimented with various default output orderings. Default values have quite small impact on the expected values of $D_C$ in $\GODSC$. However, they significantly impact the no. of states in resulting controller.
\end{itemize}

\vspace*{-0.5cm}
\section{Contributions and Related Work}
\label{section:discussion}
A robust controller should continue to function (i.e maintain its
commitment) under failure of environmental/plant assumptions. When such failures are transient, the controller should be able to recover from the failure by reestablishing the commitment in bounded time \cite{MRT11,BEJK14}.

The main contribution of this paper is a logical framework of hard robustness for specifying and {\em relaxing} (weakening) assumptions under which a controller works. A formula based technique for relaxing any user specified regular assumption is developed. As the experiments show, such weakening has
a marked impact on the Expected value of Commitment.
Soft robustness pertains to the ability of the controller to maintain commitment ``as much as possible'' irrespective of any assumption; this is an optimization problem. We have proposed a method for synthesizing a controller which guarantees the hard robustness, and it optimizes the soft robustness. With case studies, we have shown the impact of hard and soft robustness on both the worst case and the expected behaviour of the controller. These experiments show that the combination of hard and soft robustness, as proposed here, is beneficial. Logic \qddc\/, based on Duration Calculus of Zhou et al \cite{ZH04}, provides a very powerful vocabulary for stating robustness properties.

Several authors have investigated the notion of robustness \cite{BEJK14,MRT11,ET14}. Bloem {\em et al} \cite{BEJK14} provide a classification of different robustness notions; including the \verb#BeCorrect# and \verb#BeCurrentlyCorrect# criteria. They term our soft robustness as ``Never Give Up'' notion. Ehler et al  \cite{ET14} and Bloem \cite{BKKW15} address the notion of resilience; an ability to recover from errors in bounded error-free period. We have shown that many such notions can be specified in our framework. We also give a method to compare robustness notions by an implication order and we use a must dominance order to compare the worst case behaviour of corresponding supervisors. Moreover, a uniform synthesis method is  applied to these defined notions.

Traditional reactive synthesis has only focused on ``correct-by-construction'' from LTL properties. Recent work increasingly considers regular properties (see  Belta {\em et al}  \cite{UB19}). Controller Synthesis from regular properties was pioneered by Ramadge and Wonham \cite{RW87,RW89,LRT17}. Ehler {\em et al}  \cite{ELTV17} compares the supervisory control with the reactive synthesis framework. Here we enhance the Ramadge-Wonham framework to robustness.

\bibliographystyle{plain}
\bibliography{awRef1}

\clearpage
\appendix
\section{Arbiter \DCSYNTH Specification and Robust controllers}
\label{section:arbiterTextualSpecification}
This section gives the \DCSYNTH specification for $BeCurrentlyCorrect$ robust specification of Arbiter and robust controller synthesis form this specification.
\vspace*{-0.5cm}
\begin{figure}[!b]
\caption{Arbiter specification in \DCSYNTH}
\label{figure:arbiterspecification}
\begin{scriptsize}
\framebox{\parbox[t][][t]{\columnwidth}{
$\begin{array}{l}
\mathrm{\textsf{\#qsf "arbiter"}}\\
\mathrm{\textsf{interface\{}}\\
\quad\mathrm{\textsf{input r1, r2, r3, r4;}}\\
\quad\mathrm{\textsf{output a1, a2, a3, a4, A,C;}}\\
\quad\mathrm{\textsf{constant n=3;}}\\
\mathrm{\textsf{\}}}\\
\mathrm{\textsf{definitions\{}}\\
\mathrm{\textsf{// Specification 1: The Acknowlegments shold be exclusive}}\\
\mathrm{\textsf{dc exclusion()\{}}\\
\quad\mathrm{\textsf{true\textasciicircum\textless (a1 =\textgreater !(a2 $||$ a3 $||$ a4)) \&\& (a2 =\textgreater !(a1 $||$ a3 $||$ a4)) \&\& }}\\
\quad\mathrm{\textsf{(a3 =\textgreater !(a1$||$a2$||$a4)) \&\& (a4=\textgreater !(a1$||$a2$||$a3)))\textgreater;}}\\
\mathrm{\textsf{\}}}\\
\mathrm{\textsf{dc noloss()\{}}\\
\quad\mathrm{\textsf{true\textasciicircum\textless (r1 $||$ r2 $||$ r3 $||$ r4) =\textgreater (a1 $||$ a2 $||$ a3 $||$ a4)\textgreater; }}\\
\mathrm{\textsf{\}}}\\
\mathrm{\textsf{//If bus access (ack) should be granted only if there is a request}}\\
\mathrm{\textsf{dc nospuriousack(a1, r1)\{}}\\
\quad\mathrm{\textsf{true\textasciicircum\textless (a1) =\textgreater (r1)\textgreater; }}\\
\mathrm{\textsf{\}}}\\
\mathrm{\textsf{//n cycle response i.e. slen=n-1}}\\
\mathrm{\textsf{dc response(r1,a1)\{}}\\
\quad\mathrm{\textsf{true\textasciicircum (slen=n-1 \&\& [[r1]])  =\textgreater true\textasciicircum(slen=n-1 \&\& (scount a1 \textgreater= 1)); }}\\
\mathrm{\textsf{\}}}\\
\mathrm{\textsf{dc ArbAssume\_4\_2()\{}}\\
\quad\mathrm{\textsf{true\textasciicircum\textless(!r1 \&\& !r2 \&\& !r3 \&\& !r4) $||$ (r1 \&\& !r2 \&\& !r3 \&\& !r4) $||$ }}\\
\quad\mathrm{\textsf{(!r1 \&\& r2 \&\& !r3 \&\& !r4) $||$ (!r1 \&\& !r2 \&\& r3 \&\& !r4) $||$  }}\\
\quad\mathrm{\textsf{(!r1 \&\& !r2 \&\& !r3 \&\& r4)  $||$ (r1 \&\& r2 \&\& !r3 \&\& !r4) $||$}}\\
\quad\mathrm{\textsf{(r1 \&\& !r2 \&\& r3 \&\& !r4) $||$ (r1 \&\& !r2 \&\& !r3 \&\& r4)  $||$}}\\
\quad\mathrm{\textsf{(!r1 \&\& r2 \&\& r3 \&\& !r4) $||$ (!r1 \&\& r2 \&\& !r3 \&\& r4) $||$  }}\\
\quad\mathrm{\textsf{(!r1 \&\& !r2 \&\& r3 \&\& r4) \textgreater;  }}\\

\mathrm{\textsf{\}}}\\

\mathrm{\textsf{dc guranteeInv()\{}}\\
\quad\mathrm{\textsf{exclusion() \&\& noloss(HH2O, HCH4, PUMPON) \&\& nospuriousack(a1,r1) \&\&}}\\ 
\quad\mathrm{\textsf{nospuriousack(a2,r2) \&\& nospuriousack(a3,r3) \&\& nospuriousack(a4,r4) \&\&}}\\ 
\quad\mathrm{\textsf{nospuriousack(a5,r5);}}\\ 
\mathrm{\textsf{\}}}\\

\mathrm{\textsf{dc guranteeResp()\{}}\\
\quad\mathrm{\textsf{response(r1,a1) \&\& response(r2,a2) \&\& response(r3,a3) \&\&}}\\ 
\quad\mathrm{\textsf{response(r4,a4) \&\& response(r5,a5);}}\\ 
\mathrm{\textsf{\}}}\\

\mathrm{\textsf{dc ArbCommit\_4\_3()\{}}\\
\quad\mathrm{\textsf{guranteeInv() \&\& guranteeResp() ;}}\\ 
\mathrm{\textsf{\}}}\\
\mathrm{\textsf{\}}}\\

\mathrm{\textsf{indefinitions\{}}\\
\quad\mathrm{\textsf{A : ArbAssume\_4\_2();}}\\
\quad\mathrm{\textsf{C : ArbCommit\_4\_3();}}\\
\mathrm{\textsf{\}}}\\
\mathrm{\textsf{hardreq\{}}\\
\quad\mathrm{\textsf{useind A, C;}}\\
\quad\mathrm{\textsf{BeCurrentlyCorrect(A) =\textgreater EP(C);}}\\
\mathrm{\textsf{\}}}\\
\mathrm{\textsf{softreq\{}}\\
\quad\mathrm{\textsf{useind C;}}\\
\quad\mathrm{\textsf{(C);}}\\
\mathrm{\textsf{\}}}\\
\end{array}$}
}
\end{scriptsize}
\end{figure}

\oomit{
\begin{verbatim}

m2l-str;
var2 r1, r2, r3, r4, r5, a1, a2, a3, a4, a5, ga3;
var1 i;

(
all2 a1N, a2N, a3N, a4N, a5N, ga3N:
((import("ArbHardAssume(5,3,2)/GODSC.dfa", 
       r1->r1, r2->r2, r3->r3, r4->r4, r5->r5, 
       a1->a1N, a2->a2N, a3->a3N, a4->a4N, a5->a5N, ga3->ga3N)
)=>  (i in ga3N) 
))
=>
(
(import("ArbHardAssumeSoft(5,3,2)/GODSC.dfa", 
       r1->r1, r2->r2, r3->r3, r4->r4, r5->r5, 
       a1->a1, a2->a2, a3->a3, a4->a4, a5->a5, ga3->ga3) 
)=>(i in ga3));

\end{verbatim}
}


\subsubsection{Comparison based on Must Dominance}
This section shows the comparison of supervisors $\MPNC$ and $\GODSC$ for various robust specifications $(D_A,D_C,\Rb(A))$, using must dominance measure formulated in definition \ref{def:mustDominance}.

As the supervisors are finite state mealy machines and a commitment $D_C$ is a regular property, we can use validity checking of \qddc to check whether 
$Sup_1 \leq_{must}^{D_C} Sup_2$ for supervisors $Sup_1$ and $Sup_2$.  In tool \DCSYNTH, we provide a facility to decide must-dominance between two supervisors. The tool also gives a  counter example if must dominance fails. Following are the major observations.

\begin{itemize}
\item $\MPNC$ supervisors for $Arbiter(4,3,2)$  
are found to follow exactly the same robustness order as given by theoretical result in Figure \ref{fig:robustnessHierarchy}. The "$==$" indicates that the supervisors are \emph{identical}, whereas "$=$" indicates that they are \emph{must equivalent} but not identical.

\item $\GODSC$ supervisors does not have theoretical order. However, it is observed for $Arbiter(4,3,2)$ that $\GODSC$ supervisor for  $BeCurrentlyCorrect$ specification becomes incomparable with all other supervisors. All the other supervisors become identical as shown in Figure \ref{fig:arbGODSCRobustnessHierarchy}. 
We use "$==$" to indicate that the supervisors are \emph{identical}, whereas "$=$" indicates that they are \emph{must equivalent} but not identical.

\end{itemize}

\oomit{
\begin{verbatim}

MUST DOMINENCE RESULTS SUMMARIZED

MPNC comparison

   Minepump

		BeCorrectnd < KBSepBurstnd < (KBResCntnd == KBLenCntnd == KBResCntesBurst) 
		< (KBResCntIntnd == KBResBurstIntnd ==  KBLenCntIntnd) < BeCurrentlyCorrectnd


############
   Arbiter

	track1: BeCorrectnd < KBSepBurstnd < KBResCntnd < KBResCntIntnd < KBResBurstIntnd 
	(KBResBurstIntnd, KBLenCntIntnd and BeCurrentlyCorrectnd are incomparable)
	
	track1a:  BeCorrectnd < KBSepBurstnd < KBResCntnd < KBLenCntnd < KBResBurstnd 
	< KBResBurstIntnd
	
	track2: KBResCntIntnd < KBLenCntIntnd
	track2a: KBLenCntnd < KBLenCntIntnd
	track3: BeCorrectnd < BeCurrentlyCorrectnd


################################################################################################################
GODSC comparison

   Minepump

		Greedynd = BeCorrectcnd = KBSepBurstcnd = (KBResCntcnd == KBLenCntcnd == KBResBurstcnd)
		 = (KBResCntIntcnd == KBResBurstIntcnd == KBLenCntIntcnd) = BeCurrentlyCorrectcnd


############
   Arbiter
		Greedynd = (BeCorrectcnd == KBSepBurstcnd == KBResCntcnd == KBLenCntcnd == KBResBurstcnd) 
		< (KBResCntIntcnd == KBResBurstIntcnd) < KBLenCntIntcnd 
		(BeCurrentlyCorrectcnd is incomparable with BeCorrectcnd, KBLenCntIntcnd, KBResBurstIntcnd and Greedynd)


\end{verbatim}

\begin{verbatim}
################################################################################################################
Determinized-MPNC comparison

   Minepump(PumpOn)  		
   
   BeCorrectd1 < KBSepBurstd1 < (KBResCntd1 == KBLenCntd1 == KBResBurstd1) < (KBResCntIntd1 
   == KBResBurstIntd1 == KBLenCntIntd1) < BeCurrentlyCorrectd1

   Minepump(PumpOff) 		BeCorrectd0 = KBSepBurstd0 =  (KBResCntd0  == KBLenCntd0 == KBResBurstd0) = (KBResCntIntd0 
   == KBResBurstIntd0 == KBLenCntIntd0) = BeCurrentlyCorrectd0


###########
   Arbiter(a1>a2>a3>a4) 	
   
   track1. BeCorrectd < KBSepBurstd < (KBResCntd == KBLenCntd) < KBResCntIntd < KBResBurstIntd
   track1a. BeCorrectd < KBSepBurstd < (KBResCntd == KBLenCntd) < KBResBurstd < KBResBurstIntd 
				track2. BeCorrectd<BeCurrentlyCorrectd
				track3. KBResCntIntd<KBLenCntIntd
				(BeCurrentlyCorrectd, KBLenCntIntd and KBResBurstIntd are incomparable)

##################################################################################################################
Determinized GODSC comparison

   Minepump(PumpOn)   
   
   (BeCorrectcd1 == KBSepBurstcd1 == KBResCntcd1 == KBLenCntcd1 == KBResBurstcd1 
   == KBResCntIntcd1 == KBResBurstIntcd1 == KBLenCntIntcd1 
   == BeCurrentlyCorrectcd1) = Greedyd1 

   Minepump(PumpOff) 	
   
   BeCorrectcd0 = KBSepBurstcd0 = (KBResCntcd0 == KBLenCntcd0 == KBResBurstcd0) 
   = (KBResCntIntcd0 == KBResBurstIntcd0 == KBLenCntIntcd0)  = BeCurrentlyCorrectcd0

############
   Arbiter		
   
   (BeCorrectcd == KBSepBurstcd == KBResCntcd == KBLenCntcd == KBResBurstcd 
   == KBResCntIntcd == KBResBurstIntcd == KBLenCntIntcd) = Greedyd 
   (BeCurrentlyCorrectcd is incomparable with BeCorrectcd, KBLenCntIntcd, KBResCntIntcd, 
   KBResBurstIntcd and Greedyd)


\end{verbatim}
}

\oomit{
\begin{figure}
\centering
\begin{tikzpicture}
\label{fig:arbMPNCRobustnessHierarchy}

     \node[ellipse] (10) at (0,0) {KBLenCntInt};
    \node[ellipse] (9) at (3,0) {KBResBurstInt};
    \node[ellipse] (8) at (6,0) {BeCurrentlyCorrect};
    \node[ellipse] (7) at (0, -2)  {KBResCntInt};
    \node[ellipse] (6) at (3, -1) {KBResBurst};
    \node[ellipse] (5) at (3, -2) {KBLenCnt};
    \node[ellipse] (4) at (3, -3) {KBResCnt};
    \node[ellipse] (3) at (3, -4) {KBSepBurst};
    \node[ellipse] (2) at (4, -5) {BeCorrect};
    \draw [->] (2) edge (3);
    \draw [->] (3) edge (4);
    \draw [->] (4) edge (5);
    \draw [->] (4) edge (7);
    \draw [->] (5) edge (6);
    \draw [->] (5) edge (10);
    \draw [->] (7) edge (9);
    \draw [->] (7) edge (10);
    \draw [->] (2) edge (8);
    \draw [->] (6) edge (9);    
\end{tikzpicture}
\caption{Robust synthesis criteria hierarchy for Must Dominance between \MPNC for (Arbiter \TYPEONE).}
\end{figure}
}

\begin{figure}
\begin{center}
\begin{tikzpicture}

    \node (3) at (6,0) {\textit{BeCurrentlyCorrect}};

	\node (4) at (0,0) 			
	{$\left(
	\begin{array}{l}
	LenCntInt==LenBurst== \\
	ResCntInt==ResBurstInt== \\
	AssumeFalse=BeCorrect==ResCnt== \\
	LenCnt==ResBurst==LenBurst
	\end{array}	
	\right)$};

\end{tikzpicture}
\end{center}
\caption{Robustness order for Must Dominance between \GODSC for $Arbiter(4,3,2)$.}
\label{fig:arbGODSCRobustnessHierarchy}

\end{figure}

\subsection{Expected Case performance}
Expected values of various robust controllers are given in Table \ref{tab:arbExpectedValues}. It is evident that expected values of meeting the commitment using hard robustness (given as \MPNC in column 2) as well as soft robustness (given as \GODSC in column 4) are useful and maximize the holding of commitments.

\begin{table}
\caption{Expected Values of meeting the commitments for various robust controllers of Arbiter(4,3,2). The default output order to determinize the supervisor is $a1>a2>a3>a4$.}
\label{tab:arbExpectedValues}
\begin{tabular}{|c|c|c|c|c|}
\hline

Arbiter(4,3,2) & \multicolumn{2}{c|}{\MPNC} & \multicolumn{2}{c|}{\GODSC}
\\
\hline
& $E(C)$ & $E(A)$ & $E(C)$ & $E(A)$\\
\hline

 AssumeFalse & 0.00000 & 0.687500 &	0.998175	& 0.687500\\
\hline

BeCorrect &	0.000000	& 0.687500 &	\multirow{5}{*}{0.998175} &	\multirow{5}{*}{0.687500} \\
\cline{1-3}


		
 ResCnt(1,3) &	0.000000 &	0.687500 & &	\oomit{0.992286 &	0.687500}
\\
\cline{1-3}	


LenCnt(1,3)	& 0.000000  &	0.687500 & & \oomit{0.992286	 &	0.687500}
\\
\cline{1-3}	
ResBurst(1,3) & \multirow{2}{*}{0.000000} & \multirow{2}{*}{0.687500} & & \oomit{ 0.992286 &	0.687500}	
\\
\cline{1-1}	


LenBurst(1,3)	& & & & \oomit{0.000000  &	0.687500 &0.992286	 &	0.687500}
\\
\hline	
		
 ResCntInt(1,3)	& 0.544309 &	0.687500 &	\multirow{2}{*}{0.998175} & \multirow{2}{*}{0.687500}
\\
\cline{1-3}	

 
 ResBurstInt(1,3) & 0.669069 &	0.687500 & & \oomit{ 0.992286 &	0.687500}
\\
\hline


 LenCntInt(1,3)	& 0.7680663 &	0.687500 &	\multirow{2}{*}{0.998175} &	\multirow{2}{*}{0.687500}
\\
\cline{1-3}	

 LenBurstInt(1,3)	& 0.835205 &	0.687500 & & \oomit{	0.992286 &	0.687500}
\\
\hline	

BeCurrentlyCorrect &	0.687500 &	0.687500 &	0.992647 & 0.687500
\\
\hline


\end{tabular}
\end{table}

\subsection{Tool Performance}
\label{section:arbiterToolPerformance}
{\color{black} In this section we provide the detailed statistics on synthesis of various robust supervisors/controllers for Arbiter examples. Table~\ref{tab:performanceRobustArbiterSynthesis} provides the details for Arbiter example.
Appendix \ref{section:minepumpcasestudy} (Table~\ref{tab:performanceRobustMinepumpSynthesis}) provides the Minepump specification and related dominance and performance study.

The Table \ref{tab:performanceRobustArbiterSynthesis} provides the number of states and time required to synthesize the Monitor Automaton, the corresponding \MPNC, \GODSC and Controller (using default value). It is evident that the Monitor construction from \qddc specification and the bounded horizon computation of \GODSC based on soft requirements takes most of the time.
It can also be noted that \GODSC for most of the robust specifications (e.g. BeCorrect, KBLenCnt, KBLenBurst, KBResCnt, KBResBurst) having different behaviour in \MPNC specification becomes identical for \GODSC specification (See  Controller stats in Table \ref{tab:performanceRobustArbiterSynthesis}).
This shows the effect of soft requirements in synthesis of optimal controller.

\begin {table}[t]
\caption {\DCSYNTH Statistics for synthesis of various Robust Arbiter example $Arbiter(4,3,2)$.
Parameters Used: Disc Factor = 0.900000, H = 50, Type = Avg-Max, Delta = 0.000100} 
\label{tab:performanceRobustArbiterSynthesis}
\begin{center}
	\begin{tabular}	{|c|c||c|c|c|c|}
	\hline
	& 
	\multicolumn{1}{c||}{} 
	& \multicolumn{4}{c|}	 {\textbf{ Synthesis (States/Time)}}
	 
	\\
	\hline
		\textbf{Sr} &
		\textbf{Robustness}  & 
		\textbf{Monitor}  & 
		\textbf{\MPNC} & 
		\textbf{\GODSC} &
	 	\textbf{Controller} 
		\\
		\textbf{No} &
		\textbf{Criterion}  &
		\textbf{Stats} &
		\textbf{Stats} &
		\textbf{Stats} &
		\textbf{Stats} 
		\\
	     \hline
	     \multicolumn{6}{|c|}{\MPNC} \\
	     \hline
	     & AssumeFalse  & 
	     83/0.085  & 
	      82/0.001649 &
	      1/0.056157 &
	      2/0.000066   
	     \\
	     \hline
	     & 
	    BeCorrect  & 
	     125/0.062  & 
	     91/0.001059  &
	     11/0.060316  &    
	     9/0.001715 \\
	     \hline
	      &
	     BeCurrentlyCorrect  & 
	     116/0.076  & 
	      49/0.001497 &
	      49/0.020199 & 
	      8/0.009847  
	     \\
	     \hline
	     & 
	    ResCnt(1,3)  & 
	     333/0.093  & 
	      143/0.002115 &
	      63/0.076879 & 
	      39/0.018669   
	     \\
	     \hline
	     & 
	    ResCntInt(1,3)  & 
	     390/0.097  & 
	      194/0.003133 &
	      92/0.110108 & 
	      40/0.044147  
	     \\
	     \hline
	     & 
	    ResBurst(1,3)  & 
	     249/0.096  & 
	      125/0.002435 &
	      45/0.071756    &
	      25/0.009753 
	     \\
	     \hline
	     & 
	    ResBurstInt(1,3)  & 
	     291/0.079  & 
	      167/0.003822 &
	      65/0.098957   &
	      27/0.024754    
	     \\
	     \hline
	     & 
	    LenCnt(1,3) & 
	     291/0.097  & 
	      134/0.002498  &
	      54/0.072853 &
	      32/0.013871  
	     \\
	     \hline
	     & 
	    LenCntInt(1,3) & 
	     335/0.085  & 
	      166/0.005103 &
	      101/0.086818 &
	      33/0.016391    
	     \\
	     \hline
	      & 
	    LenBurst(1,3) & 
	     249/0.068  & 
	      125/0.001902 &
	      45/0.067751 &
	      25/0.009520
	     \\
	     \hline
			& 
	    LenBurstInt(1,3) & 
	     273/0.066  & 
	      143/0.002874 &
	      78/0.072999 &
	      25/0.007139
	     \\
	     \hline
	     & 
	    AssumeTrue  & 
	     \multicolumn{4}{c|}{Unrealizable}   
	     \\
	     \hline
		\multicolumn{6}{|c|}{\GODSC} \\
	     \hline		& 
	    AssumeFalse  & 
	     83/0.078  & 
	      82/0.000963 &
	      62/0.331783 &
	      45/0.007677   
	     \\
	     \hline
	      & 
	    BeCorrect  & 
	     125/0.084  & 
	     91/0.001706  &
	     62/0.345245 &    
	     45/0.007486 \\
	     \hline
	      & 
	     
	    BeCurrentlyCorrect & 
	      116/0.060 & 
	      49/0.001021 &
	      44/0.161940 & 
	      30/0.004811 
	     \\
	     \hline
\oomit{	      & 
	    KBSepBurst(1,3)  & 
	     333/  & 
	      144/0.003455 &
	      144/0.076732 &   
	      40/0.029527 
	     \\
	     \hline
	      & 
	    KBSepBurst(1,3) & 
	     333  & 
	      144/0.002055 &
	      77/0.392736  &   
	      43/0.016500
	     \\
	     \hline
	   }
	      
	      & 
	    ResCnt(1,3) & 
	     333/0.085  & 
	      143/0.002007 &
	      62/0.391268  & 
	      45/0.007501 
	     \\
	     \hline
	      
	      & 
	    ResCntInt(1,3)  & 
	     390/0.088  & 
	      194/0.002743 &
	       62/0.466318 & 
	      45/0.008130
	     \\
	     \hline
		 
	      & 
	    ResBurst(1,3)  & 
	     249/0.072  & 
	      125/0.001617  &
	      62/0.371823  &
	      45/0.007497 
	     \\
	     \hline	     
	     
	      & 
	    ResBurstInt(1,3)  & 
	     291/0.070  & 
	      167/0.002580 &
	      62/0.449197 &
	      45/0.007509  
	     \\
	     \hline

	      & 
	    LenCnt(1,3)  & 
	     291/0.102  & 
	      134/0.002341 &
	     62/0.380235 &
	      45/0.007851
	     \\
	     \hline

	      & 
	    LenCntInt(1,3)  & 
	     335/0.065  & 
	      166/0.002191 &
	      62/0.509995 &
	      45/0.014236 
	     \\
	     \hline
	     
	      & 
	    LenBurst(1,3)  & 
	     249/0.083  & 
	      125/0.002779 &
	      62/0.374552  &
	      45/0.007500 
	     \\
	     \hline	    
		 
	      & 
	    LenBurstInt(1,3)  & 
	     273/0.090  & 
	      143/0.002446 &
	       62/0.378385 &
	      45/0.007507
	     \\
	     \hline	     
	     & AssumeTrue  & 
	     \multicolumn{4}{c|}{Unrealizable}   
	     \\
	     \hline
	\end{tabular}
\end{center}
\end{table}

\oomit{
\begin{table}
\caption{Synthesis of Robust Arbiters $Arb-Inv_{Hard}^{Assume}(4,3,2)$ in DCSynth}
\label{tab:robustArbiterSynthesis}
\begin{tabular}{|c|c|c|c|c|c|c|c|c|}
\hline
 Robustness  & \multicolumn{3}{|c|}{Hard} & \multicolumn{3}{|c|}{Synthesis Parameters} & \multicolumn{2}{|c|}{\GODSC} \\
\cline{2-9}
 Notion & \MPNC & State & Time & $\gamma$ & $\epsilon$ & $H$ & States & Time\\
 \hline
 \verb#Be-Correct(A,C)# & 75 & 35 & 0.1 & 0.9 & 0.001 & 4 & 49 & 0.47 \\
 \hline
 \verb#Be-Currently-Correct(A,C)# & Un & - & - & - & - & - & - & - \\
 \hline
 \verb#k-Atmost(A,C,1)# & 118 & 55 & 0.123 & 0.9 & 0.001 & 4 & 49 & 0.58 \\
 \hline
 \verb#k,b-Resilient(A,C,1,4)# & 136 & 72 & 0.13 & 0.9 & 0.001 & 4 & 49 & 0.62 \\
 \hline
 \verb#k,b-Variant-1(A,C,1,4)# & 149 & 72 & 0.15 & 0.9 & 0.001 & 4 & 49 & 0.69 \\
 \hline
 \verb#k,b-Variant-2(A,C,1,4)# & 136 & 64 & 0.14 & 0.9 & 0.001 & 4 & 49 & 0.65\\
 \hline
 \verb#k,b-Sep(A,C,1,3)# & 128 & 66 & 0.13 & 0.9 & 0.001 & 4 & 49 & 0.61 \\
 \hline
 \verb#Greedy(A,C)# & 66 & - & - & 0.9 & 0.001 & 4 & 49 & 0.42\\
 \hline
\end{tabular}
\end{table}

\begin{table}
\caption{Synthesis of Robust Arbiters $Arb_{Hard}^{Assume}(4,3,1)$ in DCSynth}
\label{tab:robustArbiterSynthesis}
\begin{tabular}{|c|c|c|c|c|c|c|c|c|}
\hline
 Robustness  & \multicolumn{3}{|c|}{Hard} & \multicolumn{3}{|c|}{Synthesis Parameters} & \multicolumn{2}{|c|}{\GODSC} \\
\cline{2-9}
 Notion & \MPNC & State & Time & $\gamma$ & $\epsilon$ & $H$ & States & Time\\
 \hline
 \verb#Be-Correct(A,C)# & 83 & 3 & 0.6 & 0.9 & 0.001 & 50 & 46 & 32.48 \\
 \hline
 \verb#Be-Currently-Correct(A,C)# & 82 & 2 & 0.57 & 0.9 & 0.001 & 50 & 46 & 30.98 \\
 \hline
 \verb#k-Atmost(A,C,1)# & 98 & 4 & 0.60 & 0.9 & 0.001 & 50 & 46 & 32.61 \\
 \hline
 \verb#k-Atmost(A,C,2)# & 163 & 5 & 0.64 & 0.9 & 0.001 & 50 & 46 & 34.85 \\
 \hline
 \verb#k,b-Resilient(A,C,1,4)# & 100 & 7 & 0.61 & 0.9 & 0.001 & 50 & 46 & 32.88 \\
 \hline
 \verb#k,b-Resilient(A,C,2,4)# & 167 & 11 & 0.65 & 0.9 & 0.001 & 50 & 46 & 35.0 \\ \hline
 \verb#k,b-Variant-1(A,C,1,4)# & 229 & 11 & 1.09 & 0.9 & 0.001 & 50 & 46 & 59.65 \\
 \hline
 \verb#k,b-Variant-1(A,C,2,4)# & 374 & 14 & 1.79 & 0.9 & 0.001 & 50 & 46 & 96.88 \\
 \hline
  \verb#k,b-Variant-2(A,C,1,4)# & 206 & 8 & 1.01 & 0.9 & 0.001 & 50 & 46 & 54.79\\
 \hline
 \verb#k,b-Variant-2(A,C,2,4)# & 236 & 8 & 1.30 & 0.9 & 0.001 & 50 & 46 & 70.26\\
 \hline
 \verb#k,b-Sep(A,C,1,3)# & 116 & 7 & 0.65 & 0.9 & 0.001 & 50 & 46 & 35.2 \\
 \hline
 \verb#k,b-Sep(A,C,2,2)# & 180 & 8 & 0.69 & 0.9 & 0.001 & 50 & 46 & 37.58 \\
 \hline
 \verb#Greedy(A,C)# & 82 & - & - & 0.9 & 0.001 & 50 & 46 & 32.3\\
 \hline
\end{tabular}
\end{table}

\begin{table}
\caption{Synthesis of Robust Arbiters $Arb-Inv_{Hard}^{Assume}(4,3,1)$ in DCSynth}
\label{tab:robustArbiterSynthesis}
\begin{tabular}{|l|c|c|c|c|c|c|c|c|}
\hline
 Robustness  & \multicolumn{3}{|c|}{Hard} & \multicolumn{3}{|c|}{Synthesis Parameters} & \multicolumn{2}{|c|}{\GODSC} \\
\cline{2-9}
 Notion & \MPNC & State & Time & $\gamma$ & $\epsilon$ & $H$ & States & Time\\
 \hline
 \verb#Be-Correct(A,C)# & 66 & 28 & 0.08 & 0.9 & 0.001 & 4 & 49 & 0.40 \\
 \hline
 \verb#Be-Currently-Correct(A,C)# & 66 & 28 & 0.08 & 0.9 & 0.001 & 4 & 49 & 0.41 \\
 \hline
 \verb#k-Atmost(A,C,1)# & 66 & 28 & 0.08 & 0.9 & 0.001 & 4 & 49 & 0.40 \\
 \hline
 \verb#k-Atmost(A,C,2)# & 66 & 28 & 0.08 & 0.9 & 0.001 & 4 & 49 & 0.40 \\
 \hline 
 \verb#k,b-Resilient(A,C,1,4)# & 66 & 28 & 0.09 & 0.9 & 0.001 & 4 & 49 & 0.40 \\
 \hline
 \verb#k,b-Resilient(A,C,2,4)# & 66 & 28 & 0.09 & 0.9 & 0.001 & 4 & 49 & 0.41 \\
 \hline
 \verb#k,b-Variant-1(A,C,1,4)# & 66 & 28 & 0.08 & 0.9 & 0.001 & 4 & 49 & 0.41 \\
 \hline
 \verb#k,b-Variant-1(A,C,2,4)# & 66 & 28 & 0.08 & 0.9 & 0.001 & 4 & 49 & 0.41 \\
  \hline
 \verb#k,b-Variant-2(A,C,1,4)# & 66 & 28 & 0.08 & 0.9 & 0.001 & 4 & 49 & 0.41\\
 \hline
 \verb#k,b-Variant-2(A,C,2,4)# & 66 & 28 & 0.08 & 0.9 & 0.001 & 4 & 49 & 0.40\\
 \hline
 \verb#k,b-Sep(A,C,1,3)# & 66 & 28 & 0.1 & 0.9 & 0.001 & 4 & 49 & 0.41 \\
 \hline
 \verb#k,b-Sep(A,C,2,2)# & 66 & 28 & 0.1 & 0.9 & 0.001 & 4 & 49 & 0.41 \\
 \hline
 \verb#Greedy(A,C)# & 66 & - & - & 0.9 & 0.001 & 4 & 49 & 0.40\\
 \hline
\end{tabular}
\end{table}
}

\oomit{
\begin{table}
\caption{Synthesis of Robust Arbiters $Minepump(8,2,6,2)$ in DCSynth}
\label{tab:robustArbiterSynthesis}
\begin{tabular}{|l|c|c|c|c|c|c|c|c|c|c|}
\hline
 Robustness  & \multicolumn{4}{|c|}{Hard} & \multicolumn{3}{|c|}{Synthesis Parameters} & \multicolumn{3}{|c|}{\GODSC} \\
\cline{2-11}
 Notion & $A^{hard}$ & $A^{mpnc}$ & $cntrl$ & Time & $\gamma$ & $\epsilon$ & $H$ & $A^{soft}$ & $A^{godc}$ & Time\\
 \hline
 \verb#Be-Correct(A,C)# & 6235 & 5326 & 202 & 0.22 & 0.9 & 0.001 & 10 & 5204 & 1433 & 0.68 \\
 \hline
 \verb#Be-Currently-Correct(A,C)# & 6038 & 801 & 305 & 0.07 & 0.9 & 0.001 & 10 & 5204  & 305 & 0.11 \\
 \hline
 \verb#k-Atmost(A,C,2)# & 7524 & 5323 & 367 & 0.22 & 0.9 & 0.001 & 10 & 5204 & 1433 & 0.66 \\
 \hline
 \verb#k,b-Resilient(A,C,2,10)# & 7524 & 5323 & 367 & 0.22 & 0.9 & 0.001 & 10 & 5204 & 1433 & 0.65 \\
 \hline
 \verb#k,b-Variant-1(A,C,2,10)# & 16650 & 8330 & 959 & 0.37 & 0.9 & 0.001 & 10 & 5204 & 1510 & 0.94 \\
 \hline
 \verb#k,b-Variant-2(A,C,2,10)# & 13558 & 8330 & 943 & 0.35 & 0.9 & 0.001 & 10 & 5204 & 1510 & 0.94\\
 \hline
 \verb#k,b-Sep(A,C,2,8)# & 9040 & 5635 & 501 & 0.23 & 0.9 & 0.001 & 10 & 5204 & 1433 & 0.69 \\
 \hline
 \verb#Greedy(A,C)# & 5204 & 5203 & - & - & 0.9 & 0.001 & 10 & 5204 & 1433 & 0.62\\
 \hline

\end{tabular}
\end{table}

\begin{table}
\caption{Synthesis of Robust Arbiters $Minepump(8,2,6,2)$ in DCSynth (WITH REQUIRED INDICATORS ONLY)}
\label{tab:robustArbiterSynthesis}
\begin{tabular}{|l|c|c|c|c|c|c|c|c|c|c|}
\hline
 Robustness  & \multicolumn{4}{|c|}{Hard} & \multicolumn{3}{|c|}{Synthesis Parameters} & \multicolumn{3}{|c|}{\GODSC} \\
\cline{2-11}
 Notion & $A^{hard}$ & $A^{mpnc}$ & $cntrl$ & Time & $\gamma$ & $\epsilon$ & $H$ & $A^{soft}$ & $A^{godc}$ & Time\\
 \hline
 \verb#Be-Correct(A,C)# & 6235 & 5326 & 202 & 0.21  & 0.9 & 0.001 & 10 & 56 & 1433 & 0.67 \\
 \hline
 \verb#BeCurrentlyCorrect(A,C)# & 6038 & 801 & 305 & 0.04  & 0.9 & 0.001 & 10 & 56 & 305 & 0.08 \\
 \hline
 \verb#k-Atmost(A,C,2)# & 7524 & 5323 & 367 & 0.20  & 0.9 & 0.001 & 10 & 56 & 1433 & 0.63 \\
 \hline
 \verb#k,b-Resilient(A,C,2,10)# & 7524 & 5323 & 367 & 0.20  & 0.9 & 0.001 & 10 & 56 & 1433 & 0.64 \\
 \hline
 \verb#k,b-Variant-1(A,C,2,10)# & 16650 & 8330 & 959 & 0.34  & 0.9 & 0.001 & 10 & 56 & 1510 & 0.93 \\
 \hline
 \verb#k,b-Variant-2(A,C,2,10)# & 13558 & 8330 & 943 & 0.33  & 0.9 & 0.001 & 10 & 56 & 1510 & 0.95 \\
 \hline
 \verb#k,b-Sep(A,C,2,8)# & 9040 & 5635 & 501 & 0.22  & 0.9 & 0.001 & 10 & 56 & 1433 & 0.67 \\
 \hline
 \verb#Greedy(A,C)# & 2 & 1 & - & -  & 0.9 & 0.001 & 10 & 56 & 47 & 0.02 \\
 \hline
\end{tabular}
\end{table}

\begin{figure}
{
\def\lignefine{\linethickness{0.05pt}}
\def\ligneepaisse{\linethickness{2pt}}
\noindent
\setlength{\unitlength}{1mm}
\begin{picture}(136.0000,60.0000)(-5.0000,0.0000)
\fboxsep 0pt
\lignefine
\color{black}
\multiput(0.0000,-5.0000)(4.0000,0){30}{\line(0,1){60.0000}}
\put(2.0000,54.0000){\scriptsize\makebox(0,0)[t]{1}}
\put(2.0000,-4.0000){\scriptsize\makebox(0,0)[b]{1}}
\put(6.0000,54.0000){\scriptsize\makebox(0,0)[t]{2}}
\put(6.0000,-4.0000){\scriptsize\makebox(0,0)[b]{2}}
\put(10.0000,54.0000){\scriptsize\makebox(0,0)[t]{3}}
\put(10.0000,-4.0000){\scriptsize\makebox(0,0)[b]{3}}
\put(14.0000,54.0000){\scriptsize\makebox(0,0)[t]{4}}
\put(14.0000,-4.0000){\scriptsize\makebox(0,0)[b]{4}}
\put(18.0000,54.0000){\scriptsize\makebox(0,0)[t]{5}}
\put(18.0000,-4.0000){\scriptsize\makebox(0,0)[b]{5}}
\put(22.0000,54.0000){\scriptsize\makebox(0,0)[t]{6}}
\put(22.0000,-4.0000){\scriptsize\makebox(0,0)[b]{6}}
\put(26.0000,54.0000){\scriptsize\makebox(0,0)[t]{7}}
\put(26.0000,-4.0000){\scriptsize\makebox(0,0)[b]{7}}
\put(30.0000,54.0000){\scriptsize\makebox(0,0)[t]{8}}
\put(30.0000,-4.0000){\scriptsize\makebox(0,0)[b]{8}}
\put(34.0000,54.0000){\scriptsize\makebox(0,0)[t]{9}}
\put(34.0000,-4.0000){\scriptsize\makebox(0,0)[b]{9}}
\put(38.0000,54.0000){\scriptsize\makebox(0,0)[t]{10}}
\put(38.0000,-4.0000){\scriptsize\makebox(0,0)[b]{10}}
\put(42.0000,54.0000){\scriptsize\makebox(0,0)[t]{11}}
\put(42.0000,-4.0000){\scriptsize\makebox(0,0)[b]{11}}
\put(46.0000,54.0000){\scriptsize\makebox(0,0)[t]{12}}
\put(46.0000,-4.0000){\scriptsize\makebox(0,0)[b]{12}}
\put(50.0000,54.0000){\scriptsize\makebox(0,0)[t]{13}}
\put(50.0000,-4.0000){\scriptsize\makebox(0,0)[b]{13}}
\put(54.0000,54.0000){\scriptsize\makebox(0,0)[t]{14}}
\put(54.0000,-4.0000){\scriptsize\makebox(0,0)[b]{14}}
\put(58.0000,54.0000){\scriptsize\makebox(0,0)[t]{15}}
\put(58.0000,-4.0000){\scriptsize\makebox(0,0)[b]{15}}
\put(62.0000,54.0000){\scriptsize\makebox(0,0)[t]{16}}
\put(62.0000,-4.0000){\scriptsize\makebox(0,0)[b]{16}}
\put(66.0000,54.0000){\scriptsize\makebox(0,0)[t]{17}}
\put(66.0000,-4.0000){\scriptsize\makebox(0,0)[b]{17}}
\put(70.0000,54.0000){\scriptsize\makebox(0,0)[t]{18}}
\put(70.0000,-4.0000){\scriptsize\makebox(0,0)[b]{18}}
\put(74.0000,54.0000){\scriptsize\makebox(0,0)[t]{19}}
\put(74.0000,-4.0000){\scriptsize\makebox(0,0)[b]{19}}
\put(78.0000,54.0000){\scriptsize\makebox(0,0)[t]{20}}
\put(78.0000,-4.0000){\scriptsize\makebox(0,0)[b]{20}}
\put(82.0000,54.0000){\scriptsize\makebox(0,0)[t]{21}}
\put(82.0000,-4.0000){\scriptsize\makebox(0,0)[b]{21}}
\put(86.0000,54.0000){\scriptsize\makebox(0,0)[t]{22}}
\put(86.0000,-4.0000){\scriptsize\makebox(0,0)[b]{22}}
\put(90.0000,54.0000){\scriptsize\makebox(0,0)[t]{23}}
\put(90.0000,-4.0000){\scriptsize\makebox(0,0)[b]{23}}
\put(94.0000,54.0000){\scriptsize\makebox(0,0)[t]{24}}
\put(94.0000,-4.0000){\scriptsize\makebox(0,0)[b]{24}}
\put(98.0000,54.0000){\scriptsize\makebox(0,0)[t]{25}}
\put(98.0000,-4.0000){\scriptsize\makebox(0,0)[b]{25}}
\put(102.0000,54.0000){\scriptsize\makebox(0,0)[t]{26}}
\put(102.0000,-4.0000){\scriptsize\makebox(0,0)[b]{26}}
\put(106.0000,54.0000){\scriptsize\makebox(0,0)[t]{27}}
\put(106.0000,-4.0000){\scriptsize\makebox(0,0)[b]{27}}
\put(110.0000,54.0000){\scriptsize\makebox(0,0)[t]{28}}
\put(110.0000,-4.0000){\scriptsize\makebox(0,0)[b]{28}}
\put(114.0000,54.0000){\scriptsize\makebox(0,0)[t]{29}}
\put(114.0000,-4.0000){\scriptsize\makebox(0,0)[b]{29}}
\put(-1.0000,44.0000){\line(1,0){118.0000}}
\put(-1.0000,48.0000){\line(1,0){118.0000}}
\put(-1.0000,38.0000){\line(1,0){118.0000}}
\put(-1.0000,42.0000){\line(1,0){118.0000}}
\put(-1.0000,32.0000){\line(1,0){118.0000}}
\put(-1.0000,36.0000){\line(1,0){118.0000}}
\put(-1.0000,26.0000){\line(1,0){118.0000}}
\put(-1.0000,30.0000){\line(1,0){118.0000}}
\put(-1.0000,20.0000){\line(1,0){118.0000}}
\put(-1.0000,24.0000){\line(1,0){118.0000}}
\put(-1.0000,14.0000){\line(1,0){118.0000}}
\put(-1.0000,18.0000){\line(1,0){118.0000}}
\put(-1.0000,8.0000){\line(1,0){118.0000}}
\put(-1.0000,12.0000){\line(1,0){118.0000}}
\put(-1.0000,2.0000){\line(1,0){118.0000}}
\put(-1.0000,6.0000){\line(1,0){118.0000}}
\ligneepaisse
\color{blue}
\put(-1.0000,46.0000){\color{blue}\normalsize\makebox(0,0)[r]{req1}}
\put(12.0000,44.0000){\line(0,1){4.0000}}
\put(16.0000,48.0000){\line(0,-1){4.0000}}
\put(36.0000,44.0000){\line(0,1){4.0000}}
\put(48.0000,48.0000){\line(0,-1){4.0000}}
\put(72.0000,44.0000){\line(0,1){4.0000}}
\put(100.0000,48.0000){\line(0,-1){4.0000}}
\put(0.0000,44.0000){\line(1,0){4.0000}}
\put(4.0000,44.0000){\line(1,0){4.0000}}
\put(8.0000,44.0000){\line(1,0){4.0000}}
\put(12.0000,48.0000){\line(1,0){4.0000}}
\put(16.0000,44.0000){\line(1,0){4.0000}}
\put(20.0000,44.0000){\line(1,0){4.0000}}
\put(24.0000,44.0000){\line(1,0){4.0000}}
\put(28.0000,44.0000){\line(1,0){4.0000}}
\put(32.0000,44.0000){\line(1,0){4.0000}}
\put(36.0000,48.0000){\line(1,0){4.0000}}
\put(40.0000,48.0000){\line(1,0){4.0000}}
\put(44.0000,48.0000){\line(1,0){4.0000}}
\put(48.0000,44.0000){\line(1,0){4.0000}}
\put(52.0000,44.0000){\line(1,0){4.0000}}
\put(56.0000,44.0000){\line(1,0){4.0000}}
\put(60.0000,44.0000){\line(1,0){4.0000}}
\put(64.0000,44.0000){\line(1,0){4.0000}}
\put(68.0000,44.0000){\line(1,0){4.0000}}
\put(72.0000,48.0000){\line(1,0){4.0000}}
\put(76.0000,48.0000){\line(1,0){4.0000}}
\put(80.0000,48.0000){\line(1,0){4.0000}}
\put(84.0000,48.0000){\line(1,0){4.0000}}
\put(88.0000,48.0000){\line(1,0){4.0000}}
\put(92.0000,48.0000){\line(1,0){4.0000}}
\put(96.0000,48.0000){\line(1,0){4.0000}}
\put(100.0000,44.0000){\line(1,0){4.0000}}
\put(104.0000,44.0000){\line(1,0){4.0000}}
\put(108.0000,44.0000){\line(1,0){4.0000}}
\put(112.0000,44.0000){\line(1,0){4.0000}}
\color{blue}
\put(-1.0000,40.0000){\color{blue}\normalsize\makebox(0,0)[r]{req2}}
\put(4.0000,38.0000){\line(0,1){4.0000}}
\put(8.0000,42.0000){\line(0,-1){4.0000}}
\put(20.0000,38.0000){\line(0,1){4.0000}}
\put(36.0000,42.0000){\line(0,-1){4.0000}}
\put(48.0000,38.0000){\line(0,1){4.0000}}
\put(0.0000,38.0000){\line(1,0){4.0000}}
\put(4.0000,42.0000){\line(1,0){4.0000}}
\put(8.0000,38.0000){\line(1,0){4.0000}}
\put(12.0000,38.0000){\line(1,0){4.0000}}
\put(16.0000,38.0000){\line(1,0){4.0000}}
\put(20.0000,42.0000){\line(1,0){4.0000}}
\put(24.0000,42.0000){\line(1,0){4.0000}}
\put(28.0000,42.0000){\line(1,0){4.0000}}
\put(32.0000,42.0000){\line(1,0){4.0000}}
\put(36.0000,38.0000){\line(1,0){4.0000}}
\put(40.0000,38.0000){\line(1,0){4.0000}}
\put(44.0000,38.0000){\line(1,0){4.0000}}
\put(48.0000,42.0000){\line(1,0){4.0000}}
\put(52.0000,42.0000){\line(1,0){4.0000}}
\put(56.0000,42.0000){\line(1,0){4.0000}}
\put(60.0000,42.0000){\line(1,0){4.0000}}
\put(64.0000,42.0000){\line(1,0){4.0000}}
\put(68.0000,42.0000){\line(1,0){4.0000}}
\put(72.0000,42.0000){\line(1,0){4.0000}}
\put(76.0000,42.0000){\line(1,0){4.0000}}
\put(80.0000,42.0000){\line(1,0){4.0000}}
\put(84.0000,42.0000){\line(1,0){4.0000}}
\put(88.0000,42.0000){\line(1,0){4.0000}}
\put(92.0000,42.0000){\line(1,0){4.0000}}
\put(96.0000,42.0000){\line(1,0){4.0000}}
\put(100.0000,42.0000){\line(1,0){4.0000}}
\put(104.0000,42.0000){\line(1,0){4.0000}}
\put(108.0000,42.0000){\line(1,0){4.0000}}
\put(112.0000,42.0000){\line(1,0){4.0000}}
\color{blue}
\put(-1.0000,34.0000){\color{blue}\normalsize\makebox(0,0)[r]{req3}}
\put(8.0000,32.0000){\line(0,1){4.0000}}
\put(12.0000,36.0000){\line(0,-1){4.0000}}
\put(48.0000,32.0000){\line(0,1){4.0000}}
\put(88.0000,36.0000){\line(0,-1){4.0000}}
\put(0.0000,32.0000){\line(1,0){4.0000}}
\put(4.0000,32.0000){\line(1,0){4.0000}}
\put(8.0000,36.0000){\line(1,0){4.0000}}
\put(12.0000,32.0000){\line(1,0){4.0000}}
\put(16.0000,32.0000){\line(1,0){4.0000}}
\put(20.0000,32.0000){\line(1,0){4.0000}}
\put(24.0000,32.0000){\line(1,0){4.0000}}
\put(28.0000,32.0000){\line(1,0){4.0000}}
\put(32.0000,32.0000){\line(1,0){4.0000}}
\put(36.0000,32.0000){\line(1,0){4.0000}}
\put(40.0000,32.0000){\line(1,0){4.0000}}
\put(44.0000,32.0000){\line(1,0){4.0000}}
\put(48.0000,36.0000){\line(1,0){4.0000}}
\put(52.0000,36.0000){\line(1,0){4.0000}}
\put(56.0000,36.0000){\line(1,0){4.0000}}
\put(60.0000,36.0000){\line(1,0){4.0000}}
\put(64.0000,36.0000){\line(1,0){4.0000}}
\put(68.0000,36.0000){\line(1,0){4.0000}}
\put(72.0000,36.0000){\line(1,0){4.0000}}
\put(76.0000,36.0000){\line(1,0){4.0000}}
\put(80.0000,36.0000){\line(1,0){4.0000}}
\put(84.0000,36.0000){\line(1,0){4.0000}}
\put(88.0000,32.0000){\line(1,0){4.0000}}
\put(92.0000,32.0000){\line(1,0){4.0000}}
\put(96.0000,32.0000){\line(1,0){4.0000}}
\put(100.0000,32.0000){\line(1,0){4.0000}}
\put(104.0000,32.0000){\line(1,0){4.0000}}
\put(108.0000,32.0000){\line(1,0){4.0000}}
\put(112.0000,32.0000){\line(1,0){4.0000}}
\color{blue}
\put(-1.0000,28.0000){\color{blue}\normalsize\makebox(0,0)[r]{req4}}
\put(16.0000,26.0000){\line(0,1){4.0000}}
\put(48.0000,30.0000){\line(0,-1){4.0000}}
\put(56.0000,26.0000){\line(0,1){4.0000}}
\put(112.0000,30.0000){\line(0,-1){4.0000}}
\put(0.0000,26.0000){\line(1,0){4.0000}}
\put(4.0000,26.0000){\line(1,0){4.0000}}
\put(8.0000,26.0000){\line(1,0){4.0000}}
\put(12.0000,26.0000){\line(1,0){4.0000}}
\put(16.0000,30.0000){\line(1,0){4.0000}}
\put(20.0000,30.0000){\line(1,0){4.0000}}
\put(24.0000,30.0000){\line(1,0){4.0000}}
\put(28.0000,30.0000){\line(1,0){4.0000}}
\put(32.0000,30.0000){\line(1,0){4.0000}}
\put(36.0000,30.0000){\line(1,0){4.0000}}
\put(40.0000,30.0000){\line(1,0){4.0000}}
\put(44.0000,30.0000){\line(1,0){4.0000}}
\put(48.0000,26.0000){\line(1,0){4.0000}}
\put(52.0000,26.0000){\line(1,0){4.0000}}
\put(56.0000,30.0000){\line(1,0){4.0000}}
\put(60.0000,30.0000){\line(1,0){4.0000}}
\put(64.0000,30.0000){\line(1,0){4.0000}}
\put(68.0000,30.0000){\line(1,0){4.0000}}
\put(72.0000,30.0000){\line(1,0){4.0000}}
\put(76.0000,30.0000){\line(1,0){4.0000}}
\put(80.0000,30.0000){\line(1,0){4.0000}}
\put(84.0000,30.0000){\line(1,0){4.0000}}
\put(88.0000,30.0000){\line(1,0){4.0000}}
\put(92.0000,30.0000){\line(1,0){4.0000}}
\put(96.0000,30.0000){\line(1,0){4.0000}}
\put(100.0000,30.0000){\line(1,0){4.0000}}
\put(104.0000,30.0000){\line(1,0){4.0000}}
\put(108.0000,30.0000){\line(1,0){4.0000}}
\put(112.0000,26.0000){\line(1,0){4.0000}}
\color{red}
\put(-1.0000,22.0000){\color{red}\normalsize\makebox(0,0)[r]{ack1}}
\put(12.0000,20.0000){\line(0,1){4.0000}}
\put(16.0000,24.0000){\line(0,-1){4.0000}}
\put(36.0000,20.0000){\line(0,1){4.0000}}
\put(40.0000,24.0000){\line(0,-1){4.0000}}
\put(44.0000,20.0000){\line(0,1){4.0000}}
\put(48.0000,24.0000){\line(0,-1){4.0000}}
\put(76.0000,20.0000){\line(0,1){4.0000}}
\put(80.0000,24.0000){\line(0,-1){4.0000}}
\put(84.0000,20.0000){\line(0,1){4.0000}}
\put(88.0000,24.0000){\line(0,-1){4.0000}}
\put(92.0000,20.0000){\line(0,1){4.0000}}
\put(96.0000,24.0000){\line(0,-1){4.0000}}
\put(0.0000,20.0000){\line(1,0){4.0000}}
\put(4.0000,20.0000){\line(1,0){4.0000}}
\put(8.0000,20.0000){\line(1,0){4.0000}}
\put(12.0000,24.0000){\line(1,0){4.0000}}
\put(16.0000,20.0000){\line(1,0){4.0000}}
\put(20.0000,20.0000){\line(1,0){4.0000}}
\put(24.0000,20.0000){\line(1,0){4.0000}}
\put(28.0000,20.0000){\line(1,0){4.0000}}
\put(32.0000,20.0000){\line(1,0){4.0000}}
\put(36.0000,24.0000){\line(1,0){4.0000}}
\put(40.0000,20.0000){\line(1,0){4.0000}}
\put(44.0000,24.0000){\line(1,0){4.0000}}
\put(48.0000,20.0000){\line(1,0){4.0000}}
\put(52.0000,20.0000){\line(1,0){4.0000}}
\put(56.0000,20.0000){\line(1,0){4.0000}}
\put(60.0000,20.0000){\line(1,0){4.0000}}
\put(64.0000,20.0000){\line(1,0){4.0000}}
\put(68.0000,20.0000){\line(1,0){4.0000}}
\put(72.0000,20.0000){\line(1,0){4.0000}}
\put(76.0000,24.0000){\line(1,0){4.0000}}
\put(80.0000,20.0000){\line(1,0){4.0000}}
\put(84.0000,24.0000){\line(1,0){4.0000}}
\put(88.0000,20.0000){\line(1,0){4.0000}}
\put(92.0000,24.0000){\line(1,0){4.0000}}
\put(96.0000,20.0000){\line(1,0){4.0000}}
\put(100.0000,20.0000){\line(1,0){4.0000}}
\put(104.0000,20.0000){\line(1,0){4.0000}}
\put(108.0000,20.0000){\line(1,0){4.0000}}
\put(112.0000,20.0000){\line(1,0){4.0000}}
\color{red}
\put(-1.0000,16.0000){\color{red}\normalsize\makebox(0,0)[r]{ack2}}
\put(4.0000,14.0000){\line(0,1){4.0000}}
\put(8.0000,18.0000){\line(0,-1){4.0000}}
\put(20.0000,14.0000){\line(0,1){4.0000}}
\put(24.0000,18.0000){\line(0,-1){4.0000}}
\put(28.0000,14.0000){\line(0,1){4.0000}}
\put(32.0000,18.0000){\line(0,-1){4.0000}}
\put(48.0000,14.0000){\line(0,1){4.0000}}
\put(52.0000,18.0000){\line(0,-1){4.0000}}
\put(56.0000,14.0000){\line(0,1){4.0000}}
\put(60.0000,18.0000){\line(0,-1){4.0000}}
\put(64.0000,14.0000){\line(0,1){4.0000}}
\put(68.0000,18.0000){\line(0,-1){4.0000}}
\put(72.0000,14.0000){\line(0,1){4.0000}}
\put(76.0000,18.0000){\line(0,-1){4.0000}}
\put(80.0000,14.0000){\line(0,1){4.0000}}
\put(84.0000,18.0000){\line(0,-1){4.0000}}
\put(88.0000,14.0000){\line(0,1){4.0000}}
\put(92.0000,18.0000){\line(0,-1){4.0000}}
\put(96.0000,14.0000){\line(0,1){4.0000}}
\put(100.0000,18.0000){\line(0,-1){4.0000}}
\put(104.0000,14.0000){\line(0,1){4.0000}}
\put(108.0000,18.0000){\line(0,-1){4.0000}}
\put(112.0000,14.0000){\line(0,1){4.0000}}
\put(0.0000,14.0000){\line(1,0){4.0000}}
\put(4.0000,18.0000){\line(1,0){4.0000}}
\put(8.0000,14.0000){\line(1,0){4.0000}}
\put(12.0000,14.0000){\line(1,0){4.0000}}
\put(16.0000,14.0000){\line(1,0){4.0000}}
\put(20.0000,18.0000){\line(1,0){4.0000}}
\put(24.0000,14.0000){\line(1,0){4.0000}}
\put(28.0000,18.0000){\line(1,0){4.0000}}
\put(32.0000,14.0000){\line(1,0){4.0000}}
\put(36.0000,14.0000){\line(1,0){4.0000}}
\put(40.0000,14.0000){\line(1,0){4.0000}}
\put(44.0000,14.0000){\line(1,0){4.0000}}
\put(48.0000,18.0000){\line(1,0){4.0000}}
\put(52.0000,14.0000){\line(1,0){4.0000}}
\put(56.0000,18.0000){\line(1,0){4.0000}}
\put(60.0000,14.0000){\line(1,0){4.0000}}
\put(64.0000,18.0000){\line(1,0){4.0000}}
\put(68.0000,14.0000){\line(1,0){4.0000}}
\put(72.0000,18.0000){\line(1,0){4.0000}}
\put(76.0000,14.0000){\line(1,0){4.0000}}
\put(80.0000,18.0000){\line(1,0){4.0000}}
\put(84.0000,14.0000){\line(1,0){4.0000}}
\put(88.0000,18.0000){\line(1,0){4.0000}}
\put(92.0000,14.0000){\line(1,0){4.0000}}
\put(96.0000,18.0000){\line(1,0){4.0000}}
\put(100.0000,14.0000){\line(1,0){4.0000}}
\put(104.0000,18.0000){\line(1,0){4.0000}}
\put(108.0000,14.0000){\line(1,0){4.0000}}
\put(112.0000,18.0000){\line(1,0){4.0000}}
\color{red}
\put(-1.0000,10.0000){\color{red}\normalsize\makebox(0,0)[r]{ack3}}
\put(8.0000,8.0000){\line(0,1){4.0000}}
\put(12.0000,12.0000){\line(0,-1){4.0000}}
\put(52.0000,8.0000){\line(0,1){4.0000}}
\put(56.0000,12.0000){\line(0,-1){4.0000}}
\put(60.0000,8.0000){\line(0,1){4.0000}}
\put(64.0000,12.0000){\line(0,-1){4.0000}}
\put(68.0000,8.0000){\line(0,1){4.0000}}
\put(72.0000,12.0000){\line(0,-1){4.0000}}
\put(0.0000,8.0000){\line(1,0){4.0000}}
\put(4.0000,8.0000){\line(1,0){4.0000}}
\put(8.0000,12.0000){\line(1,0){4.0000}}
\put(12.0000,8.0000){\line(1,0){4.0000}}
\put(16.0000,8.0000){\line(1,0){4.0000}}
\put(20.0000,8.0000){\line(1,0){4.0000}}
\put(24.0000,8.0000){\line(1,0){4.0000}}
\put(28.0000,8.0000){\line(1,0){4.0000}}
\put(32.0000,8.0000){\line(1,0){4.0000}}
\put(36.0000,8.0000){\line(1,0){4.0000}}
\put(40.0000,8.0000){\line(1,0){4.0000}}
\put(44.0000,8.0000){\line(1,0){4.0000}}
\put(48.0000,8.0000){\line(1,0){4.0000}}
\put(52.0000,12.0000){\line(1,0){4.0000}}
\put(56.0000,8.0000){\line(1,0){4.0000}}
\put(60.0000,12.0000){\line(1,0){4.0000}}
\put(64.0000,8.0000){\line(1,0){4.0000}}
\put(68.0000,12.0000){\line(1,0){4.0000}}
\put(72.0000,8.0000){\line(1,0){4.0000}}
\put(76.0000,8.0000){\line(1,0){4.0000}}
\put(80.0000,8.0000){\line(1,0){4.0000}}
\put(84.0000,8.0000){\line(1,0){4.0000}}
\put(88.0000,8.0000){\line(1,0){4.0000}}
\put(92.0000,8.0000){\line(1,0){4.0000}}
\put(96.0000,8.0000){\line(1,0){4.0000}}
\put(100.0000,8.0000){\line(1,0){4.0000}}
\put(104.0000,8.0000){\line(1,0){4.0000}}
\put(108.0000,8.0000){\line(1,0){4.0000}}
\put(112.0000,8.0000){\line(1,0){4.0000}}
\color{red}
\put(-1.0000,4.0000){\color{red}\normalsize\makebox(0,0)[r]{ack4}}
\put(16.0000,2.0000){\line(0,1){4.0000}}
\put(20.0000,6.0000){\line(0,-1){4.0000}}
\put(24.0000,2.0000){\line(0,1){4.0000}}
\put(28.0000,6.0000){\line(0,-1){4.0000}}
\put(32.0000,2.0000){\line(0,1){4.0000}}
\put(36.0000,6.0000){\line(0,-1){4.0000}}
\put(40.0000,2.0000){\line(0,1){4.0000}}
\put(44.0000,6.0000){\line(0,-1){4.0000}}
\put(100.0000,2.0000){\line(0,1){4.0000}}
\put(104.0000,6.0000){\line(0,-1){4.0000}}
\put(108.0000,2.0000){\line(0,1){4.0000}}
\put(112.0000,6.0000){\line(0,-1){4.0000}}
\put(0.0000,2.0000){\line(1,0){4.0000}}
\put(4.0000,2.0000){\line(1,0){4.0000}}
\put(8.0000,2.0000){\line(1,0){4.0000}}
\put(12.0000,2.0000){\line(1,0){4.0000}}
\put(16.0000,6.0000){\line(1,0){4.0000}}
\put(20.0000,2.0000){\line(1,0){4.0000}}
\put(24.0000,6.0000){\line(1,0){4.0000}}
\put(28.0000,2.0000){\line(1,0){4.0000}}
\put(32.0000,6.0000){\line(1,0){4.0000}}
\put(36.0000,2.0000){\line(1,0){4.0000}}
\put(40.0000,6.0000){\line(1,0){4.0000}}
\put(44.0000,2.0000){\line(1,0){4.0000}}
\put(48.0000,2.0000){\line(1,0){4.0000}}
\put(52.0000,2.0000){\line(1,0){4.0000}}
\put(56.0000,2.0000){\line(1,0){4.0000}}
\put(60.0000,2.0000){\line(1,0){4.0000}}
\put(64.0000,2.0000){\line(1,0){4.0000}}
\put(68.0000,2.0000){\line(1,0){4.0000}}
\put(72.0000,2.0000){\line(1,0){4.0000}}
\put(76.0000,2.0000){\line(1,0){4.0000}}
\put(80.0000,2.0000){\line(1,0){4.0000}}
\put(84.0000,2.0000){\line(1,0){4.0000}}
\put(88.0000,2.0000){\line(1,0){4.0000}}
\put(92.0000,2.0000){\line(1,0){4.0000}}
\put(96.0000,2.0000){\line(1,0){4.0000}}
\put(100.0000,6.0000){\line(1,0){4.0000}}
\put(104.0000,2.0000){\line(1,0){4.0000}}
\put(108.0000,6.0000){\line(1,0){4.0000}}
\put(112.0000,2.0000){\line(1,0){4.0000}}
\end{picture}
}
\caption{Example Simulation of Robust Controller for Never-Give-Up}
\label{fig:neverGiveUpSim}
\end{figure}

The simulation of controller produced for Never-Give-Up Strategy is given is Figure
\ref{fig:neverGiveUpSim}. The assumptions starts violating after step 15,
where more than 2 requests are true simultaneously, but the controller tries to meet
as many requirements as possible.

}

\oomit{
\subsection{Findings}
\label{sec:findings}
{ \color{black} From the case study experiments we have the following observations based on Expected case and Worst case performance measurement.
\begin{itemize}
\item The must dominance based comparison of various \MPNC supervisor shows that supervisor obtained from more robust specification \emph{must dominates} the supervisor synthesized from robust specification ordered below it. So to obtain a robust controller, the most robust criteria  (for which the corresponding robust specification is realizable) should be chosen. 
\item It could also be seen from the expected values comparison of \MPNC supervisors, that choosing the more robust criteria helps in getting the controller which meets the commitments more often (as shown by expected value figures) on a long run than.
\item  It could also be observed that the expected value of H-optimal controller (obtained by determinizing the H-optimal supervisor \GODSC) meeting the commitments drastically improves when compared with the non-optimized controller (obtained by determinizing the \MPNC) irrespective of whether assumptions are meeting.  

\item The expected values of all the controller obtained by determinizing the \GODSC is very high and in most cases equal for all the robustness criterion, although the controller are not identical. This shows that the soft requirement guided synthesis is able to find out the most desirable (meeting the commitment as much as possible) controller irrespective of the robustness criterion used. But each of the these controllers have different hard guarantees associated with it and hence they are not identical. Therefore, the choice of robustness criterion (hard robustness) along with the soft requirements gives the best robust synthesis method.
\end{itemize}

So, as a general rule we can be obtained most robust controller with respect to Must dominance and Expected value, by determinizing the \GODSC supervisor of \emph{realizable} and most robust specification. 

}
}
\subsection{Simulation of Robust Controllers}
The robust controllers synthesized are encoded as Lustre models. We give the simulation (on same inputs) traces of $Arbiter(4,3,2)$ example using Lustre simulator.
It can be seen how each one of them generator different output trace for the same inputs. 
\begin{figure}[!h]
\centering
\includegraphics[width=\textwidth, keepaspectratio]{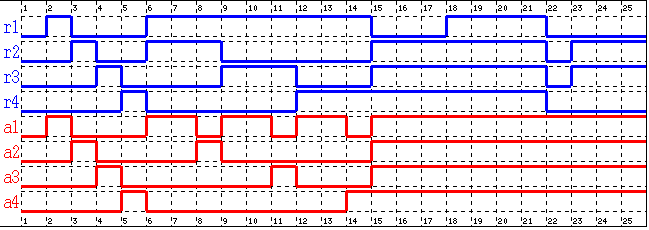}
\caption{Simulation of \MPNC BeCorrect for Arbiter(4,3,2)}
\label{fig:monitor2cella}
\end{figure}

\begin{figure}[!h]
\centering
\includegraphics[width=\textwidth, keepaspectratio]{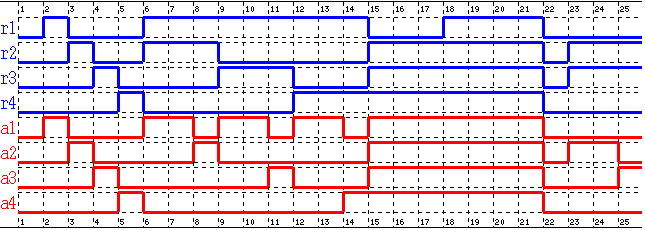}
\caption{Simulation of \MPNC BeCurrentlyCorrect for Arbiter(4,3,2)}
\label{fig:monitor2cellb}
\end{figure}

\begin{figure}[!h]
\centering
\includegraphics[width=\textwidth, keepaspectratio]{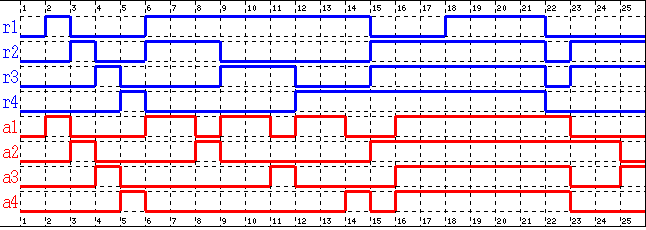}
\caption{Simulation of \MPNC LenCntInt for Arbiter(4,3,2)}
\label{fig:monitor2cellc}
\end{figure}

\begin{figure}[!h]
\centering
\includegraphics[width=\textwidth, keepaspectratio]{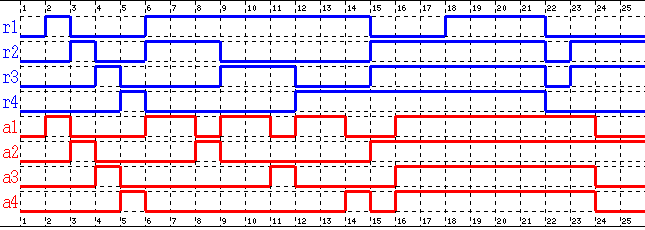}
\caption{Simulation of \MPNC ResBurstInt for Arbiter(4,3,2)}
\label{fig:monitor2celld}
\end{figure}

\oomit{
{
\def\lignefine{\linethickness{0.05pt}}
\def\ligneepaisse{\linethickness{2pt}}
\noindent
\setlength{\unitlength}{0.75mm}
\begin{picture}(170.0000,190.0000)(-19.0000,-5.0000)
\lignefine
\multiput(0.0000,-5.0000)(6.0000,0){26}{\line(0,1){190.0000}}
\put(3.0000,184.0000){\scriptsize\makebox(0,0)[t]{1}}
\put(3.0000,-4.0000){\scriptsize\makebox(0,0)[b]{1}}
\put(9.0000,184.0000){\scriptsize\makebox(0,0)[t]{2}}
\put(9.0000,-4.0000){\scriptsize\makebox(0,0)[b]{2}}
\put(15.0000,184.0000){\scriptsize\makebox(0,0)[t]{3}}
\put(15.0000,-4.0000){\scriptsize\makebox(0,0)[b]{3}}
\put(21.0000,184.0000){\scriptsize\makebox(0,0)[t]{4}}
\put(21.0000,-4.0000){\scriptsize\makebox(0,0)[b]{4}}
\put(27.0000,184.0000){\scriptsize\makebox(0,0)[t]{5}}
\put(27.0000,-4.0000){\scriptsize\makebox(0,0)[b]{5}}
\put(33.0000,184.0000){\scriptsize\makebox(0,0)[t]{6}}
\put(33.0000,-4.0000){\scriptsize\makebox(0,0)[b]{6}}
\put(39.0000,184.0000){\scriptsize\makebox(0,0)[t]{7}}
\put(39.0000,-4.0000){\scriptsize\makebox(0,0)[b]{7}}
\put(45.0000,184.0000){\scriptsize\makebox(0,0)[t]{8}}
\put(45.0000,-4.0000){\scriptsize\makebox(0,0)[b]{8}}
\put(51.0000,184.0000){\scriptsize\makebox(0,0)[t]{9}}
\put(51.0000,-4.0000){\scriptsize\makebox(0,0)[b]{9}}
\put(57.0000,184.0000){\scriptsize\makebox(0,0)[t]{10}}
\put(57.0000,-4.0000){\scriptsize\makebox(0,0)[b]{10}}
\put(63.0000,184.0000){\scriptsize\makebox(0,0)[t]{11}}
\put(63.0000,-4.0000){\scriptsize\makebox(0,0)[b]{11}}
\put(69.0000,184.0000){\scriptsize\makebox(0,0)[t]{12}}
\put(69.0000,-4.0000){\scriptsize\makebox(0,0)[b]{12}}
\put(75.0000,184.0000){\scriptsize\makebox(0,0)[t]{13}}
\put(75.0000,-4.0000){\scriptsize\makebox(0,0)[b]{13}}
\put(81.0000,184.0000){\scriptsize\makebox(0,0)[t]{14}}
\put(81.0000,-4.0000){\scriptsize\makebox(0,0)[b]{14}}
\put(87.0000,184.0000){\scriptsize\makebox(0,0)[t]{15}}
\put(87.0000,-4.0000){\scriptsize\makebox(0,0)[b]{15}}
\put(93.0000,184.0000){\scriptsize\makebox(0,0)[t]{16}}
\put(93.0000,-4.0000){\scriptsize\makebox(0,0)[b]{16}}
\put(99.0000,184.0000){\scriptsize\makebox(0,0)[t]{17}}
\put(99.0000,-4.0000){\scriptsize\makebox(0,0)[b]{17}}
\put(105.0000,184.0000){\scriptsize\makebox(0,0)[t]{18}}
\put(105.0000,-4.0000){\scriptsize\makebox(0,0)[b]{18}}
\put(111.0000,184.0000){\scriptsize\makebox(0,0)[t]{19}}
\put(111.0000,-4.0000){\scriptsize\makebox(0,0)[b]{19}}
\put(117.0000,184.0000){\scriptsize\makebox(0,0)[t]{20}}
\put(117.0000,-4.0000){\scriptsize\makebox(0,0)[b]{20}}
\put(123.0000,184.0000){\scriptsize\makebox(0,0)[t]{21}}
\put(123.0000,-4.0000){\scriptsize\makebox(0,0)[b]{21}}
\put(129.0000,184.0000){\scriptsize\makebox(0,0)[t]{22}}
\put(129.0000,-4.0000){\scriptsize\makebox(0,0)[b]{22}}
\put(135.0000,184.0000){\scriptsize\makebox(0,0)[t]{23}}
\put(135.0000,-4.0000){\scriptsize\makebox(0,0)[b]{23}}
\put(141.0000,184.0000){\scriptsize\makebox(0,0)[t]{24}}
\put(141.0000,-4.0000){\scriptsize\makebox(0,0)[b]{24}}
\put(147.0000,184.0000){\scriptsize\makebox(0,0)[t]{25}}
\put(147.0000,-4.0000){\scriptsize\makebox(0,0)[b]{25}}
\put(-1.0000,157.7500){\line(1,0){152.0000}}
\put(-1.0000,178.0000){\line(1,0){152.0000}}
\put(-1.0000,135.5000){\line(1,0){152.0000}}
\put(-1.0000,155.7500){\line(1,0){152.0000}}
\put(-1.0000,113.2500){\line(1,0){152.0000}}
\put(-1.0000,133.5000){\line(1,0){152.0000}}
\put(-1.0000,91.0000){\line(1,0){152.0000}}
\put(-1.0000,111.2500){\line(1,0){152.0000}}
\put(-1.0000,68.7500){\line(1,0){152.0000}}
\put(-1.0000,89.0000){\line(1,0){152.0000}}
\put(-1.0000,46.5000){\line(1,0){152.0000}}
\put(-1.0000,66.7500){\line(1,0){152.0000}}
\put(-1.0000,24.2500){\line(1,0){152.0000}}
\put(-1.0000,44.5000){\line(1,0){152.0000}}
\put(-1.0000,2.0000){\line(1,0){152.0000}}
\put(-1.0000,22.2500){\line(1,0){152.0000}}
\ligneepaisse
\put(-1.0000,167.8750){\normalsize\makebox(0,0)[r]{r1}}
\put(6.0000,157.7500){\line(0,1){20.2500}}
\put(12.0000,178.0000){\line(0,-1){20.2500}}
\put(30.0000,157.7500){\line(0,1){20.2500}}
\put(84.0000,178.0000){\line(0,-1){20.2500}}
\put(102.0000,157.7500){\line(0,1){20.2500}}
\put(126.0000,178.0000){\line(0,-1){20.2500}}
\put(0.0000,157.7500){\line(1,0){6.0000}}
\put(6.0000,178.0000){\line(1,0){6.0000}}
\put(12.0000,157.7500){\line(1,0){6.0000}}
\put(18.0000,157.7500){\line(1,0){6.0000}}
\put(24.0000,157.7500){\line(1,0){6.0000}}
\put(30.0000,178.0000){\line(1,0){6.0000}}
\put(36.0000,178.0000){\line(1,0){6.0000}}
\put(42.0000,178.0000){\line(1,0){6.0000}}
\put(48.0000,178.0000){\line(1,0){6.0000}}
\put(54.0000,178.0000){\line(1,0){6.0000}}
\put(60.0000,178.0000){\line(1,0){6.0000}}
\put(66.0000,178.0000){\line(1,0){6.0000}}
\put(72.0000,178.0000){\line(1,0){6.0000}}
\put(78.0000,178.0000){\line(1,0){6.0000}}
\put(84.0000,157.7500){\line(1,0){6.0000}}
\put(90.0000,157.7500){\line(1,0){6.0000}}
\put(96.0000,157.7500){\line(1,0){6.0000}}
\put(102.0000,178.0000){\line(1,0){6.0000}}
\put(108.0000,178.0000){\line(1,0){6.0000}}
\put(114.0000,178.0000){\line(1,0){6.0000}}
\put(120.0000,178.0000){\line(1,0){6.0000}}
\put(126.0000,157.7500){\line(1,0){6.0000}}
\put(132.0000,157.7500){\line(1,0){6.0000}}
\put(138.0000,157.7500){\line(1,0){6.0000}}
\put(144.0000,157.7500){\line(1,0){6.0000}}
\put(-1.0000,145.6250){\normalsize\makebox(0,0)[r]{r2}}
\put(12.0000,135.5000){\line(0,1){20.2500}}
\put(18.0000,155.7500){\line(0,-1){20.2500}}
\put(30.0000,135.5000){\line(0,1){20.2500}}
\put(48.0000,155.7500){\line(0,-1){20.2500}}
\put(84.0000,135.5000){\line(0,1){20.2500}}
\put(126.0000,155.7500){\line(0,-1){20.2500}}
\put(132.0000,135.5000){\line(0,1){20.2500}}
\put(0.0000,135.5000){\line(1,0){6.0000}}
\put(6.0000,135.5000){\line(1,0){6.0000}}
\put(12.0000,155.7500){\line(1,0){6.0000}}
\put(18.0000,135.5000){\line(1,0){6.0000}}
\put(24.0000,135.5000){\line(1,0){6.0000}}
\put(30.0000,155.7500){\line(1,0){6.0000}}
\put(36.0000,155.7500){\line(1,0){6.0000}}
\put(42.0000,155.7500){\line(1,0){6.0000}}
\put(48.0000,135.5000){\line(1,0){6.0000}}
\put(54.0000,135.5000){\line(1,0){6.0000}}
\put(60.0000,135.5000){\line(1,0){6.0000}}
\put(66.0000,135.5000){\line(1,0){6.0000}}
\put(72.0000,135.5000){\line(1,0){6.0000}}
\put(78.0000,135.5000){\line(1,0){6.0000}}
\put(84.0000,155.7500){\line(1,0){6.0000}}
\put(90.0000,155.7500){\line(1,0){6.0000}}
\put(96.0000,155.7500){\line(1,0){6.0000}}
\put(102.0000,155.7500){\line(1,0){6.0000}}
\put(108.0000,155.7500){\line(1,0){6.0000}}
\put(114.0000,155.7500){\line(1,0){6.0000}}
\put(120.0000,155.7500){\line(1,0){6.0000}}
\put(126.0000,135.5000){\line(1,0){6.0000}}
\put(132.0000,155.7500){\line(1,0){6.0000}}
\put(138.0000,155.7500){\line(1,0){6.0000}}
\put(144.0000,155.7500){\line(1,0){6.0000}}
\put(-1.0000,123.3750){\normalsize\makebox(0,0)[r]{r3}}
\put(18.0000,113.2500){\line(0,1){20.2500}}
\put(24.0000,133.5000){\line(0,-1){20.2500}}
\put(48.0000,113.2500){\line(0,1){20.2500}}
\put(66.0000,133.5000){\line(0,-1){20.2500}}
\put(84.0000,113.2500){\line(0,1){20.2500}}
\put(126.0000,133.5000){\line(0,-1){20.2500}}
\put(132.0000,113.2500){\line(0,1){20.2500}}
\put(0.0000,113.2500){\line(1,0){6.0000}}
\put(6.0000,113.2500){\line(1,0){6.0000}}
\put(12.0000,113.2500){\line(1,0){6.0000}}
\put(18.0000,133.5000){\line(1,0){6.0000}}
\put(24.0000,113.2500){\line(1,0){6.0000}}
\put(30.0000,113.2500){\line(1,0){6.0000}}
\put(36.0000,113.2500){\line(1,0){6.0000}}
\put(42.0000,113.2500){\line(1,0){6.0000}}
\put(48.0000,133.5000){\line(1,0){6.0000}}
\put(54.0000,133.5000){\line(1,0){6.0000}}
\put(60.0000,133.5000){\line(1,0){6.0000}}
\put(66.0000,113.2500){\line(1,0){6.0000}}
\put(72.0000,113.2500){\line(1,0){6.0000}}
\put(78.0000,113.2500){\line(1,0){6.0000}}
\put(84.0000,133.5000){\line(1,0){6.0000}}
\put(90.0000,133.5000){\line(1,0){6.0000}}
\put(96.0000,133.5000){\line(1,0){6.0000}}
\put(102.0000,133.5000){\line(1,0){6.0000}}
\put(108.0000,133.5000){\line(1,0){6.0000}}
\put(114.0000,133.5000){\line(1,0){6.0000}}
\put(120.0000,133.5000){\line(1,0){6.0000}}
\put(126.0000,113.2500){\line(1,0){6.0000}}
\put(132.0000,133.5000){\line(1,0){6.0000}}
\put(138.0000,133.5000){\line(1,0){6.0000}}
\put(144.0000,133.5000){\line(1,0){6.0000}}
\put(-1.0000,101.1250){\normalsize\makebox(0,0)[r]{r4}}
\put(24.0000,91.0000){\line(0,1){20.2500}}
\put(30.0000,111.2500){\line(0,-1){20.2500}}
\put(66.0000,91.0000){\line(0,1){20.2500}}
\put(126.0000,111.2500){\line(0,-1){20.2500}}
\put(0.0000,91.0000){\line(1,0){6.0000}}
\put(6.0000,91.0000){\line(1,0){6.0000}}
\put(12.0000,91.0000){\line(1,0){6.0000}}
\put(18.0000,91.0000){\line(1,0){6.0000}}
\put(24.0000,111.2500){\line(1,0){6.0000}}
\put(30.0000,91.0000){\line(1,0){6.0000}}
\put(36.0000,91.0000){\line(1,0){6.0000}}
\put(42.0000,91.0000){\line(1,0){6.0000}}
\put(48.0000,91.0000){\line(1,0){6.0000}}
\put(54.0000,91.0000){\line(1,0){6.0000}}
\put(60.0000,91.0000){\line(1,0){6.0000}}
\put(66.0000,111.2500){\line(1,0){6.0000}}
\put(72.0000,111.2500){\line(1,0){6.0000}}
\put(78.0000,111.2500){\line(1,0){6.0000}}
\put(84.0000,111.2500){\line(1,0){6.0000}}
\put(90.0000,111.2500){\line(1,0){6.0000}}
\put(96.0000,111.2500){\line(1,0){6.0000}}
\put(102.0000,111.2500){\line(1,0){6.0000}}
\put(108.0000,111.2500){\line(1,0){6.0000}}
\put(114.0000,111.2500){\line(1,0){6.0000}}
\put(120.0000,111.2500){\line(1,0){6.0000}}
\put(126.0000,91.0000){\line(1,0){6.0000}}
\put(132.0000,91.0000){\line(1,0){6.0000}}
\put(138.0000,91.0000){\line(1,0){6.0000}}
\put(144.0000,91.0000){\line(1,0){6.0000}}
\put(-1.0000,78.8750){\normalsize\makebox(0,0)[r]{a1}}
\put(6.0000,68.7500){\line(0,1){20.2500}}
\put(12.0000,89.0000){\line(0,-1){20.2500}}
\put(30.0000,68.7500){\line(0,1){20.2500}}
\put(42.0000,89.0000){\line(0,-1){20.2500}}
\put(48.0000,68.7500){\line(0,1){20.2500}}
\put(60.0000,89.0000){\line(0,-1){20.2500}}
\put(66.0000,68.7500){\line(0,1){20.2500}}
\put(78.0000,89.0000){\line(0,-1){20.2500}}
\put(108.0000,68.7500){\line(0,1){20.2500}}
\put(126.0000,89.0000){\line(0,-1){20.2500}}
\put(0.0000,68.7500){\line(1,0){6.0000}}
\put(6.0000,89.0000){\line(1,0){6.0000}}
\put(12.0000,68.7500){\line(1,0){6.0000}}
\put(18.0000,68.7500){\line(1,0){6.0000}}
\put(24.0000,68.7500){\line(1,0){6.0000}}
\put(30.0000,89.0000){\line(1,0){6.0000}}
\put(36.0000,89.0000){\line(1,0){6.0000}}
\put(42.0000,68.7500){\line(1,0){6.0000}}
\put(48.0000,89.0000){\line(1,0){6.0000}}
\put(54.0000,89.0000){\line(1,0){6.0000}}
\put(60.0000,68.7500){\line(1,0){6.0000}}
\put(66.0000,89.0000){\line(1,0){6.0000}}
\put(72.0000,89.0000){\line(1,0){6.0000}}
\put(78.0000,68.7500){\line(1,0){6.0000}}
\put(84.0000,68.7500){\line(1,0){6.0000}}
\put(90.0000,68.7500){\line(1,0){6.0000}}
\put(96.0000,68.7500){\line(1,0){6.0000}}
\put(102.0000,68.7500){\line(1,0){6.0000}}
\put(108.0000,89.0000){\line(1,0){6.0000}}
\put(114.0000,89.0000){\line(1,0){6.0000}}
\put(120.0000,89.0000){\line(1,0){6.0000}}
\put(126.0000,68.7500){\line(1,0){6.0000}}
\put(132.0000,68.7500){\line(1,0){6.0000}}
\put(138.0000,68.7500){\line(1,0){6.0000}}
\put(144.0000,68.7500){\line(1,0){6.0000}}
\put(-1.0000,56.6250){\normalsize\makebox(0,0)[r]{a2}}
\put(12.0000,46.5000){\line(0,1){20.2500}}
\put(18.0000,66.7500){\line(0,-1){20.2500}}
\put(42.0000,46.5000){\line(0,1){20.2500}}
\put(48.0000,66.7500){\line(0,-1){20.2500}}
\put(84.0000,46.5000){\line(0,1){20.2500}}
\put(90.0000,66.7500){\line(0,-1){20.2500}}
\put(102.0000,46.5000){\line(0,1){20.2500}}
\put(114.0000,66.7500){\line(0,-1){20.2500}}
\put(120.0000,46.5000){\line(0,1){20.2500}}
\put(126.0000,66.7500){\line(0,-1){20.2500}}
\put(132.0000,46.5000){\line(0,1){20.2500}}
\put(144.0000,66.7500){\line(0,-1){20.2500}}
\put(0.0000,46.5000){\line(1,0){6.0000}}
\put(6.0000,46.5000){\line(1,0){6.0000}}
\put(12.0000,66.7500){\line(1,0){6.0000}}
\put(18.0000,46.5000){\line(1,0){6.0000}}
\put(24.0000,46.5000){\line(1,0){6.0000}}
\put(30.0000,46.5000){\line(1,0){6.0000}}
\put(36.0000,46.5000){\line(1,0){6.0000}}
\put(42.0000,66.7500){\line(1,0){6.0000}}
\put(48.0000,46.5000){\line(1,0){6.0000}}
\put(54.0000,46.5000){\line(1,0){6.0000}}
\put(60.0000,46.5000){\line(1,0){6.0000}}
\put(66.0000,46.5000){\line(1,0){6.0000}}
\put(72.0000,46.5000){\line(1,0){6.0000}}
\put(78.0000,46.5000){\line(1,0){6.0000}}
\put(84.0000,66.7500){\line(1,0){6.0000}}
\put(90.0000,46.5000){\line(1,0){6.0000}}
\put(96.0000,46.5000){\line(1,0){6.0000}}
\put(102.0000,66.7500){\line(1,0){6.0000}}
\put(108.0000,66.7500){\line(1,0){6.0000}}
\put(114.0000,46.5000){\line(1,0){6.0000}}
\put(120.0000,66.7500){\line(1,0){6.0000}}
\put(126.0000,46.5000){\line(1,0){6.0000}}
\put(132.0000,66.7500){\line(1,0){6.0000}}
\put(138.0000,66.7500){\line(1,0){6.0000}}
\put(144.0000,46.5000){\line(1,0){6.0000}}
\put(-1.0000,34.3750){\normalsize\makebox(0,0)[r]{a3}}
\put(18.0000,24.2500){\line(0,1){20.2500}}
\put(24.0000,44.5000){\line(0,-1){20.2500}}
\put(60.0000,24.2500){\line(0,1){20.2500}}
\put(66.0000,44.5000){\line(0,-1){20.2500}}
\put(90.0000,24.2500){\line(0,1){20.2500}}
\put(96.0000,44.5000){\line(0,-1){20.2500}}
\put(108.0000,24.2500){\line(0,1){20.2500}}
\put(114.0000,44.5000){\line(0,-1){20.2500}}
\put(120.0000,24.2500){\line(0,1){20.2500}}
\put(126.0000,44.5000){\line(0,-1){20.2500}}
\put(144.0000,24.2500){\line(0,1){20.2500}}
\put(0.0000,24.2500){\line(1,0){6.0000}}
\put(6.0000,24.2500){\line(1,0){6.0000}}
\put(12.0000,24.2500){\line(1,0){6.0000}}
\put(18.0000,44.5000){\line(1,0){6.0000}}
\put(24.0000,24.2500){\line(1,0){6.0000}}
\put(30.0000,24.2500){\line(1,0){6.0000}}
\put(36.0000,24.2500){\line(1,0){6.0000}}
\put(42.0000,24.2500){\line(1,0){6.0000}}
\put(48.0000,24.2500){\line(1,0){6.0000}}
\put(54.0000,24.2500){\line(1,0){6.0000}}
\put(60.0000,44.5000){\line(1,0){6.0000}}
\put(66.0000,24.2500){\line(1,0){6.0000}}
\put(72.0000,24.2500){\line(1,0){6.0000}}
\put(78.0000,24.2500){\line(1,0){6.0000}}
\put(84.0000,24.2500){\line(1,0){6.0000}}
\put(90.0000,44.5000){\line(1,0){6.0000}}
\put(96.0000,24.2500){\line(1,0){6.0000}}
\put(102.0000,24.2500){\line(1,0){6.0000}}
\put(108.0000,44.5000){\line(1,0){6.0000}}
\put(114.0000,24.2500){\line(1,0){6.0000}}
\put(120.0000,44.5000){\line(1,0){6.0000}}
\put(126.0000,24.2500){\line(1,0){6.0000}}
\put(132.0000,24.2500){\line(1,0){6.0000}}
\put(138.0000,24.2500){\line(1,0){6.0000}}
\put(144.0000,44.5000){\line(1,0){6.0000}}
\put(-1.0000,12.1250){\normalsize\makebox(0,0)[r]{a4}}
\put(24.0000,2.0000){\line(0,1){20.2500}}
\put(30.0000,22.2500){\line(0,-1){20.2500}}
\put(78.0000,2.0000){\line(0,1){20.2500}}
\put(84.0000,22.2500){\line(0,-1){20.2500}}
\put(96.0000,2.0000){\line(0,1){20.2500}}
\put(102.0000,22.2500){\line(0,-1){20.2500}}
\put(108.0000,2.0000){\line(0,1){20.2500}}
\put(114.0000,22.2500){\line(0,-1){20.2500}}
\put(120.0000,2.0000){\line(0,1){20.2500}}
\put(126.0000,22.2500){\line(0,-1){20.2500}}
\put(0.0000,2.0000){\line(1,0){6.0000}}
\put(6.0000,2.0000){\line(1,0){6.0000}}
\put(12.0000,2.0000){\line(1,0){6.0000}}
\put(18.0000,2.0000){\line(1,0){6.0000}}
\put(24.0000,22.2500){\line(1,0){6.0000}}
\put(30.0000,2.0000){\line(1,0){6.0000}}
\put(36.0000,2.0000){\line(1,0){6.0000}}
\put(42.0000,2.0000){\line(1,0){6.0000}}
\put(48.0000,2.0000){\line(1,0){6.0000}}
\put(54.0000,2.0000){\line(1,0){6.0000}}
\put(60.0000,2.0000){\line(1,0){6.0000}}
\put(66.0000,2.0000){\line(1,0){6.0000}}
\put(72.0000,2.0000){\line(1,0){6.0000}}
\put(78.0000,22.2500){\line(1,0){6.0000}}
\put(84.0000,2.0000){\line(1,0){6.0000}}
\put(90.0000,2.0000){\line(1,0){6.0000}}
\put(96.0000,22.2500){\line(1,0){6.0000}}
\put(102.0000,2.0000){\line(1,0){6.0000}}
\put(108.0000,22.2500){\line(1,0){6.0000}}
\put(114.0000,2.0000){\line(1,0){6.0000}}
\put(120.0000,22.2500){\line(1,0){6.0000}}
\put(126.0000,2.0000){\line(1,0){6.0000}}
\put(132.0000,2.0000){\line(1,0){6.0000}}
\put(138.0000,2.0000){\line(1,0){6.0000}}
\put(144.0000,2.0000){\line(1,0){6.0000}}
\label{fig:simDetGODSCBeCurrentlyCorrect}
\end{picture}
}

{
\def\lignefine{\linethickness{0.05pt}}
\def\ligneepaisse{\linethickness{2pt}}
\noindent
\setlength{\unitlength}{0.75mm}
\begin{picture}(170.0000,190.0000)(-19.0000,-5.0000)
\lignefine
\multiput(0.0000,-5.0000)(6.0000,0){26}{\line(0,1){190.0000}}
\put(3.0000,184.0000){\scriptsize\makebox(0,0)[t]{1}}
\put(3.0000,-4.0000){\scriptsize\makebox(0,0)[b]{1}}
\put(9.0000,184.0000){\scriptsize\makebox(0,0)[t]{2}}
\put(9.0000,-4.0000){\scriptsize\makebox(0,0)[b]{2}}
\put(15.0000,184.0000){\scriptsize\makebox(0,0)[t]{3}}
\put(15.0000,-4.0000){\scriptsize\makebox(0,0)[b]{3}}
\put(21.0000,184.0000){\scriptsize\makebox(0,0)[t]{4}}
\put(21.0000,-4.0000){\scriptsize\makebox(0,0)[b]{4}}
\put(27.0000,184.0000){\scriptsize\makebox(0,0)[t]{5}}
\put(27.0000,-4.0000){\scriptsize\makebox(0,0)[b]{5}}
\put(33.0000,184.0000){\scriptsize\makebox(0,0)[t]{6}}
\put(33.0000,-4.0000){\scriptsize\makebox(0,0)[b]{6}}
\put(39.0000,184.0000){\scriptsize\makebox(0,0)[t]{7}}
\put(39.0000,-4.0000){\scriptsize\makebox(0,0)[b]{7}}
\put(45.0000,184.0000){\scriptsize\makebox(0,0)[t]{8}}
\put(45.0000,-4.0000){\scriptsize\makebox(0,0)[b]{8}}
\put(51.0000,184.0000){\scriptsize\makebox(0,0)[t]{9}}
\put(51.0000,-4.0000){\scriptsize\makebox(0,0)[b]{9}}
\put(57.0000,184.0000){\scriptsize\makebox(0,0)[t]{10}}
\put(57.0000,-4.0000){\scriptsize\makebox(0,0)[b]{10}}
\put(63.0000,184.0000){\scriptsize\makebox(0,0)[t]{11}}
\put(63.0000,-4.0000){\scriptsize\makebox(0,0)[b]{11}}
\put(69.0000,184.0000){\scriptsize\makebox(0,0)[t]{12}}
\put(69.0000,-4.0000){\scriptsize\makebox(0,0)[b]{12}}
\put(75.0000,184.0000){\scriptsize\makebox(0,0)[t]{13}}
\put(75.0000,-4.0000){\scriptsize\makebox(0,0)[b]{13}}
\put(81.0000,184.0000){\scriptsize\makebox(0,0)[t]{14}}
\put(81.0000,-4.0000){\scriptsize\makebox(0,0)[b]{14}}
\put(87.0000,184.0000){\scriptsize\makebox(0,0)[t]{15}}
\put(87.0000,-4.0000){\scriptsize\makebox(0,0)[b]{15}}
\put(93.0000,184.0000){\scriptsize\makebox(0,0)[t]{16}}
\put(93.0000,-4.0000){\scriptsize\makebox(0,0)[b]{16}}
\put(99.0000,184.0000){\scriptsize\makebox(0,0)[t]{17}}
\put(99.0000,-4.0000){\scriptsize\makebox(0,0)[b]{17}}
\put(105.0000,184.0000){\scriptsize\makebox(0,0)[t]{18}}
\put(105.0000,-4.0000){\scriptsize\makebox(0,0)[b]{18}}
\put(111.0000,184.0000){\scriptsize\makebox(0,0)[t]{19}}
\put(111.0000,-4.0000){\scriptsize\makebox(0,0)[b]{19}}
\put(117.0000,184.0000){\scriptsize\makebox(0,0)[t]{20}}
\put(117.0000,-4.0000){\scriptsize\makebox(0,0)[b]{20}}
\put(123.0000,184.0000){\scriptsize\makebox(0,0)[t]{21}}
\put(123.0000,-4.0000){\scriptsize\makebox(0,0)[b]{21}}
\put(129.0000,184.0000){\scriptsize\makebox(0,0)[t]{22}}
\put(129.0000,-4.0000){\scriptsize\makebox(0,0)[b]{22}}
\put(135.0000,184.0000){\scriptsize\makebox(0,0)[t]{23}}
\put(135.0000,-4.0000){\scriptsize\makebox(0,0)[b]{23}}
\put(141.0000,184.0000){\scriptsize\makebox(0,0)[t]{24}}
\put(141.0000,-4.0000){\scriptsize\makebox(0,0)[b]{24}}
\put(147.0000,184.0000){\scriptsize\makebox(0,0)[t]{25}}
\put(147.0000,-4.0000){\scriptsize\makebox(0,0)[b]{25}}
\put(-1.0000,157.7500){\line(1,0){152.0000}}
\put(-1.0000,178.0000){\line(1,0){152.0000}}
\put(-1.0000,135.5000){\line(1,0){152.0000}}
\put(-1.0000,155.7500){\line(1,0){152.0000}}
\put(-1.0000,113.2500){\line(1,0){152.0000}}
\put(-1.0000,133.5000){\line(1,0){152.0000}}
\put(-1.0000,91.0000){\line(1,0){152.0000}}
\put(-1.0000,111.2500){\line(1,0){152.0000}}
\put(-1.0000,68.7500){\line(1,0){152.0000}}
\put(-1.0000,89.0000){\line(1,0){152.0000}}
\put(-1.0000,46.5000){\line(1,0){152.0000}}
\put(-1.0000,66.7500){\line(1,0){152.0000}}
\put(-1.0000,24.2500){\line(1,0){152.0000}}
\put(-1.0000,44.5000){\line(1,0){152.0000}}
\put(-1.0000,2.0000){\line(1,0){152.0000}}
\put(-1.0000,22.2500){\line(1,0){152.0000}}
\ligneepaisse
\put(-1.0000,167.8750){\normalsize\makebox(0,0)[r]{r1}}
\put(6.0000,157.7500){\line(0,1){20.2500}}
\put(12.0000,178.0000){\line(0,-1){20.2500}}
\put(30.0000,157.7500){\line(0,1){20.2500}}
\put(84.0000,178.0000){\line(0,-1){20.2500}}
\put(102.0000,157.7500){\line(0,1){20.2500}}
\put(126.0000,178.0000){\line(0,-1){20.2500}}
\put(0.0000,157.7500){\line(1,0){6.0000}}
\put(6.0000,178.0000){\line(1,0){6.0000}}
\put(12.0000,157.7500){\line(1,0){6.0000}}
\put(18.0000,157.7500){\line(1,0){6.0000}}
\put(24.0000,157.7500){\line(1,0){6.0000}}
\put(30.0000,178.0000){\line(1,0){6.0000}}
\put(36.0000,178.0000){\line(1,0){6.0000}}
\put(42.0000,178.0000){\line(1,0){6.0000}}
\put(48.0000,178.0000){\line(1,0){6.0000}}
\put(54.0000,178.0000){\line(1,0){6.0000}}
\put(60.0000,178.0000){\line(1,0){6.0000}}
\put(66.0000,178.0000){\line(1,0){6.0000}}
\put(72.0000,178.0000){\line(1,0){6.0000}}
\put(78.0000,178.0000){\line(1,0){6.0000}}
\put(84.0000,157.7500){\line(1,0){6.0000}}
\put(90.0000,157.7500){\line(1,0){6.0000}}
\put(96.0000,157.7500){\line(1,0){6.0000}}
\put(102.0000,178.0000){\line(1,0){6.0000}}
\put(108.0000,178.0000){\line(1,0){6.0000}}
\put(114.0000,178.0000){\line(1,0){6.0000}}
\put(120.0000,178.0000){\line(1,0){6.0000}}
\put(126.0000,157.7500){\line(1,0){6.0000}}
\put(132.0000,157.7500){\line(1,0){6.0000}}
\put(138.0000,157.7500){\line(1,0){6.0000}}
\put(144.0000,157.7500){\line(1,0){6.0000}}
\put(-1.0000,145.6250){\normalsize\makebox(0,0)[r]{r2}}
\put(12.0000,135.5000){\line(0,1){20.2500}}
\put(18.0000,155.7500){\line(0,-1){20.2500}}
\put(30.0000,135.5000){\line(0,1){20.2500}}
\put(48.0000,155.7500){\line(0,-1){20.2500}}
\put(84.0000,135.5000){\line(0,1){20.2500}}
\put(126.0000,155.7500){\line(0,-1){20.2500}}
\put(132.0000,135.5000){\line(0,1){20.2500}}
\put(0.0000,135.5000){\line(1,0){6.0000}}
\put(6.0000,135.5000){\line(1,0){6.0000}}
\put(12.0000,155.7500){\line(1,0){6.0000}}
\put(18.0000,135.5000){\line(1,0){6.0000}}
\put(24.0000,135.5000){\line(1,0){6.0000}}
\put(30.0000,155.7500){\line(1,0){6.0000}}
\put(36.0000,155.7500){\line(1,0){6.0000}}
\put(42.0000,155.7500){\line(1,0){6.0000}}
\put(48.0000,135.5000){\line(1,0){6.0000}}
\put(54.0000,135.5000){\line(1,0){6.0000}}
\put(60.0000,135.5000){\line(1,0){6.0000}}
\put(66.0000,135.5000){\line(1,0){6.0000}}
\put(72.0000,135.5000){\line(1,0){6.0000}}
\put(78.0000,135.5000){\line(1,0){6.0000}}
\put(84.0000,155.7500){\line(1,0){6.0000}}
\put(90.0000,155.7500){\line(1,0){6.0000}}
\put(96.0000,155.7500){\line(1,0){6.0000}}
\put(102.0000,155.7500){\line(1,0){6.0000}}
\put(108.0000,155.7500){\line(1,0){6.0000}}
\put(114.0000,155.7500){\line(1,0){6.0000}}
\put(120.0000,155.7500){\line(1,0){6.0000}}
\put(126.0000,135.5000){\line(1,0){6.0000}}
\put(132.0000,155.7500){\line(1,0){6.0000}}
\put(138.0000,155.7500){\line(1,0){6.0000}}
\put(144.0000,155.7500){\line(1,0){6.0000}}
\put(-1.0000,123.3750){\normalsize\makebox(0,0)[r]{r3}}
\put(18.0000,113.2500){\line(0,1){20.2500}}
\put(24.0000,133.5000){\line(0,-1){20.2500}}
\put(48.0000,113.2500){\line(0,1){20.2500}}
\put(66.0000,133.5000){\line(0,-1){20.2500}}
\put(84.0000,113.2500){\line(0,1){20.2500}}
\put(126.0000,133.5000){\line(0,-1){20.2500}}
\put(132.0000,113.2500){\line(0,1){20.2500}}
\put(0.0000,113.2500){\line(1,0){6.0000}}
\put(6.0000,113.2500){\line(1,0){6.0000}}
\put(12.0000,113.2500){\line(1,0){6.0000}}
\put(18.0000,133.5000){\line(1,0){6.0000}}
\put(24.0000,113.2500){\line(1,0){6.0000}}
\put(30.0000,113.2500){\line(1,0){6.0000}}
\put(36.0000,113.2500){\line(1,0){6.0000}}
\put(42.0000,113.2500){\line(1,0){6.0000}}
\put(48.0000,133.5000){\line(1,0){6.0000}}
\put(54.0000,133.5000){\line(1,0){6.0000}}
\put(60.0000,133.5000){\line(1,0){6.0000}}
\put(66.0000,113.2500){\line(1,0){6.0000}}
\put(72.0000,113.2500){\line(1,0){6.0000}}
\put(78.0000,113.2500){\line(1,0){6.0000}}
\put(84.0000,133.5000){\line(1,0){6.0000}}
\put(90.0000,133.5000){\line(1,0){6.0000}}
\put(96.0000,133.5000){\line(1,0){6.0000}}
\put(102.0000,133.5000){\line(1,0){6.0000}}
\put(108.0000,133.5000){\line(1,0){6.0000}}
\put(114.0000,133.5000){\line(1,0){6.0000}}
\put(120.0000,133.5000){\line(1,0){6.0000}}
\put(126.0000,113.2500){\line(1,0){6.0000}}
\put(132.0000,133.5000){\line(1,0){6.0000}}
\put(138.0000,133.5000){\line(1,0){6.0000}}
\put(144.0000,133.5000){\line(1,0){6.0000}}
\put(-1.0000,101.1250){\normalsize\makebox(0,0)[r]{r4}}
\put(24.0000,91.0000){\line(0,1){20.2500}}
\put(30.0000,111.2500){\line(0,-1){20.2500}}
\put(66.0000,91.0000){\line(0,1){20.2500}}
\put(126.0000,111.2500){\line(0,-1){20.2500}}
\put(0.0000,91.0000){\line(1,0){6.0000}}
\put(6.0000,91.0000){\line(1,0){6.0000}}
\put(12.0000,91.0000){\line(1,0){6.0000}}
\put(18.0000,91.0000){\line(1,0){6.0000}}
\put(24.0000,111.2500){\line(1,0){6.0000}}
\put(30.0000,91.0000){\line(1,0){6.0000}}
\put(36.0000,91.0000){\line(1,0){6.0000}}
\put(42.0000,91.0000){\line(1,0){6.0000}}
\put(48.0000,91.0000){\line(1,0){6.0000}}
\put(54.0000,91.0000){\line(1,0){6.0000}}
\put(60.0000,91.0000){\line(1,0){6.0000}}
\put(66.0000,111.2500){\line(1,0){6.0000}}
\put(72.0000,111.2500){\line(1,0){6.0000}}
\put(78.0000,111.2500){\line(1,0){6.0000}}
\put(84.0000,111.2500){\line(1,0){6.0000}}
\put(90.0000,111.2500){\line(1,0){6.0000}}
\put(96.0000,111.2500){\line(1,0){6.0000}}
\put(102.0000,111.2500){\line(1,0){6.0000}}
\put(108.0000,111.2500){\line(1,0){6.0000}}
\put(114.0000,111.2500){\line(1,0){6.0000}}
\put(120.0000,111.2500){\line(1,0){6.0000}}
\put(126.0000,91.0000){\line(1,0){6.0000}}
\put(132.0000,91.0000){\line(1,0){6.0000}}
\put(138.0000,91.0000){\line(1,0){6.0000}}
\put(144.0000,91.0000){\line(1,0){6.0000}}
\put(-1.0000,78.8750){\normalsize\makebox(0,0)[r]{a1}}
\put(6.0000,68.7500){\line(0,1){20.2500}}
\put(12.0000,89.0000){\line(0,-1){20.2500}}
\put(30.0000,68.7500){\line(0,1){20.2500}}
\put(42.0000,89.0000){\line(0,-1){20.2500}}
\put(48.0000,68.7500){\line(0,1){20.2500}}
\put(60.0000,89.0000){\line(0,-1){20.2500}}
\put(66.0000,68.7500){\line(0,1){20.2500}}
\put(78.0000,89.0000){\line(0,-1){20.2500}}
\put(96.0000,68.7500){\line(0,1){20.2500}}
\put(126.0000,89.0000){\line(0,-1){20.2500}}
\put(0.0000,68.7500){\line(1,0){6.0000}}
\put(6.0000,89.0000){\line(1,0){6.0000}}
\put(12.0000,68.7500){\line(1,0){6.0000}}
\put(18.0000,68.7500){\line(1,0){6.0000}}
\put(24.0000,68.7500){\line(1,0){6.0000}}
\put(30.0000,89.0000){\line(1,0){6.0000}}
\put(36.0000,89.0000){\line(1,0){6.0000}}
\put(42.0000,68.7500){\line(1,0){6.0000}}
\put(48.0000,89.0000){\line(1,0){6.0000}}
\put(54.0000,89.0000){\line(1,0){6.0000}}
\put(60.0000,68.7500){\line(1,0){6.0000}}
\put(66.0000,89.0000){\line(1,0){6.0000}}
\put(72.0000,89.0000){\line(1,0){6.0000}}
\put(78.0000,68.7500){\line(1,0){6.0000}}
\put(84.0000,68.7500){\line(1,0){6.0000}}
\put(90.0000,68.7500){\line(1,0){6.0000}}
\put(96.0000,89.0000){\line(1,0){6.0000}}
\put(102.0000,89.0000){\line(1,0){6.0000}}
\put(108.0000,89.0000){\line(1,0){6.0000}}
\put(114.0000,89.0000){\line(1,0){6.0000}}
\put(120.0000,89.0000){\line(1,0){6.0000}}
\put(126.0000,68.7500){\line(1,0){6.0000}}
\put(132.0000,68.7500){\line(1,0){6.0000}}
\put(138.0000,68.7500){\line(1,0){6.0000}}
\put(144.0000,68.7500){\line(1,0){6.0000}}
\put(-1.0000,56.6250){\normalsize\makebox(0,0)[r]{a2}}
\put(12.0000,46.5000){\line(0,1){20.2500}}
\put(18.0000,66.7500){\line(0,-1){20.2500}}
\put(42.0000,46.5000){\line(0,1){20.2500}}
\put(48.0000,66.7500){\line(0,-1){20.2500}}
\put(84.0000,46.5000){\line(0,1){20.2500}}
\put(102.0000,66.7500){\line(0,-1){20.2500}}
\put(114.0000,46.5000){\line(0,1){20.2500}}
\put(120.0000,66.7500){\line(0,-1){20.2500}}
\put(132.0000,46.5000){\line(0,1){20.2500}}
\put(144.0000,66.7500){\line(0,-1){20.2500}}
\put(0.0000,46.5000){\line(1,0){6.0000}}
\put(6.0000,46.5000){\line(1,0){6.0000}}
\put(12.0000,66.7500){\line(1,0){6.0000}}
\put(18.0000,46.5000){\line(1,0){6.0000}}
\put(24.0000,46.5000){\line(1,0){6.0000}}
\put(30.0000,46.5000){\line(1,0){6.0000}}
\put(36.0000,46.5000){\line(1,0){6.0000}}
\put(42.0000,66.7500){\line(1,0){6.0000}}
\put(48.0000,46.5000){\line(1,0){6.0000}}
\put(54.0000,46.5000){\line(1,0){6.0000}}
\put(60.0000,46.5000){\line(1,0){6.0000}}
\put(66.0000,46.5000){\line(1,0){6.0000}}
\put(72.0000,46.5000){\line(1,0){6.0000}}
\put(78.0000,46.5000){\line(1,0){6.0000}}
\put(84.0000,66.7500){\line(1,0){6.0000}}
\put(90.0000,66.7500){\line(1,0){6.0000}}
\put(96.0000,66.7500){\line(1,0){6.0000}}
\put(102.0000,46.5000){\line(1,0){6.0000}}
\put(108.0000,46.5000){\line(1,0){6.0000}}
\put(114.0000,66.7500){\line(1,0){6.0000}}
\put(120.0000,46.5000){\line(1,0){6.0000}}
\put(126.0000,46.5000){\line(1,0){6.0000}}
\put(132.0000,66.7500){\line(1,0){6.0000}}
\put(138.0000,66.7500){\line(1,0){6.0000}}
\put(144.0000,46.5000){\line(1,0){6.0000}}
\put(-1.0000,34.3750){\normalsize\makebox(0,0)[r]{a3}}
\put(18.0000,24.2500){\line(0,1){20.2500}}
\put(24.0000,44.5000){\line(0,-1){20.2500}}
\put(60.0000,24.2500){\line(0,1){20.2500}}
\put(66.0000,44.5000){\line(0,-1){20.2500}}
\put(96.0000,24.2500){\line(0,1){20.2500}}
\put(102.0000,44.5000){\line(0,-1){20.2500}}
\put(114.0000,24.2500){\line(0,1){20.2500}}
\put(120.0000,44.5000){\line(0,-1){20.2500}}
\put(144.0000,24.2500){\line(0,1){20.2500}}
\put(0.0000,24.2500){\line(1,0){6.0000}}
\put(6.0000,24.2500){\line(1,0){6.0000}}
\put(12.0000,24.2500){\line(1,0){6.0000}}
\put(18.0000,44.5000){\line(1,0){6.0000}}
\put(24.0000,24.2500){\line(1,0){6.0000}}
\put(30.0000,24.2500){\line(1,0){6.0000}}
\put(36.0000,24.2500){\line(1,0){6.0000}}
\put(42.0000,24.2500){\line(1,0){6.0000}}
\put(48.0000,24.2500){\line(1,0){6.0000}}
\put(54.0000,24.2500){\line(1,0){6.0000}}
\put(60.0000,44.5000){\line(1,0){6.0000}}
\put(66.0000,24.2500){\line(1,0){6.0000}}
\put(72.0000,24.2500){\line(1,0){6.0000}}
\put(78.0000,24.2500){\line(1,0){6.0000}}
\put(84.0000,24.2500){\line(1,0){6.0000}}
\put(90.0000,24.2500){\line(1,0){6.0000}}
\put(96.0000,44.5000){\line(1,0){6.0000}}
\put(102.0000,24.2500){\line(1,0){6.0000}}
\put(108.0000,24.2500){\line(1,0){6.0000}}
\put(114.0000,44.5000){\line(1,0){6.0000}}
\put(120.0000,24.2500){\line(1,0){6.0000}}
\put(126.0000,24.2500){\line(1,0){6.0000}}
\put(132.0000,24.2500){\line(1,0){6.0000}}
\put(138.0000,24.2500){\line(1,0){6.0000}}
\put(144.0000,44.5000){\line(1,0){6.0000}}
\put(-1.0000,12.1250){\normalsize\makebox(0,0)[r]{a4}}
\put(24.0000,2.0000){\line(0,1){20.2500}}
\put(30.0000,22.2500){\line(0,-1){20.2500}}
\put(78.0000,2.0000){\line(0,1){20.2500}}
\put(84.0000,22.2500){\line(0,-1){20.2500}}
\put(96.0000,2.0000){\line(0,1){20.2500}}
\put(102.0000,22.2500){\line(0,-1){20.2500}}
\put(114.0000,2.0000){\line(0,1){20.2500}}
\put(120.0000,22.2500){\line(0,-1){20.2500}}
\put(0.0000,2.0000){\line(1,0){6.0000}}
\put(6.0000,2.0000){\line(1,0){6.0000}}
\put(12.0000,2.0000){\line(1,0){6.0000}}
\put(18.0000,2.0000){\line(1,0){6.0000}}
\put(24.0000,22.2500){\line(1,0){6.0000}}
\put(30.0000,2.0000){\line(1,0){6.0000}}
\put(36.0000,2.0000){\line(1,0){6.0000}}
\put(42.0000,2.0000){\line(1,0){6.0000}}
\put(48.0000,2.0000){\line(1,0){6.0000}}
\put(54.0000,2.0000){\line(1,0){6.0000}}
\put(60.0000,2.0000){\line(1,0){6.0000}}
\put(66.0000,2.0000){\line(1,0){6.0000}}
\put(72.0000,2.0000){\line(1,0){6.0000}}
\put(78.0000,22.2500){\line(1,0){6.0000}}
\put(84.0000,2.0000){\line(1,0){6.0000}}
\put(90.0000,2.0000){\line(1,0){6.0000}}
\put(96.0000,22.2500){\line(1,0){6.0000}}
\put(102.0000,2.0000){\line(1,0){6.0000}}
\put(108.0000,2.0000){\line(1,0){6.0000}}
\put(114.0000,22.2500){\line(1,0){6.0000}}
\put(120.0000,2.0000){\line(1,0){6.0000}}
\put(126.0000,2.0000){\line(1,0){6.0000}}
\put(132.0000,2.0000){\line(1,0){6.0000}}
\put(138.0000,2.0000){\line(1,0){6.0000}}
\put(144.0000,2.0000){\line(1,0){6.0000}}
\label{fig:simDetGODCSKBLenBurstInt}
\end{picture}
}
}

{\color{black}
\subsection{Automata synthesized for 2 cell arbiter}
\label{section:2cellexample}
Fig.~\ref{fig:monitor2cell} gives the monitor automaton for 2-cell arbiter (See section \ref{sec:casestudy}) for n-cell arbiter specification.
 Each transition is labeled by 4 bit vector giving values of $req_1, req_2, ack_1, ack_2$.

\begin{figure}[!h]
\centering
\includegraphics[width=\textwidth, keepaspectratio]{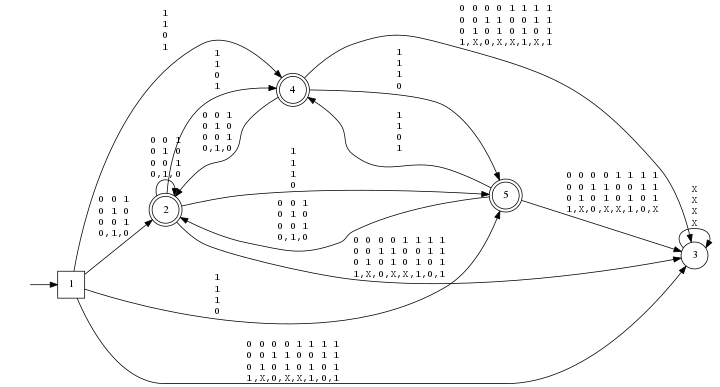}
\caption{Safety Monitor Automaton: 2 Cell Arbiter}
\label{fig:monitor2cell}
\end{figure}

\begin{figure}[h]
\begin{minipage}{2.5in}
\includegraphics[scale=.4,keepaspectratio]{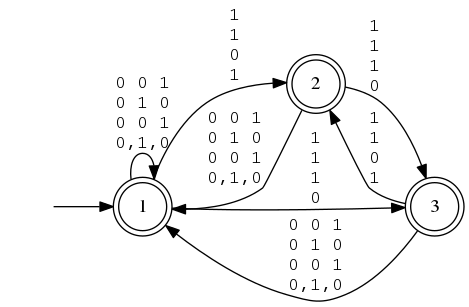}
\end{minipage} \ \ 
\begin{minipage}{2.5in}
\includegraphics[scale=.4,keepaspectratio]{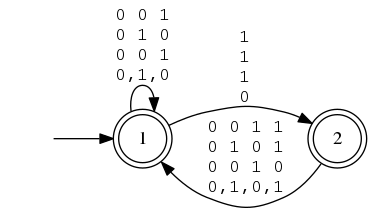}
\end{minipage}
\caption{Supervisors for $Arb^{hard}(2,2)$ 
(a): \MPNC  (b): \GODSC determinized}
\label{fig:mpnc}
\end{figure}

Fig.~\ref{fig:mpnc} gives the \MPNC automaton for the 2-cell arbiter computed from 
the safety monitor automaton of Fig.~\ref{fig:monitor2cell}. 
(There is an additional reject state. All missing transitions are
directed to it. These are omitted from the diagram for simplicity.) Note that this is a DFA 
whose transitions are labelled by 4-bit vectors representing alphabet $2^{\{req_1, req_2, ack_1, ack_2\}}$.
As defined in Definition \ref{def:nondetmm},
the DFA also denotes an output-nondeterministic Mealy machine with input variables $(req_1,req_2)$ and 
output variables $(ack_1,ack_2)$.
The automaton is nondeterministic in output 
as from state $1$, on input $(1,1)$ it can move to state $2$ with output $(1,0)$, 
or to state $3$ with output $(0,1)$. The reader can verify that the automaton is non-blocking and hence
a controller.

In 2-cell arbiter example, with soft requirements $\langle ack1, ack2 \rangle$ 
which give $ack1$ priority over $ack2$, 
we obtain the \GODSC controller automaton of Fig.~\ref{fig:mpnc}(b) from the \MPNC of Fig.~\ref{fig:mpnc}(a).
Note that we minimize the automaton at each step.
}
\clearpage
{\color{black}
\section{Mine pump \DCSYNTH Specification and Robust controllers}
\label{section:minepumpcasestudy}
In this section we illustrate the effect of {\em soft requirements} 
on the quality of the synthesized controllers with a case study of a {\em mine pump controller} specification \cite{Pan01a}. 
%
The controller has two input sensors: high water level sensor $\mathit{HH2O}$
and methane leakage sensor $\mathit{HCH4}$; 
and one output, 
$\mathit{PumpOn}$ to keep the pump on. 
The objective of the controller is to \emph{safely} operate 
the pump in such a way that 
the water level never remains high continuously for more
that $w$ cycles. 
\medskip

Thus, minepump controller has input and output variables  $(\{HH2O, HCH4\}, \{PumpOn\}) $.

\noindent We have following {\bf assumptions} on the mine and the pump.Their conjunction
is denote {\bf $MineAssume(\epsilon,\zeta,\kappa)$}.
\begin{itemize}

\item[-] \emph{Pump capacity:} $([]!(slen=\epsilon ~\&\& ~([[PUMPON ~\&\& ~HH2O]]$\verb|^|$\langle HH2O \rangle)))$.
If pump is continuously on for at least $\mathit{\epsilon+1}$ cycles, when water level is high, 
then water level will not be high at the end (i.~e.~ $\mathit{\epsilon+1}$ cycles). 
\item[-] \emph{Methane release:} 
$[](([HCH4]$\verb|^|$[!HCH4]$\verb|^|$\langle HCH4 \rangle ) \Rightarrow (slen> \zeta))$ and
$[]([[HCH4]] \Rightarrow slen<\kappa)$.
The minimum separation between the two leaks of methane is $\mathit{\zeta}$ cycles 
and the methane leak cannot persist for more than $\mathit{\kappa}$ cycles. 
\end{itemize}
The {\bf commitments} are:
\begin{itemize}
\item[-] \emph{Safety conditions:} $[[(HCH4 ~|| ~!HH2O) \Rightarrow ~!PumpOn)]]$ saying that
if there is a methane leakage or absence of high water, then pump should be off;
and $[]([[HH2O]] ~\&\& ~slen < w)$ stating that it is not possible that the water level continuously remains high for $w$ cycles.
\end{itemize}
The conjunction of commitments is denoted {\bf $MineCommit(w)$}.

It is to be noted that all the assumption and commitments formulas are made intermittent by restricting the scope of  each formula to the most recent interval of $n$ cycles i.e. formula is satisfied or violation is decided based on the recent past instead of whole past. 
This is done by using formula using $KBOUNDED(D,n)$ macro (See Figure \ref{sec:minepumpSpecSource}) defined as $((slen<K => (D)) ~\&\&~ (true \verb|^|
 slen=K => true \verb|^| (slen=K ~\&\&~ (D))))$.
This is important for robust specification, because if we consider the whole past then even a single violation of formula will keep it unsatisfiable always without allowing it to recover ever.
The Minepump specification {\bf $MinePump(w,\epsilon,\zeta,\kappa)$} is given by the assumption, commitment pair
$(MineAssume(\epsilon,\zeta,\kappa), MineCommit(w))$. 
Appendix \ref{sec:minepumpSpecSource} gives the textual source of $(MinePump(8,2,6,2))$ for $BeCurrentlyCorrect$
robust specification used by the \DCSYNTH\/ tool.

\subsection{Must Dominance betweeen various robust Controllers}
\subsubsection{Comparision based on Must Dominance}
We use "$==$" to indicate that the supervisors are \emph{identical}, whereas "$=$" indicates that they are \emph{must equivalent} but not identical.
For $Minepump(8,2,6,2)$ case study some of the supervisors become identical as given in Appendix \ref{section:minepumpcasestudy} Figure \ref{fig:mineMPNCRobustnessHierarchy}.

\begin{itemize}
\item $\GODSC$ supervisors does not have theoretical order. However, for $Minepump(8,2,6,2)$ the complete robustness order collapses and all supervisors becomes must equivalent. Although, they are not identical (See Appendix Figure \ref{fig:mineGODSCRobustnessHierarchy}).
The $\_nonInt$ represents that all the non intermittent specification i.e. ResCnt, LenCnt, ResBurst and LenBurst.

\end{itemize}

\begin{figure}
\begin{tikzpicture}

    \node[ellipse] (5) at (0,2) {BeCurrentlyCorrect};
        \node[ellipse] (7) at (0,1) {LenBurstInt};
        \node[ellipse] (6) at (0, 0)  {LenCntInt};
    \node[ellipse] (4) at (0, -1)  {ResCntInt==ResBurstInt};
    \node[ellipse] (3) at (0, -2) {ResCnt == LenCnt == ResBurst==LenBurst};
    \node[ellipse] (1) at (0, -3) {BeCorrect};
	\draw [->] (1) edge (3);
    \draw [->] (3) edge (4);
    \draw [->] (4) edge (6);  
    \draw [->] (6) edge (7);  
    \draw [->] (7) edge (5);    
\end{tikzpicture}
\caption{Robust synthesis criteria hierarchy for Must Dominance between \MPNC.}
\label{fig:mineMPNCRobustnessHierarchy}

\end{figure}

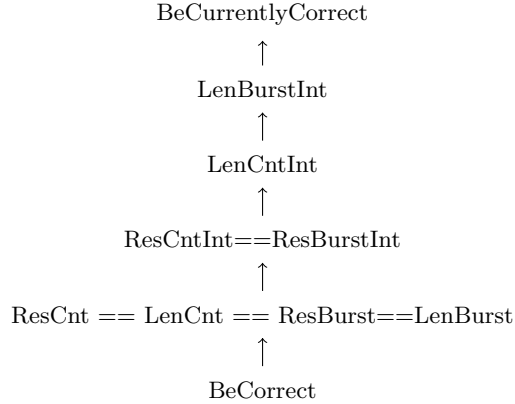
\begin{figure}
\oomit{
\centering
\begin{tikzpicture}
    \node[ellipse] (1) at (0, 0) {AssumeFalse=BeCorrect=
    (KB\_nonInt==)=(KBResCntInt==KBResBurstInt)=
    (KBLenCntInt=KBLenBurstInt)
    =BeCurrentlyCorrect};
\end{tikzpicture}
}
\begin{center}
 AssumeFalse=BeCorrect=
    (\_nonInt==)=(ResCntInt==ResBurstInt)=
    LenCntInt=LenBurstInt
    =BeCurrentlyCorrect
\end{center}
\caption{Robust synthesis criteria hierarchy for Must Dominance between \GODSC.}
\label{fig:mineGODSCRobustnessHierarchy}

\end{figure}

\subsection{Expected Case Performance}
Expected values of various robust controllers are given in the Table \ref{tab:mineExpectedValuesPumpOn} and \ref{tab:mineExpectedValuesPumpOff}. It is evident that expected values of meeting the commitment using hard robustness (given as \MPNC in column 2) as well as soft robustness (given as \GODSC in column 4) are useful and maximize the holding of commitments.

\begin{table}
\caption{Expected Values of meeting the commitments for various robust controllers of Minepump case study with default PumpOn}
\label{tab:mineExpectedValuesPumpOn}
\begin{tabular}{|c|c|c|c|c|}
\hline
Minepump(PumpOn) & \multicolumn{2}{c|}{\MPNC}& \multicolumn{2}{c|}{\GODSC}
\\
\hline
& $E(C)$ & $E(A)$ & $E(C)$ & $E(A)$\\
\hline
 AssumeFalse	& 0.00000 & 0.012257 & 0.997070	& 0.014441
\\
\hline

 BeCorrect	& 0.000000	& 0.0122567 & 0.997070	& 0.014441
\\
\hline

		
 ResCnt(2,8)	& \multirow{4}{*}{0.000000}	& \multirow{4}{*}{0.012257} & \multirow{4}{*}{0.997070}	& \multirow{4}{*}{0.014441}
\\
\cline{1-1}
ResBurst(2,8)	& & & & \oomit{0.000000	& 0.012257 & 0.997070	& 0.014441}
\\
\cline{1-1}

 LenCnt(2,8)	& & & & \oomit{0.000000	& 0.012257 & 0.997070	& 0.014441}
\\
\cline{1-1}

LenBurst(2,8)	& & & & \oomit{0.000000	& 0.012257 & 0.997070	& 0.014441}
\\
\hline

 ResCntInt(2,8)	& \multirow{2}{*}{0.000966}	& \multirow{2}{*}{0.013247} &	\multirow{2}{*}{0.997070}	& \multirow{2}{*}{0.014441}
\\
\cline{1-1}
		
 ResBurstInt(2,8)	& & & & \oomit{ 0.000966	& 0.013247 & 0.997070	& 0.014441}
\\
\hline

 LenCntInt(2,8)	& 0.0027342	& 0.013318 & 0.997070	& 0.014441
\\
\hline

 LenBurstInt(2,8)	& 0.004514	& 0.013321 & 0.997070	& 0.014441
\\
\hline

 BeCurrentlyCorrect &	0.997070	& 0.014441 &	0.997070	& 0.014441
\\
\hline

\end{tabular}
\end{table}

\begin{table}
\caption{Expected Values of meeting the commitments for various robust controllers of Minepump case study with default PumpOff}
\label{tab:mineExpectedValuesPumpOff}
\begin{tabular}{|c|c|c|c|c|}
\hline

Minepump (PumpOff) & \multicolumn{2}{c|}{\MPNC}& \multicolumn{2}{c|}{\GODSC}  
\\
\hline
AssumeFalse & 0.997070 & 0.024903	& 0.997070	& 0.024903
\\
\hline

 BeCorrect	& 0.997070	& 0.024904 & 0.997070	& 0.024904
\\
\hline




		
 ResCnt(2,8)	& \multirow{4}{*}{0.997070}	& \multirow{4}{*}{0.024904} & \multirow{4}{*}{0.997070}	& \multirow{4}{*}{0.024904}
\\
\cline{1-1}

ResBurst(2,8)	& & & & \oomit{ 0.997070	& 0.024904 & 0.997070	& 0.024904}
\\
\cline{1-1}
		
 LenCnt(2,8)	& & & & \oomit{0.997070	& 0.024904 & 0.997070	& 0.024904}
\\
\cline{1-1}

LenBurst(2,8)	& & & & \oomit{0.997070	& 0.024904 & 0.997070	& 0.024904}
\\
\hline
		
 ResCntInt(2,8)	& \multirow{2}{*}{0.997070}	& \multirow{2}{*}{0.024662} & \multirow{2}{*}{0.997070} & \multirow{2}{*}{0.024662}
\\
\cline{1-1}
 ResBurstInt(2,8)	& & & & \oomit{0.997070 &	0.024662 &	0.997070	& 0.024662}
\\
\hline

 LenCntInt(2,8)	& 0.997070 &	0.024206 & 0.997070 &	0.024206
\\
\hline
 LenBurstInt(2,8)	& 0.997070 &	0.023268 & 0.997070 &	0.023268
\\
\hline
  BeCurrentlyCorrect	& 0.997070	& 0.024293 & 0.997070 & 0.024293
\\
\hline


\end{tabular}
\end{table}

\subsection{Tool Performance}
Table \ref{tab:performanceRobustMinepumpSynthesis} provides the detailed statistics for synthesizing various robust controllers. It can be seen that monitor automaton computation and \GODSC computation takes the maximum time and  total for computing the controller is 3 seconds.

\begin{scriptsize}

\begin {table}[t]
\caption {
\DCSYNTH Statistics for synthesis of various Robust Minepump example $Minepump(8,2,6,2)$.
Parameters Used: Disc Factor = 0.900000, H = 50, Type = Avg-Max, Delta = 0.000100} 
\label{tab:performanceRobustMinepumpSynthesis}
\begin{center}
	\begin{tabular}	{|c|c||c|c|c|c|c|}
	\hline
	& 
	& \multicolumn{5}{c|}	 {\textbf{ Synthesis (States/Time)}}
	 
	\\
	\hline
		\textbf{Sr} &
		\textbf{Robustness}  & 
		\textbf{Monitor}  & 
		\textbf{\MPNC} & 
		\textbf{\GODSC} &
	 	\multicolumn{2}{c|}{\textbf{Controller Stats}} 
		\\
	\cline{6-7}
		\textbf{No} &
		 \textbf{Criterion} &
		\textbf{Stats} &
		\textbf{Stats} &
		\textbf{Stats} &
		\textbf{PumpOn} &
		\textbf{PumpOff}  
		\\
	     \hline
	     \multicolumn{7}{|c|}{\MPNC}\\
	     \hline
	    & AssumeFalse      & 5268/0.413 &    5267/0.051886 &      1/0.203156 &      2/0.000059  &     2/0.000042  
\\
	     \hline

&  BeCorrect     & 6299/0.460 &    5390/0.051482 &      61/0.203690 &      21/0.003270  &      47/0.002950  
\\
	     \hline

&  BeCurrentlyCorrect & 6102/0.416 &    905/0.025670 &      30/0.020076 &      2/0.000397  & 25/0.000496  
\\
	     \hline

& LenBurst(2,8)      & 7689/0.489 &    5404/0.055062 &      45/0.209456 &      22/0.000437  &   37/0.000461  

\\
	     \hline

& LenBurstInt(2,8)      & 12169/0.754 &    6067/0.086382 &      342/0.210206 &      219/0.009601  &     51/0.008175  
\\
	     \hline

& LenCnt(2,8)      & 7689/0.495 &    5404/0.055288 &      45/0.208896 &      22/0.000849  &     37/0.000442  

\\
	     \hline

&  LenCntInt(2,8)     & 11971/0.813 &    6671/0.088817 &      420/0.232484 &      257/0.012043  &      71/0.011698  
\\
	     \hline

& ResBurst(2,8)      & 7689/0.469 &    5404/0.057592 &      45/0.209711 &      22/0.000633  &      37/0.000431  
\\
	     \hline

& ResBurstInt(2,8)      & 13723/0.870 &    8430/0.098644 &      469/0.288897 &      306/0.013445  &       88/0.007452  
\\
	     \hline

& ResCnt(2,8)      & 7689/0.468 &    5404/0.055485 &      45/0.208627 &      22/0.000719  &       37/0.000672  
\\
	     \hline

&  ResCntInt(2,8)      & 13723/0.883 &    8430/0.100492 &      469/0.287158 &      306/0.008003  &       88/0.007359  
\\
	     \hline
\oomit{
& SepBurst(2,8)      & 9016 &    5729/0.060334 &      210/0.215072 &      101/0.017330  &       38/0.016691  
\\
	     \hline
}
	     	     
		\hline
		\multicolumn{7}{|c|}{\GODSC}\\
	     \hline

& AssumeFalse      & 5268/0.413 &    5267/0.051461 &      2/1.215522 &      2/0.000054  &       2/0.000068 

\\
	     \hline

 &  BeCorrect      & 6299/0.468 &    5390/0.053660 &      61/1.246768 &      2/0.000471  &       47/0.001097 

\\
	     \hline

 &  BeCurrentlyCorrect      & 6102/0.413 &    905/0.025486 &      30/0.133132 &      2/0.000500  &       25/0.000490  

\\
	     \hline
 
&  LenBurst(2,8)      & 7689/0.532 &    5404/0.053116 &      44/1.229908 &      2/0.000566  &      37/0.000478 

\\
	     \hline

 & LenBurstInt(2,8)      & 12169/0.764 &    6067/0.085852 &      76/1.237215 &      2/0.000996  &       51/0.001041

\\
	     \hline

& LenCnt(2,8)      & 7689/0.516 &    5404/0.053728 &      44/1.239951 &      2/0.000705  &       37/0.000455  

\\
	     \hline

& LenCntInt(2,8)      & 11971/0.786 &    6671/0.089755 &      116/1.374660 &      2/0.001376  &       71/0.001521  

\\
	     \hline

& ResBurst(2,8)      & 7689/0.523 &    5404/0.053781 &      44/1.322684 &      2/0.000595  &       37/0.000748 

\\
	     \hline

 & ResBurstInt(2,8)      & 13723/0.875 &    8430/0.100483 &      156/1.817073 &      2/0.002108  &       88/0.002476  

\\
	     \hline

& ResCnt(2,8)     & 7689/0.479 &    5404/0.053901 &      44/1.230389 &      2/0.000566  &      37/0.000452 

\\
	     \hline

 & ResCntInt(2,8)     & 13723/0.884 &    8430/0.104372 &      156/1.810811 &      2/0.002319  &       88/0.001283  

\\
	     \hline
\oomit{
& SepBurst(2,8)      & 9016 &    5729/0.058311 &      46/1.293318 &      2/0.000552  &       38/0.000703  
\\
	     \hline
}

	\end{tabular}
\end{center}
\end{table}

\end{scriptsize}


\oomit{
Let  $\mathbf{mpsr1}$ denote  
$!(true \textrm{\textasciicircum} (slen=2 ~\&\& \langle \rangle \langle HCH4 \rangle)$\\
$\textrm{\textasciicircum}\langle PumpOn \rangle ~)$,
we also define the indicator variables $as1$ and $c1$,
for $ASSUME$ and $COMMIT$ formulae defined above.
These indicator variables track the validity of assumption and commitment respectively at any given point
in an infinite run.
which states that it is not the case that there is a methane leakage somewhere in the last 3 cycles and the pump is still on.
\begin{itemize}
\item \emph{MPV1}: Soft requirement $\langle PumpOn:2 \rangle$ states that try to keep {\em pump  on} as much as possible.
\item \emph{MPV2}: Soft requirement $\langle \mathit{mpsr1}:4,~~ \mathit{PumpOn}:2 \rangle$ states that try to keep {\em pump off} if there is a methane leak in the last 3 cycles; 
otherwise try to keep pump on. 
\item \emph{MPV3}: Soft requirement $\langle !PumpOn:2 \rangle$ states that try to keep {\em pump  off} as much as possible.
\item \emph{MPV4}: Soft requirement $\langle c1:2 \rangle$ states that try to meet {\em commitment} as much as possible.
\end{itemize}
}

\oomit{
We have synthesized the controllers for the specification $MinePump(8,2,6,2)$  and $MinePump^{soft}(8,2,6,2)$. The performance of the deterministic controllers (obtained by setting the default value of $\mathit{PumpOn}$ and $\mathit{PumpOff}$ respectively)  are compared using the expected value
of meeting the requirement $\mathit{REQUIREMENT}$.
}

\oomit{
Appendix \ref{sec:minepumpInputAndSimulation} gives textual
input to the tool and simulations carried out using the synthesized controllers.

It can be argued that these controllers have different quality attributes. 
For example, 
$\mathit{MPV1}$ gives rise to a controller that 
aggressively gets rid of water by keeping pump on whenever possible.
$\mathit{MPV3}$ saves power by keeping pump off as much as possible. On the other hand, 
$\mathit{MPV2}$ aggressively keeps pump on but it opts for a safer policy of not keeping pump on for two cycles even after methane is gone. 
}

\oomit{
\begin {table}[!h]
\caption {\DCSYNTH\/ synthesis for the hard requirement $MinePump(8,2,10,1,1)$ with soft requirements $MPV1$, $MPV2$ and $MPV3$. }
\label{tab:comparisionopt}
        \begin{center}
        \begin{tabular}
        {|c|c|c|c|} 
                \hline
                \multirow{2}{*}{Soft Requirement}  &  \multicolumn{3}{|c|}{ Controller Synthesis}\\
                \cline{2-4}
                &  states & time (Sec) & Memory (MB) \\
                \hline      
%
                $MPV1$  & 31 &0.07  & 9.1 \\
                \hline  
                $MPV2$  & 34 &0.09 & 8.9 \\
                \hline
                $MPV3$ & 83 &0.04  & 9.1 \\
                \hline            
	\end{tabular}%
\end{center}
\vspace{-0.5cm}
\end{table}
\begin {table}[!h]
\caption {Worst Case Latency Analysis using CTLDC for MAXLEN computation}
\label{tab:perfMeasure2}
\begin{center}
	\begin{tabular}	{|c|c|c|}
	\hline
		Soft Requirement & Response Formula ($D^p$) & MAXLEN value \\
	 \hline
%
		$MPV1$ & $[[AssumptionOk\ \&\&\ HH2O]]$ & 4 \\
	 \hline
	    $MPV2$ & $[[AssumptionOk\ \&\&\ HH2O]]$ & 7 \\
	 \hline
	    $MPV3$ & $[[AssumptionOk\ \&\&\ HH2O]]$ & 8 \\
	 \hline
	\end{tabular}
\vspace{-1cm}
\end{center}
\end{table}

\subsection{Quantitative Latency Measurement}
\label{subsection:quantitative}

A model checking technique, implemented in a tool CTLDC \cite{Pan05}, can
measure the worst case longest/shortest span of a \qddc formula $D^p$ in a system $M$.
This involves symbolic search for longest/shortest paths.
Formally,
%
$MAXLEN(D^{p},M)=
\sup \{e-b \mid \rho[b,e] \models D^{p}, \rho \in Exec(M) \}$ which computes the length of
the longest interval satisfying $D^{p}$ within the executions of $M$. 
Similarly, for $MINLEN(D^{p},M)$. Thus, 
CTLDC allows measurement of {\em user defined latencies} in the worst case.
This can be used to evaluate the performance (latency) of synthesized controllers.
%
%
Table~\ref{tab:perfMeasure2} gives worst case latency measurements carried out 
using tool CTLDC for the mine pump controllers with soft requirements $MPV1$, $MPV2$ and $MPV3$.
Clearly, $MPV1$ gets rid of water the fastest (in atmost 4 cycles) 
whereas $MPV3$ is the slowest in this sense (requiring up to 8 cycles).

}
}
{\color{black}
\subsection{Mine pump Specification Source}
\label{sec:minepumpSpecSource}
\begin{figure}[!b]
\caption{Mine pump specification in \DCSYNTH}
\begin{small}
\framebox{\parbox[t][][t]{\columnwidth}{
$\begin{array}{l}
\mathrm{\textsf{\#qsf "minepump"}}\\
\mathrm{\textsf{interface\{}}\\
\quad\mathrm{\textsf{input HH2Op, HCH4p;}}\\
\quad\mathrm{\textsf{output PUMPONp monitor x, A monitor x, C monitor x;}}\\
\quad\mathrm{\textsf{constant w = 8, epsilon=2 , zeta=6, kappa=2;}}\\
\mathrm{\textsf{\}}}\\
\mathrm{\textsf{definitions\{}}\\
\mathrm{\textsf{//Methane release assumptions}}\\
\mathrm{\textsf{dc methane1(HCH4)\{}}\\
\quad\mathrm{\textsf{KBOUNDED([]([HCH4]\textasciicircum[!HCH4]\textasciicircum \textless HCH4 \textgreater =\textgreater slen\textgreater zeta ), n);}}\\
\mathrm{\textsf{\}}}\\
\mathrm{\textsf{dc methane2(HCH4)\{}}\\
\quad\mathrm{\textsf{KBOUNDED([]([[HCH4]] =\textgreater slen\textless kappa ), n);}}\\
\mathrm{\textsf{\}}}\\
\mathrm{\textsf{//Pump capacity assumption}}\\
\mathrm{\textsf{dc pumpcap1(HH2O, PUMPON)\{}}\\
\quad\mathrm{\textsf{KBOUNDED([]!(slen=epsilon \&\& ([[PUMPON \&\& HH2O]] \textasciicircum \textless HH2O \textgreater)),n);}}\\
\mathrm{\textsf{\}}}\\
\mathrm{\textsf{dc MineAssume\_2\_6\_2(HH2O, HCH4, PUMPON)\{}}\\
\quad\mathrm{\textsf{methane1(HH2O, HCH4, PUMPON) \&\& methane2(HH2O, HCH4, PUMPON) \&\&}}\\ 
\quad\mathrm{\textsf{pumpcap1(HH2O, HCH4, PUMPON);}}\\ 
\mathrm{\textsf{\}}}\\
\mathrm{\textsf{//safety condition}}\\ 
\mathrm{\textsf{dc req1(HH2O, HCH4, PUMPON)\{}}\\
\quad\mathrm{\textsf{KBOUNDED(([[( (HCH4 $||$ !HH2O) =\textgreater !PUMPON)]]),n);}}\\
\mathrm{\textsf{\}}}\\
\mathrm{\textsf{dc req2(HH2O, HCH4, PUMPON)\{}}\\
\quad\mathrm{\textsf{KBOUNDED((!([] ([[HH2O]] \&\& (slen = w)))),n)}}\\
\mathrm{\textsf{\}}}\\
\mathrm{\textsf{dc MineCommit\_8(HH2O, HCH4, PUMPON)\{}}\\
\quad\mathrm{\textsf{req1(HH2O, HCH4, PUMPON) \&\& req2(HH2O, HCH4, PUMPON);}}\\
\mathrm{\textsf{\}}}\\
\mathrm{\textsf{indefinitions\{}}\\
\quad\mathrm{\textsf{A : MineAssume\_8(HH2Op, HCH4p, PUMPONp);}}\\
\quad\mathrm{\textsf{C : MineCommit\_8(HH2Op, HCH4p, PUMPONp);}}\\
\mathrm{\textsf{\}}}\\
\mathrm{\textsf{hardreq\{}}\\
\quad\mathrm{\textsf{MineAssume\_2\_6\_2(HH2Op, HCH4p, PUMPONp) =\textgreater}} \\
\quad\quad\mathrm{\textsf{MineCommit\_8(HH2Op, HCH4p, PUMPONp);}}\\
\mathrm{\textsf{hardreq\{}}\\
\quad\mathrm{\textsf{useind A, C;}}\\
\quad\mathrm{\textsf{BeCurrentlyCorrect(A) =\textgreater EP(C);}}\\
\mathrm{\textsf{\}}}\\
\mathrm{\textsf{softreq\{}}\\
\quad\mathrm{\textsf{useind C;}}\\
\quad\mathrm{\textsf{(C);}}\\
\mathrm{\textsf{\}}}\\
\end{array}$}
}
\end{small}
\end{figure}

\oomit{
\clearpage
\textbf{Simulation of Synthesized Mine pump Controllers}
The controllers are encoded as Lustre specification and Lustre V4 tools are used 
for simulation. The example simulation for these three variants of mine pump
with soft requirements MPV1, MPV2 and MPV3 are shown in figures 
~\ref{fig:pumptrue}, ~\ref{fig:pumpsoft} and ~\ref{fig:pumpfalse} respectively.

\begin{figure*}[!h]
\centering
\includegraphics[width=\textwidth, keepaspectratio]{PumpTrue.png}
\caption{Simulation of mine pump controller with soft requirement MPV1}
\label{fig:pumptrue}
\centering
\includegraphics[width=\textwidth, keepaspectratio]{PumpSoft.png}
\caption{Simulation of mine pump controller with soft requirement MPV2}
\label{fig:pumpsoft}
%
\centering
\includegraphics[width=\textwidth, keepaspectratio]{PumpFalse.png}
\caption{Simulation of mine pump controller with soft requirement MPV3}
\label{fig:pumpfalse}
\end{figure*}
}

}

\end{document}